\newcommand{\argmin}{\operatornamewithlimits{argmin}}
\newcommand{\TS}[1]{\ensuremath{{\cal T}_{\epsilon_{#1}}^{(n)}}}
\newtheorem{theorem}{Theorem}
\newtheorem{lemma}[theorem]{Lemma}
\newtheorem{conjecture}{Conjecture}
\newtheorem{example}{Example}
\newtheorem{definition}{Definition}
\begin{document}

\title{Coordination Capacity}

\author{
Paul Cuff, {\em Member, IEEE}, Haim Permuter, {\em Member, IEEE}, and Thomas M. Cover, {\em Fellow, IEEE}
\thanks{This work was partially supported by the National Science Foundation (NSF) through the grant CCF-0635318.}
\thanks{Paul Cuff is with the Department of Electrical Engineering at Princeton University and can be reached at cuff@princeton.edu.}
\thanks{Haim Permuter is with the Department of Electrical Engineering at Ben Gurion University and can be reached at haimp@bgu.ac.il.}
\thanks{Thomas M. Cover is with the Department of Electrical Engineering at Stanford University and can be reached at cover@stanford.edu.}
}

\maketitle

\begin{abstract}
We develop elements of a theory of cooperation and coordination in networks.  Rather than considering a communication network as a means of distributing information, or of reconstructing random processes at remote nodes, we ask what dependence can be established among the nodes given the communication constraints.  Specifically, in a network with communication rates $\{R_{i,j}\}$ between the nodes, we ask what is the set of all achievable joint distributions $p(x_1,...,x_m)$ of actions at the nodes of the network.  Several networks are solved, including arbitrarily large cascade networks.

Distributed cooperation can be the solution to many problems such as distributed games, distributed control, and establishing mutual information bounds on the influence of one part of a physical system on another.
\end{abstract}

\begin{keywords}
Common randomness, cooperation capacity, coordination capacity, network dependence, rate distortion, source coding, strong Markov lemma, task assignment, Wyner common information.
\end{keywords}

\section{Introduction}
\label{section introduction}

\PARstart{C}{ommunication} is required to establish cooperative behavior.  In a network of nodes where relevant information is known at only some nodes in the network, finding the minimum communication requirements to coordinate actions can be posed as a network source coding problem.  This diverges from traditional source coding.  Rather than focus on sending data from one point to another with a fidelity constraint, we consider the communication needed to establish coordination summarized by a joint probability distribution of behavior among all nodes in the network.

A large variety of research addresses the challenge of collecting or moving information in networks.  Network coding \cite{network_coding} seeks to efficiently move independent flows of information over shared communication links.  On the other hand, distributed average consensus \cite{tsitsiklis86} involves collecting related information.  Sensors in a network collectively compute the average of their measurements in a distributed fashion.  The network topology and dynamics determine how many rounds of communication among neighbors are needed to converge to the average and how good the estimate will be at each node \cite{xiao-boyd-kim}.  Similarly, in the gossiping Dons problem \cite{bollobas}, each node starts with a unique piece of gossip, and one wishes to know how many exchanges of gossip are required to make everything known to everyone.  Computing functions in a network is considered in \cite{yao79}, \cite{orlitsky90}, and \cite{ayaso-shah-dahleh08}.

Our work, introduced in \cite{Cover07Permuter}, has several distinctions from the network communication examples mentioned.  First, we keep the purpose for communication very general, which means sometimes we get away with saying very little about the information in the network while still achieving the desired coordination.  We are concerned with the joint distribution of actions taken at the various nodes in the network, and the ``information'' that enters the network is nothing more than actions  that are selected randomly by nature and assigned to certain nodes. Secondly, we consider quantization and rates of communication in the network, as opposed to only counting the number of exchanges.  We find that we can gain efficiency by using vector quantization specifically tailored to the network topology.

Figure \ref{figure general network} shows an example of a network with rate-limited communication links.  In general, each node in the network performs an action where some of these actions are selected randomly by nature.  In this example, the source set ${\cal S}$ indicates which actions are chosen by nature:  Actions $X_1$, $X_2$, and $X_3$ are assigned randomly according to the joint distribution $p_0(x_1,x_2,x_3)$.  Then, using the communication and common randomness that is available to all nodes, the actions $Y_1$, $Y_2$, and $Y_3$ outside of ${\cal S}$ are produced.  We ask, which conditional distributions $p(y_1,y_2,y_3|x_1,x_2,x_3)$ are compatible with the network constraints.

\begin{figure}[h]
\psfrag{x1}[][][1]{$X_1$}
\psfrag{x2}[][][1]{$X_2$}
\psfrag{x3}[][][1]{$X_3$}
\psfrag{y1}[][][1]{$Y_1$}
\psfrag{y2}[][][1]{$Y_2$}
\psfrag{y3}[][][1]{$Y_3$}
\psfrag{r1}[][][.8]{$R_1$}
\psfrag{r2}[][][.8]{$R_2$}
\psfrag{r3}[][][.8]{$R_3$}
\psfrag{r4}[][][.8]{$R_4$}
\psfrag{r5}[][][.8]{$R_5$}
\psfrag{r6}[][][.8]{$R_6$}
\psfrag{r7}[][][.8]{$R_7$}
\psfrag{p}[][][.8]{$p_0(x_1,x_2,x_3)$}
\psfrag{s}[][][1]{${\cal S}$}
\psfrag{region}[][][1]{${\cal P}_{p_0}({\cal R})=\{p(y_1,y_2,y_3|x_1,x_2,x_3)\}$}
\centerline{\includegraphics[width=8cm]{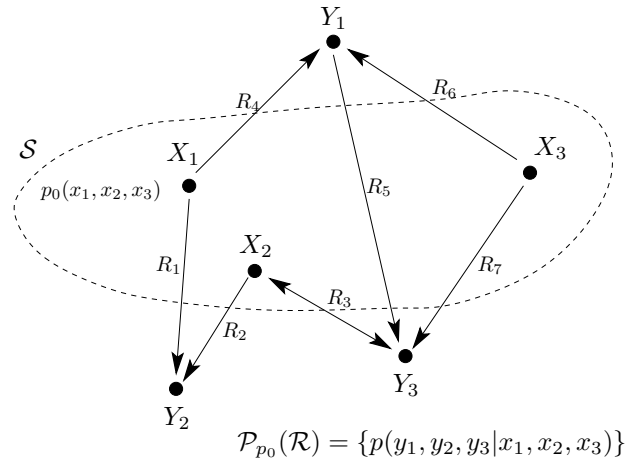}}
\caption{{\em Coordination capacity.}  This network represents the general framework we consider.  The nodes in this network have rate-limited links of communication between them.  Each node performs an action.  The actions $X_1$, $X_2$, and $X_3$ in the source set ${\cal S}$ are chosen randomly by nature according to $p_0(x_1,x_2,x_3)$, while the actions $Y_1$, $Y_2$, and $Y_3$ are produced based on the communication and common randomness in the network.  What joint distributions $p_0(x_1,x_2,x_3)p(y_1,y_2,y_3|x_1,x_2,x_3)$ can be achieved?}
\label{figure general network}
\end{figure}

A variety of applications are encompassed in this framework.  This could be used to model sensors in a sensor network, sharing information in the standard sense, while also cooperating in their transmission of data.  Similarly, a wireless ad hoc network can improve performance by cooperating among nodes to allow beam-forming and interference alignment.  On the other hand, some settings do not involve moving information in the usual sense.  The nodes in the network might comprise a distributed control system, where the behavior at each node must be related to the behavior at other nodes and the information coming into the system.  Also, with computing technology continuing to move in the direction of parallel processing, even across large networks, a network of computers must coherently perform computations while distributing the work load across the participating machines.  Alternatively, the nodes might each be agents taking actions in a multiplayer game.

Network communication can be revisited from the viewpoint of coordinated actions.  Rate distortion theory becomes a special case.  More generally, we ask how we can build dependence among the nodes. What is it good for? How do we use it?

In this paper we deal with two fundamentally different notions of coordination which we distinguish as {\em empirical coordination} and {\em strong coordination}, both associated with a desired joint distribution of actions.  Empirical coordination is achieved if the joint type of the actions in the network---the empirical joint distribution---is close to the desired distribution.  Techniques from rate-distortion theory are relevant here.  Strong coordination instead deals with the joint probability distribution of the actions.  If the actions in the network are generated randomly so that a statistician cannot reliably distinguish (as measured by total variation) between the constructed $n$-length sequence of actions and random samples from the desired distribution, then strong coordination is achieved.  The approach and proofs in this framework are related to the common information work by Wyner \cite{wyner}.

Before developing the mathematical formulation, consider the first surprising observation.

{\it No communication:} Suppose we have three nodes choosing actions and no communication is allowed between the nodes (Fig. \ref{f_all_dist}).  We assume that common randomness is available to all the nodes.  What is the set of joint distributions $p(x,y,z)$ that can be achieved at these isolated nodes?  The answer turns out to be any joint distribution whatsoever.  The nodes can agree ahead of time on how they will behave in the presence of common randomness (for example, a time stamp used as a seed for a random number generator). Any triple of random variables can be created as functions of common randomness.

This would seem to be the end of the problem, but the problem changes dramatically when one of the nodes is specified by nature to take on a certain value, as will be the case in each of the scenarios following.

\begin{figure}[h]
\psfrag{X}[][][1]{$\;\;\;\;\;\;\;\;\;\;\;\;\;\;\; X=X(\omega)$}
\psfrag{Y}[][][1]{$\;\;\;\;\;\;\;\;\;\;\;\;\;\;\; Y=Y(\omega)$}
\psfrag{Z}[][][1]{$\;\;\;\;\;\;\;\;\;\;\;\;\;\;\; Z=Z(\omega)$}
\psfrag{Dist}[][][1]{$\;\;\;\;\;\;\;\;\;\;\;\;\;\;\; {\cal P} = \{p(x,y,z)\}$}
\centerline{\includegraphics[width=6cm]{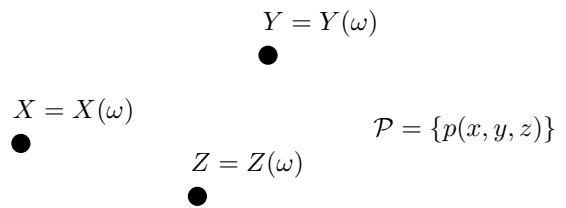}}
\caption{{\em No communication.}  Any distribution $p(x,y,z)$ can be achieved without communication between nodes.  Define three random variables $X(\cdot)$, $Y(\cdot)$, and $Z(\cdot)$ with the appropriate joint distribution, on the standard probability space $(\Omega, {\cal B}, {\cal P})$, and let the actions at the nodes be $X(\omega)$, $Y(\omega)$, and $Z(\omega)$, where $\omega \in \Omega$ is the common randomness.}
\label{f_all_dist}
\end{figure}

An eclectic collection of work, ranging from game theory to quantum information theory, has a number of close relationships to our approach and results.  For example, Anantharam and Borkar \cite{AnantharamBorkar2007} let two agents generate actions for a multiplayer game based on correlated observations and common randomness and ask what kind of correlated actions are achievable.  From a quantum mechanics perspective, Barnum et. al. \cite{Barnum_quantum_mixed_states} consider quantum coding of mixed quantum states.  Kramer and Savari \cite{kramer_savari07} look at communication for the purpose of ``communicating probability distributions'' in the sense that they care about reconstructing a sequence with the proper empirical distribution of the sources rather than the sources themselves.  Weissman and Ordentlich \cite{weissman-ordentlich} make statements about the empirical distributions of sub-blocks of source and reconstruction symbols in a rate-constrained setting.  And Han and Verd\'{u} \cite{han} consider generating a random process via use of a memoryless channel, while Bennett et. al. \cite{bennett2002} propose a ``reverse Shannon theorem'' stating the amount of noise free communication necessary to synthesize a memoryless channel.

In this work, we consider coordination of actions in two and three node networks.  These serve as building blocks for understanding larger networks.  Some of the actions at the nodes are given by nature, and some are constructed by the node itself.  We describe the problem precisely in Section \ref{section weak coordination}.  For some network settings we characterize the entire solution, but for others we give partial results including bounds and solutions to special cases.  The complete results are presented in Section \ref{section complete results} and include a variant of the multiterminal source coding problem.  Among the partial results of Section \ref{section partial results}, a consistent trend in coordination strategies is identified, and the golden ratio makes a surprise appearance.

In Section \ref{section strong coordination} we consider strong coordination.  We characterize the communication requirements in a couple of settings and discuss the role of common randomness.  If common randomness is available to all nodes in the network then empirical coordination and strong coordination seem to require equivalent communication resources, consistent with the implications of the ``reverse Shannon theorem'' \cite{bennett2002}.  Furthermore, we can quantify the amount of common randomness needed, treating common randomness itself as a scarce resource.

Rate-distortion regions are shown to be projections of the coordination capacity region in Section \ref{section rate distortion}.  The proofs for all theorems are presented together in Section \ref{section proofs}, where we introduce a stronger Markov Lemma (Theorem \ref{theorem strong markov}) that may be broadly useful in network information theory.  In our closing remarks we show cases where this work can be extrapolated to large networks to identify the efficiency of different network topologies.

\section{Empirical Coordination}
\label{section weak coordination}

In this section and the next we address questions of the following nature:  If three different tasks are to be  performed in a shared effort between three people, but one person is randomly assigned his responsibility, how much must he tell the others about his assignment in order to divide the labor?

\subsection{Problem specifics}
\label{subsection definitions}

The definitions in this section pinpoint the concept of {\em empirical coordination}.  We will consider coordination in a variety of two and three node networks.  The basic meaning of empirical coordination is the same for each network---we use the network communication to construct a sequence of actions that have an empirical joint distribution closely matching a desired distribution.  What's different from one problem to the next is the set of nodes whose actions are selected randomly by nature and the communication limitations imposed by the network topology.

Here we define the problem in the context of the cascade network of Section \ref{subsection cascade} shown in Figure \ref{figure block diagram}.  These definitions have obvious generalizations to other networks.

\begin{figure}[h]
\psfrag{l1}[][][0.7]{Node X}
\psfrag{l2}[][][0.6]{}
\psfrag{l3}[][][0.6]{$i(X^n,\omega)$}
\psfrag{l4}[][][0.7]{Node Y}
\psfrag{l5}[][][0.6]{$y^n(I,\omega)$}
\psfrag{l6}[][][0.6]{$j(I,\omega)$}
\psfrag{l7}[][][0.7]{Node Z}
\psfrag{l8}[][][0.6]{$z^n(J,\omega)$}
\psfrag{l9}[][][0.6]{}
\psfrag{l10}[][][0.6]{$X^n$}
\psfrag{l11}[][][0.6]{$I \in [2^{nR_1}]$}
\psfrag{l12}[][][0.6]{$J \in [2^{nR_2}]$}
\psfrag{l13}[][][0.6]{$Y^n$}
\psfrag{l14}[][][0.6]{$Z^n$}
\psfrag{l15}[][][0.6]{$\sim \prod_{i=1}^n p_0(x_i)$}
\centerline{\includegraphics[width=.45\textwidth]{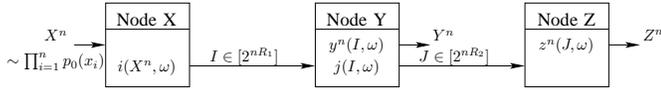}}
\caption{{\em Cascade network.}  Node X is assigned actions $X^n$ chosen by nature according to $p(x^n)=\prod_{i=1}^n p_0(x_i)$.  A message $I$ in the set $\{1,...,2^{nR_1}\}$ is constructed based on $X^n$ and the common randomness $\omega$ and sent to Node Y, which constructs both an action sequence $Y^n$ and a message $J$ in the set $\{1,...,2^{nR_2}\}$.  Finally, Node Z produces actions $Z^n$ based on the message $J$ and the common randomness $\omega$.  This is summarized in Figure \ref{figure shorthand}.}
\label{figure block diagram}
\end{figure}

\begin{figure}[h]
\psfrag{l1}[][][.8]{$X \sim p_0(x)$}
\psfrag{l2}[][][.8]{$Y$}
\psfrag{l3}[][][.8]{$Z$}
\psfrag{l4}[][][.8]{$R_1$}
\psfrag{l5}[][][.8]{$R_2$}
\psfrag{l6}[][][.7]{Node X}
\psfrag{l7}[][][.7]{Node Y}
\psfrag{l8}[][][.7]{Node Z}
\centerline{\includegraphics[width=.4\textwidth]{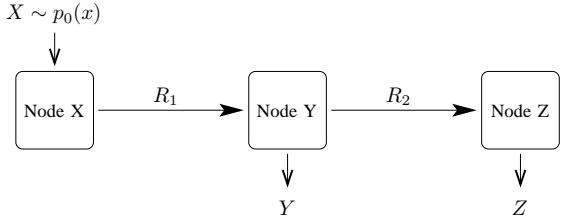}}
\caption{Shorthand notation for the cascade network of Figure \ref{figure block diagram}.}
\label{figure shorthand}
\end{figure}

In the cascade network of Figure \ref{figure block diagram}, node $X$ has a sequence of actions $X_1, X_2, ...$ specified randomly by nature.  Note that a node is allowed to see all of its actions before it summarizes them for the next node.  Communication is used to give Node $Y$ and Node $Z$ enough information to choose sequences of actions that are empirically correlated with $X_1, X_2,...$ according to a desired joint distribution $p_0(x)p(y,z|x)$.  The communication travels in a cascade, first from Node $X$ to Node $Y$ at rate $R_1$ bits per action, and then from Node $Y$ to Node $Z$ at rate $R_2$ bits per action.

Specifically, a $(2^{nR_1},2^{nR_2},n)$ {\em coordination code} is used as a protocol to coordinate the actions in the network for a block of $n$ time periods.  The coordination code and the distribution of the random actions $X^n$ induce a joint distribution on the actions in the network.  If the joint type of the actions in the network can be made arbitrarily close to a desired distribution $p_0(x)p(y,z|x)$ with high probability, as dictated by the distribution induced by a $(2^{nR_1},2^{nR_2},n))$ coordination code, then $p_0(x)p(y,z|x)$ is achievable with the rate pair $(R_1,R_2)$.

\begin{definition}[Coordination code]
\label{definition coordination code}
A $(2^{nR_1},2^{nR_2},n)$ coordination code for the cascade network of Figure \ref{figure block diagram} consists of four functions---an encoding function
\begin{eqnarray*}
i & : & {\cal X}^n \times \Omega \longrightarrow \{1,...,2^{nR_1}\},
\end{eqnarray*}
a recoding function
\begin{eqnarray*}
j & : & \{1,...,2^{nR_1}\} \times \Omega \longrightarrow \{1,...,2^{nR_2}\},
\end{eqnarray*}
and two decoding functions
\begin{eqnarray*}
y^n & : & \{1,...,2^{nR_1}\} \times \Omega \longrightarrow {\cal Y}^n, \\
z^n & : & \{1,...,2^{nR_2}\} \times \Omega \longrightarrow {\cal Z}^n.
\end{eqnarray*}
\end{definition}

\begin{definition}[Induced distribution]
\label{definition induced distribution}
The induced distribution $\tilde{p}(x^n,y^n,z^n)$ is the resulting joint distribution of the actions in the network $X^n$, $Y^n$, and $Z^n$ when a $(2^{nR_1},2^{R_2},n)$ coordination code is used. \end{definition}

Specifically, the actions $X^n$ are chosen by nature i.i.d. according to $p_0(x)$ and independent of the common randomness $\omega$.  Thus, $X^n$ and $\omega$ are jointly distributed according to a product distribution,
\begin{eqnarray*}
(X^n,\omega) & \sim & p(\omega) \prod_{i=1}^n p_0(x_i).
\end{eqnarray*}
The actions $Y^n$ and $Z^n$ are functions of $X^n$ and $\omega$ given by implementing the coordination code as
\begin{eqnarray*}
Y^n & = & y^n(i(X^n,\omega),\omega), \\
Z^n & = & z^n(j(i(X^n,\omega),\omega),\omega).
\end{eqnarray*}

\begin{definition}[Joint type]
\label{definition joint type}
The joint type $P_{x^n,y^n,z^n}$ of a tuple of sequences $(x^n,y^n,z^n)$ is the empirical probability mass function, given by
\begin{eqnarray*}
P_{x^n,y^n,z^n}(x,y,z) & \triangleq & \frac{1}{n} \sum_{i=1}^n {\bf 1} ((x_i,y_i,z_i)=(x,y,z)),
\end{eqnarray*}
for all $(x,y,z) \in {\cal X} \times {\cal Y} \times {\cal Z}$, where ${\bf 1}$ is the indicator function.
\end{definition}

\begin{definition}[Total variation]
\label{definition total variation}
The total variation between two probability mass functions is half the $L_1$ distance between them, given by
\begin{eqnarray*}
\|p(x,y,z) - q(x,y,z)\|_{TV} & \triangleq & \frac{1}{2} \sum_{x,y,z} |p(x,y,z) - q(x,y,z)|.
\end{eqnarray*}
\end{definition}

\begin{definition}[Achievability]
\label{definition achievability}
A desired distribution $p_0(x)p(y,z|x)$ is achievable for empirical coordination with the rate pair $(R_1,R_2)$ if there exists a sequence of $(2^{nR_1},2^{nR_2},n)$ coordination codes and a choice of $p(\omega)$ such that the total variation between the joint type of the actions in the network and the desired distribution goes to zero in probability (under the induced distribution).  That is,
\begin{eqnarray*}
\left\| P_{X^n,Y^n,Z^n}(x,y,z)-p_0(x)p(y,z|x) \right\|_{TV} \longrightarrow 0 \mbox{ in probability}.
\end{eqnarray*}
\end{definition}

We now define the region of all rate-distribution pairs in Definition \ref{definition coordination capacity region} and slice it into rates for a given distribution in Definition \ref{definition coordination rate} and distributions for a given set of rates in Definition \ref{definition rate coordination}.

\begin{definition}[Coordination capacity region]
\label{definition coordination capacity region}
The coordination capacity region ${\cal C}_{p_0}$ for the source distribution $p_0(x)$ is the closure of the set of rate-coordination tuples $(R_1,R_2,p(y,z|x))$ that are achievable:
\begin{eqnarray*}
\label{equation inner bound}
{\cal C}_{p_0} \triangleq {\bf Cl} \left\{
\begin{array}{l}
(R_1,R_2,p(y,z|x)) \; : \\
p_0(x)p(y,z|x) \mbox{ is achievable at rates } (R_1,R_2)
\end{array}
\right\}.
\end{eqnarray*}
\end{definition}

\begin{definition}[Rate-coordination region]
\label{definition coordination rate}
The rate-coordination region ${\cal R}_{p_0}$ is a slice of the coordination capacity region corresponding to a fixed distribution $p(y,z|x)$:
\begin{eqnarray*}
{\cal R}_{p_0}(p(y,z|x)) & \triangleq & \{ (R_1,R_2) \; : \; (R_1,R_2,p(y,z|x)) \in {\cal C}_{p_0} \}.
\end{eqnarray*}
\end{definition}

\begin{definition}[Coordination-rate region]
\label{definition rate coordination}
The coordination-rate region ${\cal P}_{p_0}$ is a slice of the coordination capacity region corresponding to a tuple of rates $(R_1,R_2)$:
\begin{eqnarray*}
{\cal P}_{p_0}(R_1,R_2) & \triangleq & \{ p(y,z|x) \; : \; (R_1,R_2,p(y,z|x)) \in {\cal C}_{p_0} \}.
\end{eqnarray*}
\end{definition}

\subsection{Preliminary observations}
\label{subsection lemmas}

\begin{lemma}[Convexity of coordination]
\label{lemma convexity}
${\cal C}_{p_0}$, ${\cal R}_{p_0}$, and ${\cal P}_{p_0}$ are all convex sets.
\end{lemma}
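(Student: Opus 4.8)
The plan is to prove convexity of ${\cal C}_{p_0}$ directly by a time-sharing (code-concatenation) argument, and then obtain convexity of ${\cal R}_{p_0}$ and ${\cal P}_{p_0}$ essentially for free: ${\cal R}_{p_0}(p(y,z|x))$ is the intersection of ${\cal C}_{p_0}$ with the affine subspace on which the distribution coordinate is frozen at $p(y,z|x)$, and ${\cal P}_{p_0}(R_1,R_2)$ is the image under coordinate projection of the intersection of ${\cal C}_{p_0}$ with the affine subspace $\{R_1\}\times\{R_2\}\times(\text{all }p)$. Intersection with an affine set and linear projection both preserve convexity, so once ${\cal C}_{p_0}$ is shown convex, the other two statements follow immediately.

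For ${\cal C}_{p_0}$, I would fix two achievable tuples $(R_1,R_2,p)$ and $(R_1',R_2',p')$ and a rational $\lambda=a/b\in(0,1)$, write $\bar\lambda=1-\lambda$, and show that $(\lambda R_1+\bar\lambda R_1',\,\lambda R_2+\bar\lambda R_2',\,\lambda p+\bar\lambda p')$ is achievable. Take blocklength $n$ a multiple of $b$ and split the $n$ action slots into a first sub-block of length $\lambda n$ and a second of length $\bar\lambda n$. Because $X^n$ is i.i.d.\ $p_0$, the two sub-blocks are independent i.i.d.\ strings; run a sequence of near-optimal $(2^{\lambda nR_1},2^{\lambda nR_2},\lambda n)$ codes for $p$ on the first sub-block and near-optimal $(2^{\bar\lambda nR_1'},2^{\bar\lambda nR_2'},\bar\lambda n)$ codes for $p'$ on the second, each sub-block using its own independent copy of the common randomness (so $\omega=(\omega_1,\omega_2)$ with $p(\omega)=p(\omega_1)p(\omega_2)$). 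Concatenating the two messages on each link yields a valid $(2^{nR_1''},2^{nR_2''},n)$ coordination code with $R_1''=\lambda R_1+\bar\lambda R_1'$ and $R_2''=\lambda R_2+\bar\lambda R_2'$, since the product of the two message sets has exactly the required cardinality.

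The joint type of the full block decomposes as $P_{X^n,Y^n,Z^n}=\lambda P^{(1)}+\bar\lambda P^{(2)}$, where $P^{(1)},P^{(2)}$ are the joint types of the two sub-blocks. By the achievability hypothesis (Definition \ref{definition achievability}), $\|P^{(1)}-p\|_{TV}\to 0$ and $\|P^{(2)}-p'\|_{TV}\to 0$ in probability; a union bound together with the triangle inequality and homogeneity of $\|\cdot\|_{TV}$ then give $\|P_{X^n,Y^n,Z^n}-(\lambda p+\bar\lambda p')\|_{TV}\to 0$ in probability. Hence the convex combination lies in ${\cal C}_{p_0}$ for every rational $\lambda\in(0,1)$, and also trivially for $\lambda\in\{0,1\}$. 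Since ${\cal C}_{p_0}$ is closed by Definition \ref{definition coordination capacity region} and the rational convex combinations are dense in the segment joining the two tuples, the entire segment lies in ${\cal C}_{p_0}$, which proves convexity.

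The main points requiring care are routine but should be checked: that the concatenated code exactly meets the stated rate budget (handled by the product-of-message-sets count above); that splitting the common randomness across sub-blocks is permissible (it is, as the definition allows any $p(\omega)$); that non-integer $\lambda n$ is dealt with either by restricting $n$ to multiples of the denominator of $\lambda$ or by absorbing an $O(1/n)$ rounding term that affects neither the rates nor the total-variation limit; and that the extension from rational to arbitrary $\lambda\in[0,1]$ uses only the closedness already built into ${\cal C}_{p_0}$. None of these is a genuine obstacle, so the only real content is the observation that empirical coordination is preserved under concatenation because joint types add linearly.
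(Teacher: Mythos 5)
Your proof is correct and follows the same strategy as the paper: concatenating codes (time-sharing) to show that joint types and rates combine as weighted averages for the convexity of ${\cal C}_{p_0}$, and then obtaining convexity of ${\cal R}_{p_0}$ and ${\cal P}_{p_0}$ by slicing ${\cal C}_{p_0}$ with affine sets and projecting. Your version is more careful about the routine details (rational $\lambda$ and closure, exact message-set cardinality, independent common randomness per sub-block), but the core idea is the same.
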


\begin{proof}
The coordination capacity region ${\cal C}_{p_0}$ is convex because time-sharing can be used to achieve any point on the chord between two achievable rate-coordination pairs.  Simply combine two sequences of coordination codes that achieve the two points in the coordination capacity region by using one code and then the other in a proportionate manner to achieve any point on the chord.  The definition of joint type in Definition \ref{definition joint type} involves an average over time. Thus if one sequence is concatenated with another sequence, the resulting joint type is a weighted average of the joint types of the two composing sequences.  Rates of communication also combine according to the same weighted average.  The rate of the resulting concatenated code is the weighted average of the two rates.

The rate-coordination region ${\cal R}_{p_0}$ is the intersection of the coordination capacity region ${\cal C}_{p_0}$ with a hyperplane, which are both convex sets.  Likewise for the coordination-rate region ${\cal P}_{p_0}$.  Therefore, ${\cal R}_{p_0}$ and ${\cal P}_{p_0}$ are both convex.
\end{proof}

Common randomness used in conjunction with randomized encoders and decoders can be a crucial ingredient for some communication settings, such as secure communication.  We see, for example, in Section \ref{section strong coordination} that common randomness is a valuable resource for achieving strong coordination.  However, it does not play a necessary role in achieving empirical coordination, as the following theorem shows.

\begin{theorem}[Common randomness doesn't help]
\label{theorem no common randomness}
Any desired distribution $p_0(x)p(y,z|x)$ that is achievable for empirical coordination with the rate pair $(R_1,R_2)$ can be achieved with $\Omega = \emptyset$.
\end{theorem}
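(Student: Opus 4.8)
\emph{Proof proposal.} The plan is a standard \emph{derandomization}: a coordination code that uses common randomness $\omega$ is, for each fixed value of $\omega$, an ordinary coordination code with $\Omega=\emptyset$, and the performance with common randomness is an average over $\omega$ of the performances of these fixed-$\omega$ codes. Hence some value of $\omega$ does at least as well as the average, and we simply hardwire that value into the code. I will phrase the argument for the cascade network of Figure \ref{figure block diagram}, but nothing uses the network structure, so it applies verbatim to any network.

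First I would unpack Definition \ref{definition achievability}. Write $\Pr$ for probability under the joint law of $X^n$ and $\omega$ (with $Y^n,Z^n$ the induced functions of $(X^n,\omega)$). Achievability says that for every $\epsilon>0$, $\Pr\big(\|P_{X^n,Y^n,Z^n}-p_0(x)p(y,z|x)\|_{TV}>\epsilon\big)\to 0$ as $n\to\infty$. By a routine diagonalization over $\epsilon\downarrow 0$, I can extract a single sequence $\epsilon_n\downarrow 0$ with $\delta_n\triangleq\Pr\big(\|P_{X^n,Y^n,Z^n}-p_0(x)p(y,z|x)\|_{TV}>\epsilon_n\big)\to 0$. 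Next, for the $n$-th code I condition on $\omega$; since $X^n$ is independent of $\omega$,
\[
\delta_n=\mathbb{E}_\omega\!\left[\,\Pr\big(\|P_{X^n,Y^n,Z^n}-p_0(x)p(y,z|x)\|_{TV}>\epsilon_n \,\mid\, \omega\big)\right].
\]
Let $g_n(\omega)$ denote the conditional probability inside the expectation. Since $g_n\ge 0$ has mean $\delta_n$, Markov's inequality gives $\Pr\big(g_n(\omega)\le 2\delta_n\big)\ge 1/2$, so in particular there is a value $\omega_n^\star$ with $g_n(\omega_n^\star)\le 2\delta_n$. Hardwiring $\omega=\omega_n^\star$ produces functions $i'(\cdot)=i(\cdot,\omega_n^\star)$, $j'(\cdot)=j(\cdot,\omega_n^\star)$, $y'^n(\cdot)=y^n(\cdot,\omega_n^\star)$, $z'^n(\cdot)=z^n(\cdot,\omega_n^\star)$, which form a legitimate $(2^{nR_1},2^{nR_2},n)$ coordination code with $\Omega=\emptyset$ at the same rates. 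Its induced distribution is exactly the conditional law of the original code given $\omega=\omega_n^\star$ (independence of $X^n$ and $\omega$ keeps $X^n$ i.i.d.\ $p_0(x)$), so under it the total variation exceeds $\epsilon_n$ with probability at most $2\delta_n\to 0$; hence it tends to zero in probability, and $p_0(x)p(y,z|x)$ is achievable with $(R_1,R_2)$ and no common randomness.

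I do not expect a serious obstacle: the only point to be careful about is that the good realization $\omega_n^\star$ generally depends on $n$, so one cannot fix a single $\omega$ for the whole sequence of codes; but Definition \ref{definition achievability} only asks for a \emph{sequence} of codes, so this is harmless. (When $\Omega$ is finite one can skip Markov's inequality entirely and take $\omega_n^\star=\argmin_\omega g_n(\omega)$, which gives $g_n(\omega_n^\star)\le\delta_n$.) The substance of the theorem is really that empirical coordination is a high-probability requirement rather than an exact one, so averaging over the shared randomness can only be improved by replacing it with its best outcome.
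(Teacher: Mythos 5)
Your proof is correct and uses the same derandomization idea as the paper: for each $n$, fix a value $\omega_n^\star$ that performs at least as well as the $\omega$-average, and hardwire it into the code (both you and the paper note that $\omega_n^\star$ depends on $n$, which is fine since Definition~\ref{definition achievability} only asks for a sequence of codes). The paper reaches $\omega^\star$ slightly more directly by first upgrading convergence in probability to convergence of the \emph{expected} total variation (bounded convergence, since $\mathrm{TV}\le 1$) and then applying iterated expectation to pick $\omega^\star$ with below-average conditional expected $\mathrm{TV}$; this sidesteps your diagonalization over $\epsilon$ and the Markov-inequality step, but the two routes are essentially equivalent.
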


\begin{proof}
Suppose that $p_0(x)p(y,z|x)$ is achievable for empirical coordination with the rate pair $(R_1,R_2)$.  Then there exists a sequence of $(2^{nR_1},2^{nR_2},n)$ coordination codes for which the expected total variation between the joint type and $p(x,y,z)$ goes to zero with respect to the induced distribution.  This follows from the bounded convergence theorem since total variation is bounded by one.  By iterated expectation,
\begin{eqnarray*}
{\bf E} \left[ {\bf E} \left[ \left\| P_{X^n,Y^n,Z^n} - p_0(x)p(y,z|x) \right\|_{TV} | \omega \right] \right] \;\; =& & \\
{\bf E} \left\| P_{X^n,Y^n,Z^n} - p_0(x)p(y,z|x) \right\|_{TV}. & &
\end{eqnarray*}
Therefore, there exists a value $\omega^*$ such that
\begin{eqnarray*}
{\bf E} \left[ \left\| P_{X^n,Y^n,Z^n} - p_0(x)p(y,z|x) \right\|_{TV} | \omega^* \right] \;\; \leq & & \\
{\bf E} \left\| P_{X^n,Y^n,Z^n} - p_0(x)p(y,z|x) \right\|_{TV}. & &
\end{eqnarray*}

Define a new coordination code that doesn't depend on $\omega$ and at the same time doesn't increase the expected total variation:
\begin{eqnarray*}
i^*(x^n) & = & i(x^n, \omega^*), \\
j^*(i) & = & j(i,\omega^*), \\
y^{n*}(i) & = & Y^n(i, \omega^*), \\
z^{n*}(j) & = & Z^n(j,\omega^*).
\end{eqnarray*}

This can be done for each $(2^{nR_1},2^{nR_2},n)$ coordination code for $n=1,2,...$.
\end{proof}

\subsection{Generalization}
\label{subsection generalize}

We investigate empirical coordination in a variety of networks in Sections \ref{section complete results} and \ref{section partial results}.  In each case, we explicitly specify the structure and implementation of the coordination codes, similar to Definitions \ref{definition coordination code} and \ref{definition induced distribution}, while all other definitions carry over in a straightforward manner.

We use a {\em shorthand} notation in order to illustrate each network setting with a simple and consistent figure.  Figure \ref{figure shorthand} shows the shorthand notation for the cascade network of Figure \ref{figure block diagram}.  The random actions that are specified by nature are shown with arrows pointing down toward the node (represented by a block).  Actions constructed by the nodes themselves are shown coming out of the node with an arrow downward.  And arrows indicating communication from one node to another are labeled with the rate limits for the communication along those links.

\section{Coordination---Complete results}
\label{section complete results}

In this section we present the coordination capacity regions ${\cal C}_{p_0}$ for empirical coordination in four network settings:  a network of two nodes; a cascade network; an isolated node network; and a degraded source network.  Proofs are left to Section \ref{section proofs}.  As a consequence of Theorem \ref{theorem no common randomness} we need not use common randomness.  Common randomness will only be required when we try to generate desired distributions over entire n-blocks in Section \ref{section strong coordination}.

\subsection{Two nodes}
\label{subsection two nodes}

In the simplest network setting shown in Figure \ref{figure two nodes}, we consider two nodes, X and Y.  The action $X$ is specified by nature according to $p_0(x)$, and a message is sent at rate $R$ to node Y.

\begin{figure}[h]
\psfrag{l1}[][][.8]{$X \sim p_0(x)$}
\psfrag{l2}[][][.8]{$Y$}
\psfrag{l4}[][][.8]{$R$}
\psfrag{l6}[][][.7]{Node X}
\psfrag{l7}[][][.7]{Node Y}
\centerline{\includegraphics[width=.25\textwidth]{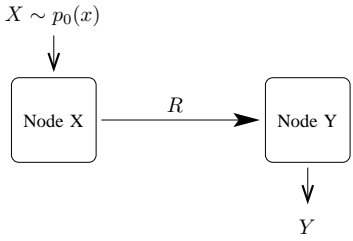}}
\caption{{\em Two nodes.}  The action $X$ is chosen by nature according to $p_0(x)$.  A message is sent to node Y at rate $R$.  The coordination capacity region ${\cal C}_{p_0}$ is the set of rate-coordination pairs where the rate is greater than the mutual information between $X$ and $Y$.}
\label{figure two nodes}
\end{figure}

The $(2^{nR},n)$ coordination codes consist of an encoding function
\begin{eqnarray*}
i & : & {\cal X}^n \longrightarrow \{1,...,2^{nR}\},
\end{eqnarray*}
and a decoding function
\begin{eqnarray*}
y^n & : & \{1,...,2^{nR}\} \longrightarrow {\cal Y}^n.
\end{eqnarray*}

The actions $X^n$ are chosen by nature i.i.d. according to $p_0(x)$, and the actions $Y^n$ are functions of $X^n$ given by implementing the coordination code as
\begin{eqnarray*}
Y^n & = & y^n(i(X^n)).
\end{eqnarray*}

\begin{theorem}[Coordination capacity region]
\label{theorem two nodes}
The coordination capacity region ${\cal C}_{p_0}$ for empirical coordination in the two-node network of Figure \ref{figure two nodes} is the set of rate-coordination pairs where the rate is greater than the mutual information between $X$ and $Y$.  Thus,
\begin{eqnarray*}
{\cal C}_{p_0} & = & \left\{ (R,p(y|x)) \; : \;
\begin{array}{l}
R \geq I(X;Y)
\end{array}
\right\}.
\end{eqnarray*}
\end{theorem}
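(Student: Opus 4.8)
The plan is to establish the two inclusions separately: achievability (every pair with $R>I(X;Y)$, hence by closure $R\ge I(X;Y)$, is achievable) and the converse ($R\ge I(X;Y)$ is necessary), the mutual information throughout being evaluated under the target joint distribution $p_0(x)p(y|x)$. For achievability I would fix $p(y|x)$ and a rate $R>I(X;Y)$, let $p_Y$ be the induced $Y$-marginal, and run the covering-lemma random-coding argument of rate--distortion theory: generate a codebook $y^n(1),\dots,y^n(2^{nR})$ with the sequences drawn i.i.d.\ from $\prod_{i=1}^n p_Y(y_i)$; on observing $x^n$ the encoder sends an index $m$ such that $(x^n,y^n(m))$ is jointly $\epsilon$-typical with respect to $p_0(x)p(y|x)$ (and $m=1$ if none exists), and Node Y outputs $y^n(m)$. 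The covering lemma shows that, since $R>I(X;Y)$, the probability of the encoding-failure event vanishes; and any jointly typical pair $(x^n,y^n)$ has joint type (Definition~\ref{definition joint type}) close to $p_0(x)p(y|x)$ in total variation (Definition~\ref{definition total variation}), with the discrepancy controlled by $\epsilon$. Hence the total variation between the induced joint type and the target distribution goes to zero in probability, so $(R,p(y|x))$ is achievable for every $R>I(X;Y)$; the boundary $R=I(X;Y)$ is included by taking the closure in Definition~\ref{definition coordination capacity region}. No common randomness is needed, consistent with Theorem~\ref{theorem no common randomness}.

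For the converse, suppose $p_0(x)p(y|x)$ is achievable at rate $R$, and fix a code from the witnessing sequence; write $I=i(X^n)$ and $Y^n=y^n(I)$, both deterministic functions. Since $Y^n$ is a function of $I$,
\[
nR \;\ge\; H(I)\;\ge\; I(X^n;I)\;\ge\; I(X^n;Y^n),
\]
the last step by the data-processing inequality along $X^n - I - Y^n$. Because $X^n$ is i.i.d.\ $p_0$,
\[
I(X^n;Y^n) \;=\; \sum_{i=1}^n H(X_i) - H(X^n|Y^n)\;\ge\; \sum_{i=1}^n \big(H(X_i)-H(X_i|Y_i)\big)\;=\;\sum_{i=1}^n I(X_i;Y_i),
\]
and each $I(X_i;Y_i)$ is the mutual information of the fixed input $p_0$ through the channel $p_{Y_i|X_i}$. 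By convexity of mutual information in the channel for a fixed input distribution,
\[
\frac1n\sum_{i=1}^n I(X_i;Y_i)\;\ge\; I_{\bar p^{(n)}}(X;Y),\qquad \bar p^{(n)}(x,y)\triangleq \frac1n\sum_{i=1}^n \Pr\{(X_i,Y_i)=(x,y)\},
\]
where $\bar p^{(n)}(x,y)=\mathbf{E}\big[P_{X^n,Y^n}(x,y)\big]$ is exactly the expected joint type and has $X$-marginal $p_0$.

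It remains to pass to the limit. By assumption $\|P_{X^n,Y^n}-p_0(x)p(y|x)\|_{TV}\to 0$ in probability, and since total variation is bounded this convergence also holds in expectation (bounded convergence theorem, as in the proof of Theorem~\ref{theorem no common randomness}); hence $\bar p^{(n)}\to p_0(x)p(y|x)$ in total variation. Mutual information is continuous on the finite alphabet $\mathcal X\times\mathcal Y$, so $I_{\bar p^{(n)}}(X;Y)\to I(X;Y)$, giving $R\ge I(X;Y)$. Together with achievability this yields ${\cal C}_{p_0}=\{(R,p(y|x)) : R\ge I(X;Y)\}$.

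I expect the converse's single-letterization to be the main obstacle: going from $\frac1n\sum_i I(X_i;Y_i)$ to a single mutual information evaluated at the target distribution combines the convexity of mutual information in the channel---legitimate precisely because $X^n$ is i.i.d., so every coordinate's input marginal equals $p_0$---with the continuity/closure argument identifying the limit of the expected joint types with $p_0(x)p(y|x)$. The achievability direction is routine once the covering lemma is invoked, the only care being the standard fact that joint typicality of $(X^n,Y^n)$ forces a total-variation bound on the joint type.
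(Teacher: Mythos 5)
Your proposal is correct and follows essentially the same approach as the paper: a covering-lemma codebook for achievability (the paper's Lemma on generic coordination with side information, specialized to $Y=Z=\emptyset$, is exactly this construction), and a converse that single-letterizes $nR\ge I(X^n;Y^n)$ down to $\sum_i I(X_i;Y_i)$ and then bounds this below by mutual information evaluated at the expected joint type before taking limits in total variation. Your use of convexity of mutual information in the channel at fixed input $p_0$ is precisely the step the paper performs via the time-mixing variable $Q$ and the identity $I(X_Q;Y_Q|Q)=I(X_Q;Y_Q,Q)\ge I(X_Q;Y_Q)$, which relies on $X_Q\perp Q$ for i.i.d.\ inputs, so the two arguments are the same in substance.
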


{\em Discussion:}
The coordination capacity region in this setting yields the rate-distortion result of Shannon \cite{Shannon60}.  Notice that with no communication ($R=0$), only independent distributions $p_0(x)p(y)$ are achievable, in contrast to the setting of Figure \ref{f_all_dist}, where none of the actions were specified by nature and all joint distributions were achievable.

\begin{example}[Task assignment]
\label{example two nodes}
Suppose there are $k$ tasks numbered $1$ through $k$.  One task is dealt randomly to node X, and node Y needs to choose one of the remaining tasks.  This coordinated behavior can be summarized by a distribution $\hat{p}$.  The action $X$ is given by nature according to $\hat{p}_0(x)$, the uniform distribution on the set $\{1,...,k\}$.  The desired conditional distribution of the action $Y$ is $\hat{p}(y|x)$, the uniform distribution on the set of tasks different from $x$.  Therefore, the joint distribution $\hat{p}_0(x)\hat{p}(y|x)$ is the uniform distribution on pairs of differing tasks from the set $\{1,...,k\}$.  Figure \ref{figure two nodes three card} illustrates a valid outcome for $k$ larger than $5$.

\begin{figure}[h]
\psfrag{l1}[][][0.8]{$X \;$}
\psfrag{l2}[][][0.8]{$Y$}
\psfrag{l4}[][][0.8]{}
\psfrag{l5}[][][0.8]{$R$}
\psfrag{l7}[][][0.8]{$k \geq 5:$}
\psfrag{c1}[][][0.8]{$5$}
\psfrag{c2}[][][0.8]{$3$}
\centerline{\includegraphics[width=.25\textwidth]{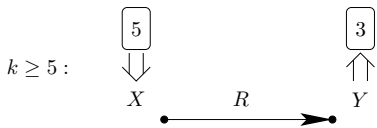}}
\caption{{\em Task assignment in the two-node network.}  A task from a set of tasks numbered $1,...,k$ is to be assigned uniquely to each of the nodes X and Y in the two-node network setting.  The task assignment for X is given randomly by nature.  The communication rate $R \geq \log (k/k-1)$ is necessary and sufficient to allow Y to select a different task from X.}
\label{figure two nodes three card}
\end{figure}

By applying Theorem \ref{theorem two nodes}, we find that the rate-coordination region ${\cal R}_{\hat{p}_0}(\hat{p}(y|x))$ is given by
\begin{eqnarray*}
{\cal R}_{\hat{p}_0}(\hat{p}(y|x)) & = & \left\{ R \; : \; R \geq \log \left( \frac{k}{k-1} \right) \right\}.
\end{eqnarray*}
\end{example}

\subsection{Isolated node}
\label{subsection isolated}

Now we derive the coordination capacity region for the isolated-node network of Figure \ref{figure isolated}.  Node X has an action chosen by nature according to $p_0(x)$, and a message is sent at rate $R$ from node X to node Y from which node Y produces an action.  Node Z also produces an action but receives no communication.  What is the set of all achievable coordination distributions $p(y,z|x)$?  At first it seems that the action at the isolated node Z must be independent of $Y$, but we will see otherwise.

\begin{figure}[h]
\psfrag{l1}[][][.8]{$X \sim p_0(x)$}
\psfrag{l2}[][][.8]{$Y$}
\psfrag{l3}[][][.8]{$Z$}
\psfrag{l4}[][][.8]{$R$}
\psfrag{l5}[][][.8]{$R_2$}
\psfrag{l6}[][][.7]{Node X}
\psfrag{l7}[][][.7]{Node Y}
\psfrag{l8}[][][.7]{Node Z}
\centerline{\includegraphics[width=.25\textwidth]{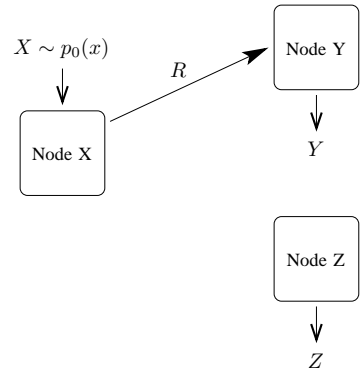}}
\caption{{\em Isolated node.}  The action $X$ is chosen by nature according to $p_0(x)$, and a message is sent at rate $R$ from node X to node Y.  Node Z receives no communication.  The coordination capacity region ${\cal C}_{p_0}$ is the set of rate-coordination pairs where $p(x,y,z) = p_0(x)p(z)p(y|x,z)$ and the rate $R$ is greater than the conditional mutual information between $X$ and $Y$ given $Z$.}
\label{figure isolated}
\end{figure}

We formalize this problem as follows.  The $(2^{nR},n)$ coordination codes consist of an encoding function
\begin{eqnarray*}
i & : & {\cal X}^n \longrightarrow \{1,...,2^{nR}\},
\end{eqnarray*}
a decoding function
\begin{eqnarray*}
y^n & : & \{1,...,2^{nR}\} \longrightarrow {\cal Y}^n,
\end{eqnarray*}
and a deterministic sequence
\begin{eqnarray*}
z^n & \in & {\cal Z}^n.
\end{eqnarray*}

The actions $X^n$ are chosen by nature i.i.d. according to $p_0(x)$, and the actions $Y^n$ are functions of $X^n$ given by implementing the coordination code as
\begin{eqnarray*}
Y^n & = & y^n(i(X^n)), \\
Z^n & = & z^n.
\end{eqnarray*}

The coordination capacity region for this network is given in the following theorem.  As we previously alluded, notice that the action $Z$ need not be independent of $Y$, even though there is no communication to node Z.

\begin{theorem}[Coordination capacity region]
\label{theorem isolated}
The coordination capacity region ${\cal C}_{p_0}$ for empirical coordination in the isolated-node network of Figure \ref{figure isolated} is the set of rate-coordination pairs where $Z$ is independent of $X$ and the rate $R$ is greater than the conditional mutual information between $X$ and $Y$ given $Z$.  Thus,
\begin{eqnarray*}
{\cal C}_{p_0} & = & \left\{ (R,p(z)p(y|x,z)) \; : \;
\begin{array}{l}
R \geq I(X;Y|Z)
\end{array}
\right\}.
\end{eqnarray*}
\end{theorem}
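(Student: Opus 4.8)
The plan is to establish both directions: that every point in the claimed region is achievable (the coding scheme), and that every achievable point lies in the region (the converse). The structural claim that $Z$ is independent of $X$ is the easy half of the converse: since $Z^n = z^n$ is a fixed deterministic sequence not depending on $X^n$, the joint type $P_{X^n,Z^n}$ concentrates (as $n\to\infty$, using the i.i.d.\ law of large numbers for $X^n$ against a deterministic $z^n$) on a product $p_0(x)P_{z^n}(z)$; since total variation to $p_0(x)p(z|x)$ goes to zero in probability, the limiting $p(y,z|x)$ must have $Z$ independent of $X$. The rate bound $R \geq I(X;Y|Z)$ is the substantive part.

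For the converse on rate, I would follow the standard Fano-free, empirical-distribution argument adapted to coordination (the same flavor one uses for the two-node Theorem~\ref{theorem two nodes} and the cascade result). Condition throughout on the fixed sequence $z^n$. Write $nR \geq H(I) \geq I(X^n; I \mid z^n) \geq I(X^n; Y^n \mid z^n)$, since $Y^n$ is a function of $I$. Then single-letterize: $I(X^n;Y^n\mid z^n) \geq \sum_{t} I(X_t; Y_t \mid z_t)$ does not hold verbatim because $z^n$ is not i.i.d., so instead I would introduce a time-sharing random variable $Q$ uniform on $\{1,\dots,n\}$ and work with $(X_Q,Y_Q,z_Q)$, getting $I(X^n;Y^n\mid z^n) \geq n\, I(X_Q;Y_Q\mid z_Q, Q) = n\, I(X_Q;Y_Q\mid Z', Q)$ where $Z'=z_Q$. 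Absorbing $Q$ into the auxiliary picture, and using that the induced empirical distribution of $(X_Q,Y_Q,Z')$ is close in total variation to the target $p_0(x)p(z)p(y|x,z)$, continuity of mutual information in total variation (on finite alphabets) gives $R \geq I(X;Y\mid Z) - \epsilon_n$ with $\epsilon_n\to 0$. Note $X_Q \sim p_0$ exactly and independent of $Q$ since the $X_t$ are i.i.d., which keeps the $X$-marginal pinned down; the main care is making sure the $Z'$ used in the target and the $z_Q$ used in the bound are matched, which is exactly what the empirical-distribution closeness delivers.

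For achievability, fix $p(z)$ and $p(y|x,z)$ with $Z\perp X$ and $R > I(X;Y|Z)$. First generate $z^n$ deterministically as a ``typical'' sequence for $p(z)$ — e.g.\ a sequence whose type is (close to) $p(z)$; this is the node-Z action, requiring no communication. Now the problem reduces to the two-node problem \emph{with side information $z^n$ known to both encoder and decoder}: node X, seeing $X^n$ (and the publicly-known $z^n$), must convey enough for node Y to produce $Y^n$ with $(X^n,Y^n,z^n)$ jointly typical for $p_0(x)p(z)p(y|x,z)$. This is a Wyner–Ziv-with-side-information-at-the-encoder situation, i.e.\ conditional rate distortion / conditional coordination, for which rate $I(X;Y|Z)$ suffices: random-bin a codebook of $Y^n$ sequences conditioned on $z^n$-type classes, and standard joint-typicality encoding/decoding (conditional covering lemma) yields $(X^n,Y^n,z^n)$ jointly typical with high probability, hence the joint type is within $\epsilon$ of the target in total variation. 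By Theorem~\ref{theorem no common randomness} no common randomness is needed. Then close the region: convexity (Lemma~\ref{lemma convexity}) and taking closures handle the boundary, and time-sharing over different choices of $p(z)$ is available if needed.

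The main obstacle I anticipate is the non-stationarity of $z^n$ in the converse: because node Z's sequence is an arbitrary deterministic sequence rather than i.i.d., the single-letterization must route through the time-sharing variable $Q$, and one must verify carefully that the resulting single-letter quantity $I(X_Q;Y_Q\mid z_Q,Q)$ really is bounded below by $I(X;Y\mid Z)$ evaluated at the \emph{target} distribution — this requires invoking the achievability-of-joint-type hypothesis to argue the empirical law of $(X_Q,Y_Q,z_Q,Q)$ is close to $p_0(x)p(z)p(y|x,z)$ and then using uniform continuity of entropy/mutual information on the finite simplex. A secondary subtlety is confirming that $I(X^n;I\mid z^n) \geq I(X^n;Y^n\mid z^n)$ given only that $Y^n$ is a deterministic function of $I$ (and $z^n$ is constant), which is immediate from the data-processing inequality once one notes $Y^n - I - X^n$ conditioned on the constant $z^n$.
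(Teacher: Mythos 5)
Your proof is correct, but it takes a genuinely different route from the paper's. The paper proves the cascade theorem (Theorem~\ref{theorem cascade}) first and then obtains Theorem~\ref{theorem isolated} as an immediate corollary: setting $R_2 = 0$ in the cascade region forces $I(X;Z) = 0$, hence $Z \perp X$ and $p(y,z|x) = p(z)p(y|x,z)$, and $R_1 \geq I(X;Y,Z) = I(X;Z) + I(X;Y|Z) = I(X;Y|Z)$. The proof section literally reads ``No proof is necessary, as this is a special case of the cascade network with $R_2 = 0$,'' and this shortcut is flagged in the discussion following the theorem statement. Your approach instead gives a direct, self-contained argument: the deterministic $z^n$ makes $Z \perp X$ automatic; the converse single-letterizes $I(X^n;Y^n)$ through a time-sharing variable $Q$ to get $n\,I(X_Q;Y_Q\mid z_Q,Q)$ and then invokes continuity of mutual information; achievability fixes a typical $z^n$ as shared side information and runs conditional covering at rate $I(X;Y\mid Z)$. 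Both routes are sound. Yours has the pedagogical merit of isolating exactly why the conditional mutual information appears (the fixed $z^n$ is side information common to all parties), while the paper's route is more economical given the cascade machinery is built anyway.

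One step you gesture at but should make explicit: passing from $I(X_Q;Y_Q\mid z_Q,Q)$ to $I(X_Q;Y_Q\mid z_Q)$ requires $I(X_Q;Q\mid z_Q) = 0$, which holds because $X^n$ is i.i.d.\ so $X_Q \perp Q$, and $z_Q$ is a deterministic function of $Q$; from this $I(X_Q;Y_Q\mid z_Q,Q) = I(X_Q;Y_Q,Q\mid z_Q) \geq I(X_Q;Y_Q\mid z_Q)$. This is the same Property~1 of time mixing used repeatedly in Section~\ref{section proofs}. Also a small terminology nit: since $z^n$ is known to both encoder and decoder, the achievability is conditional covering (no binning needed), not Wyner--Ziv; binning does no harm here but is unnecessary.
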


{\em Discussion:}
How can $Y$ and $Z$ have a dependence when there is no communication between them?  This dependence is possible because neither $Y$ nor $Z$ is chosen randomly by nature.  In an extreme case, we could let node Y ignore the incoming message from node X and let the actions at node Y and node Z be equal, $Y=Z$.  Thus we can immediately see that with no communication the coordination region consists of all distributions of the form $p_0(x)p(y,z)$.

If we were to use common randomness $\omega$ to generate the action sequence $Z^n(\omega)$, then Node Y, which also has access to the common randomness, can use it to produce correlated actions.  This does not increase the coordination capacity region (see Theorem \ref{theorem no common randomness}), but it provides an intuitive understanding of how $Y$ and $Z$ can be correlated.  Without explicit use of common randomness, we select a determinist sequence $z^n$ before-hand as part of our codebook and make it known to all parties.

It is interesting to note that there is a tension between the correlation of $X$ and $Y$ and the correlation of $Y$ and $Z$.  For instance, if the communication is used to make perfect correlation between $X$ and $Y$ then any potential correlation between $Y$ and $Z$ is forfeited.

Within the results for the more general cascade network in the sequel (Section \ref{subsection cascade}) we will find that Theorem \ref{theorem isolated} is an immediate consequence of Theorem \ref{theorem cascade} by letting $R_2 = 0$.

\begin{example}[Jointly Gaussian]
\label{example jointly gaussian}
Jointly Gaussian distributions illustrate the tradeoff between the correlation of $X$ and $Y$ and the correlation of $Y$ and $Z$ in the isolated-node network.  Consider the portion of the coordination-rate region ${\cal P}_{p_0}(R)$ that consists of jointly Gaussian distributions.  If $X$ is distributed according to $N(0,\sigma_X^2)$, what set of covariance matrices can be achieved at rate $R$?

So far we have discussed coordination for distribution functions with finite alphabets.  Extending to infinite alphabet distributions, achievability means that any finite quantization of the joint distribution is achievable.

Using Theorem \ref{theorem isolated}, we bound the correlations as follows:
\begin{eqnarray}
\label{equation gaussian example}
R & \geq & I(X;Y|Z) \nonumber \\
& = & I(X;Y,Z) \nonumber \\
& = & \frac{1}{2} \log \frac{|K_x||K_{yz}|}{|K_{XYZ}|} \nonumber \\
& \stackrel{(a)}{=} & \frac{1}{2} \log \frac{\sigma_x^2 (\sigma_y^2 \sigma_z^2 - \sigma_{yz}^2)}{\sigma_x^2\sigma_y^2\sigma_z^2 - \sigma_x^2\sigma_{yz}^2 - \sigma_z^2\sigma_{xy}^2} \nonumber \\
& \stackrel{(b)}{=} & \frac{1}{2} \log \frac{1 - \left(\frac{\sigma_{yz}}{\sigma_y\sigma_z}\right)^2}{1 - \left(\frac{\sigma_{yz}}{\sigma_y\sigma_z}\right)^2 - \left(\frac{\sigma_{xy}}{\sigma_x\sigma_y}\right)^2} \nonumber \\
& = & \frac{1}{2} \log \frac{1-\rho_{yz}^2}{1-\rho_{yz}^2-\rho_{xy}^2},
\end{eqnarray}
where $\rho_{xy}$ and $\rho_{yz}$ are correlation coefficients.  Equality (a) holds because $\sigma_{xz}=0$ due to the independence between $X$ and $Z$.  Obtain equality (b) by dividing the numerator and denominator of the argument of the $\log$ by $\sigma_x^2\sigma_y^2\sigma_z^2$.

Unfolding (\ref{equation gaussian example}) yields a linear tradeoff between the $\rho_{xy}^2$ and $\rho_{yz}^2$, given by
\begin{eqnarray*}
(1-2^{-2R})^{-1} \rho_{xy}^2 + \rho_{yz}^2 & \leq & 1.
\end{eqnarray*}
Thus all correlation coefficients $\rho_{xy}$ and $\rho_{yz}$ satisfying this constraint are achievable at rate $R$.
\end{example}

\subsection{Cascade}
\label{subsection cascade}

We now give the coordination capacity region for the cascade of communication in Figure \ref{figure cascade}.  In this setting, the action at node X is chosen by nature.  A message at rate $R_1$ is sent from node X to node Y, and subsequently a message at rate $R_2$ is sent from node Y to node Z based on the message received from node X.  Nodes Y and Z produce actions based on the messages they receive.

\begin{figure}[h]
\psfrag{l1}[][][.8]{$X \sim p_0(x)$}
\psfrag{l2}[][][.8]{$Y$}
\psfrag{l3}[][][.8]{$Z$}
\psfrag{l4}[][][.8]{$R_1$}
\psfrag{l5}[][][.8]{$R_2$}
\psfrag{l6}[][][.7]{Node X}
\psfrag{l7}[][][.7]{Node Y}
\psfrag{l8}[][][.7]{Node Z}
\centerline{\includegraphics[width=.4\textwidth]{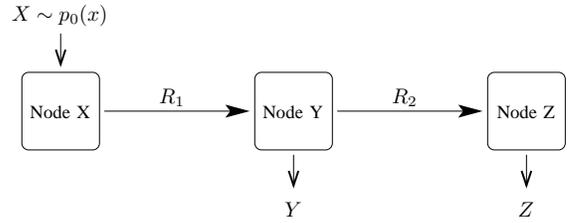}}
\caption{{\em Cascade.}  The action $X$ is chosen by nature according to $p_0(x)$.  A message is sent from node X to node Y at rate $R_1$.  Node Y produces an action $Y$ and a message to send to node Z based on the message received from node X.  Node Z then produces an action $Z$ based on the message received from node Y.  The coordination capacity region ${\cal C}_{p_0}$ is the set of rate-coordination triples where the rate $R_1$ is greater than the mutual information between $X$ and $(Y,Z)$, and the rate $R_2$ is greater than the mutual information between $X$ and $Z$.}
\label{figure cascade}
\end{figure}

The formal statement is as follows.  The $(2^{nR_1},2^{nR_2},n)$ coordination codes consist of four functions---an encoding function
\begin{eqnarray*}
i & : & {\cal X}^n \longrightarrow \{1,...,2^{nR_1}\},
\end{eqnarray*}
a recoding function
\begin{eqnarray*}
j & : & \{1,...,2^{nR_1}\} \longrightarrow \{1,...,2^{nR_2}\},
\end{eqnarray*}
and two decoding functions
\begin{eqnarray*}
y^n & : & \{1,...,2^{nR_1}\} \longrightarrow {\cal Y}^n, \\
z^n & : & \{1,...,2^{nR_2}\} \longrightarrow {\cal Z}^n.
\end{eqnarray*}

The actions $X^n$ are chosen by nature i.i.d. according to $p_0(x)$, and the actions $Y^n$ and $Z^n$ are functions of $X^n$ given by implementing the coordination code as
\begin{eqnarray*}
Y^n & = & y^n(i(X^n)), \\
Z^n & = & z^n(j(i(X^n))).
\end{eqnarray*}

This network was considered by Yamamoto \cite{Yamamoto_source_coding81} in the context of rate-distortion theory.  The same optimal encoding scheme from his work achieves the coordination capacity region as well.

\begin{theorem}[Coordination capacity region]
\label{theorem cascade}
The coordination capacity region ${\cal C}_{p_0}$ for empirical coordination in the cascade network of Figure \ref{figure cascade} is the set of rate-coordination triples where the rate $R_1$ is greater than the mutual information between $X$ and $(Y,Z)$, and the rate $R_2$ is greater than the mutual information between $X$ and $Z$.  Thus,
\begin{eqnarray*}
{\cal C}_{p_0} & = & \left\{ (R_1,R_2,p(y,z|x)) \; : \;
\begin{array}{l}
%p(x,y,z) = p_0(x)p(y,z|x), \\
R_1 \geq I(X;Y,Z), \\
R_2 \geq I(X;Z).
\end{array}
\right\}.
\end{eqnarray*}
\end{theorem}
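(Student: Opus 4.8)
I would prove the two rate inequalities as a converse and match them with the layered random-coding scheme of Yamamoto \cite{Yamamoto_source_coding81}, adapted from distortion to joint type. Since $\mathcal{C}_{p_0}$ is a closure and mutual information is continuous on the finite-alphabet probability simplex, it suffices to show (i) every \emph{achievable} triple satisfies $R_1 \ge I(X;Y,Z)$ and $R_2 \ge I(X;Z)$ (computed under the target $p_0(x)p(y,z|x)$), and (ii) every triple with $R_1 > I(X;Y,Z)$ and $R_2 > I(X;Z)$ is achievable; the boundary then follows by taking closures.

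\textbf{Converse.} Fix a sequence of $(2^{nR_1},2^{nR_2},n)$ codes achieving $p_0(x)p(y,z|x)$, so $\|P_{X^n,Y^n,Z^n}-p_0(x)p(y,z|x)\|_{TV}\to 0$ in probability. Since $Y^n$ and $Z^n$ are deterministic functions of the message $I$ (and $Z^n$ is in fact a function of $J=j(I)$), data processing gives $nR_1 \ge H(I) \ge I(X^n;I) \ge I(X^n;Y^n,Z^n)$ and $nR_2 \ge H(J) \ge I(X^n;J) \ge I(X^n;Z^n)$. The chain rule gives $I(X^n;Y^n,Z^n)=\sum_i I(X_i;Y^n,Z^n\mid X^{i-1})$, and since $X^n$ is i.i.d.\ each term is at least $I(X_i;Y_i,Z_i)$; introducing a uniform time-sharing index $Q$ then yields $\frac{1}{n} I(X^n;Y^n,Z^n) \ge I(X_Q;Y_Q,Z_Q\mid Q) \ge I(X_Q;Y_Q,Z_Q)$, the last step because $X_Q$ is independent of $Q$. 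The law of $(X_Q,Y_Q,Z_Q)$ is the expected joint type, which by bounded convergence (exactly as in the proof of Theorem~\ref{theorem no common randomness}) and the triangle inequality converges to $p_0(x)p(y,z|x)$ in total variation; continuity of mutual information then forces $R_1 \ge I(X;Y,Z)$, and the identical argument applied to $I(X^n;Z^n)$ gives $R_2 \ge I(X;Z)$.

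\textbf{Achievability.} Assume the strict inequalities. If $R_2 \ge R_1$, simply relay all of $I$ down the cascade and use a single-layer covering codebook of $2^{nR_1}$ codewords $(Y^n,Z^n)(m)\sim\prod p(y_i,z_i)$, so assume $R_2 < R_1$. Randomly generate $2^{nR_2}$ cloud centers $Z^n(m_2)\sim\prod p(z_i)$ and, superposed on each, $2^{n(R_1-R_2)}$ satellites $Y^n(m_2,m_1)\sim\prod p(y_i\mid Z_i(m_2))$, so that each codeword pair is marginally $\sim\prod p(y_i,z_i)$. Node~X, on observing $X^n$, searches for $(m_1,m_2)$ with $(X^n,Y^n(m_2,m_1),Z^n(m_2))$ jointly (robustly) typical for $p_0(x)p(y,z|x)$ and sends $I=(m_1,m_2)$; Node~Y outputs $Y^n(m_2,m_1)$ and relays $J=m_2$; Node~Z outputs $Z^n(m_2)$. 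When the search succeeds the realized triple is typical, so its joint type is within $O(\epsilon)$ of the target in total variation; averaging over the random codebook and derandomizing produces a deterministic coordination code (no common randomness needed, consistent with Theorem~\ref{theorem no common randomness}), and letting $\epsilon\downarrow 0$ and $(R_1,R_2)\downarrow(I(X;Y,Z),I(X;Z))$, using closedness of $\mathcal{C}_{p_0}$, completes the proof.

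\textbf{Main obstacle.} The crux is showing Node~X's search succeeds with probability approaching one under exactly $R_1>I(X;Y,Z)$ and $R_2>I(X;Z)$. In the easy regime $R_1-R_2>I(X;Y\mid Z)$ this is two nested applications of the covering lemma: $R_2>I(X;Z)$ makes some cloud center jointly typical with $X^n$, and $R_1-R_2>I(X;Y\mid Z)$ then makes some satellite complete it. The delicate regime is $R_2$ near $R_1$, where $R_1-R_2$ can fall below $I(X;Y\mid Z)$ and no single cloud has enough satellites; here one must exploit that there are about $2^{n(R_2-I(X;Z))}$ good cloud centers, so the total number of satellite trials is about $2^{n(R_1-I(X;Z))}$, each completing the triple with probability about $2^{-nI(X;Y\mid Z)}$, for an expected number of completions about $2^{n(R_1-I(X;Y,Z))}$ --- exponentially large precisely when $R_1>I(X;Y,Z)$. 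Turning this first-moment count into a high-probability guarantee requires a second-moment (concentration) argument over the two-layer codebook that accounts for the dependence among satellites sharing a common cloud center; this is the coordination analogue of the covering step in Yamamoto's cascade source-coding theorem, and it is the one place I expect the argument to require genuine care.
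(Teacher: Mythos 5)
Your converse is correct and matches the paper's: $nR_1 \ge H(I) \ge I(X^n;Y^n,Z^n)$ and $nR_2 \ge H(J) \ge I(X^n;Z^n)$ by data processing (since $Y^n,Z^n$ are functions of $I$ and $Z^n$ is a function of $J$), then time-mixing via $Q$ and continuity of mutual information applied to the expected joint type. Your achievability scheme is also the right one — superposition with cloud centers $Z^n$ and satellites $Y^n$, Node Y relaying only the cloud index — but the ``main obstacle'' you flag is self-inflicted, and the paper's route avoids it entirely.

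The problem is that you tie the cloud codebook rate to $R_2$ and the satellite rate to $R_1-R_2$, which forces a joint two-layer search (and the second-moment concentration argument you dread) whenever $R_1 - R_2 < I(X;Y|Z)$. But you are never obliged to spend the full budget $R_2$ on the cloud codebook. Choose instead a cloud codebook of rate $R_Z = I(X;Z) + \delta$ and a satellite codebook of rate $R_Y = I(X;Y|Z) + \delta$. Since $R_1 > I(X;Y,Z) = I(X;Z)+I(X;Y|Z)$ and $R_2 > I(X;Z)$, you can take $\delta$ small enough that $R_Z + R_Y < R_1$ and $R_Z < R_2$, with the surplus $R_2 - R_Z$ simply left unused. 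Encoding is then purely sequential: first cover $X^n$ by some cloud $Z^n$ (the covering lemma needs only $R_Z > I(X;Z)$), then cover the pair $(X^n,Z^n)$ by a satellite $Y^n$ drawn conditionally on that cloud ($R_Y > I(X;Y|Z)$ suffices). Node X sends both indices on the first link, Node Y reconstructs $(Y^n,Z^n)$ and forwards only the cloud index, Node Z outputs $Z^n$. Every rate pair in the region is reached this way, with no joint search and no concentration argument over dependent satellites. This is exactly how the paper proceeds: it invokes Lemma~\ref{lemma generic example} (generic coordination with side information, itself built on the strong Markov lemma, Theorem~\ref{theorem strong markov}) once per layer, which gives a covering-success bound that is \emph{uniform} over all jointly typical inputs, so the two layers compose without any additional second-moment machinery.
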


{\em Discussion:}
The coordination capacity region ${\cal C}_{p_0}$ meets the cut-set bound.  The trick to achieving this bound is to first specify $Z$ and then specify $Y$ conditioned on $Z$.

\begin{example}[Task assignment]
\label{example cascade}
Consider a task assignment setting where three tasks are to be assigned without duplication to the three nodes X, Y, and Z, and the assignment for node X is chosen uniformly at random by nature.  A distribution capturing this coordination behavior is the uniform distribution over the six permutations of task assignments.  Let $\hat{p}_0(x)$ be the uniform distribution on the set $\{1,2,3\}$, and let $\hat{p}(y,z|x)$ give equal probability to both of the assignments to Y and Z that produce different tasks at the three nodes.  Figure \ref{figure cascade three card} illustrates a valid outcome of the task assignments.

\begin{figure}[h]
\psfrag{l1}[][][0.8]{$X \;$}
\psfrag{l2}[][][0.8]{$Y \;$}
\psfrag{l3}[][][0.8]{$Z \;$}
\psfrag{l4}[][][0.8]{}
\psfrag{l5}[][][0.8]{$R_1$}
\psfrag{l6}[][][0.8]{$R_2$}
\psfrag{c1}[][][0.8]{$2$}
\psfrag{c2}[][][0.8]{$1$}
\psfrag{c3}[][][0.8]{$3$}
\centerline{\includegraphics[width=6cm]{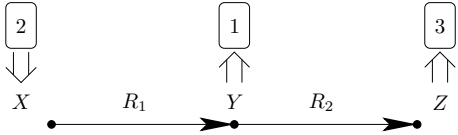}}
\caption{{\em Task assignment in the cascade network.}  Three tasks, numbered $1$, $2$, and $3$, are distributed among three nodes X, Y, and Z in the cascade network setting.  The task assignment for X is given randomly by nature.  The rates $R_1 \geq \log 3$ and $R_2 \geq \log 3 - \log 2$ are required to allow Y and Z to choose different tasks from X and from each other.}
\label{figure cascade three card}
\end{figure}

According to Theorem \ref{theorem cascade}, the rate-coordination region ${\cal R}_{\hat{p}_0}(\hat{p}(y,z|x))$ is given by
\begin{eqnarray*}
{\cal R}_{\hat{p}_0}(\hat{p}(y,z|x)) & = & \left\{ (R_1,R_2) \; : \;
\begin{array}{l}
R_1 \geq \log 3, \\
R_2 \geq \log 3 - \log 2.
\end{array}
\right\}.
\end{eqnarray*}
\end{example}

\subsection{Degraded source}
\label{subsection degraded}

Here we present the coordination capacity region for the degraded-source network shown in Figure \ref{figure degraded source}.  Nodes X and Y each have an action specified by nature, and $Y$ is a function of $X$.  That is, $p_0(x,y) = p_0(x) {\bf 1} (y = f_0(x))$, where ${\bf 1}(\cdot)$ is the indicator function.  Node X sends a message to node Y at rate $R_1$ and a message to node Z at rate $R_2$.  Node Y, upon receiving the message from node X, sends a message at rate $R_3$ to node Z.  Node Z produces an action based on the two messages it receives.

\begin{figure}[h]
\psfrag{l1}[][][.8]{$X \sim p_0(x)$}
\psfrag{l2}[][][.8]{$Y = f_0(X)$}
\psfrag{l3}[][][.8]{$Z$}
\psfrag{l4}[][][.8]{$R_1$}
\psfrag{l5}[][][.8]{$R_3$}
\psfrag{l6}[][][.8]{$R_2$}
\psfrag{l7}[][][.7]{Node X}
\psfrag{l8}[][][.7]{Node Y}
\psfrag{l9}[][][.7]{Node Z}
\centerline{\includegraphics[width=.4\textwidth]{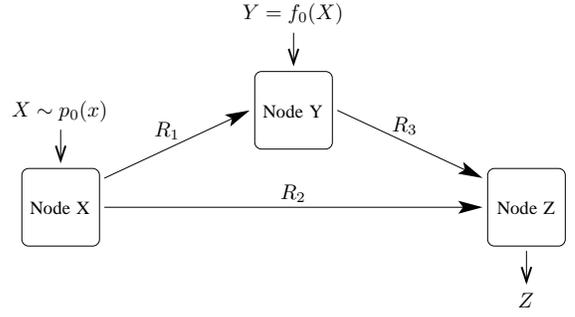}}
\caption{{\em Degraded source:}  The action $X$ is specified by nature according to $p_0(x)$, and the action $Y$ is a function $f_0$ of $X$.  A message is sent from node X to node Y at rate $R_1$, after which node Y constructs a message for node Z at rate $R_3$ based on the incoming message from node X and the action $Y$.  Node X also sends a message directly to node Z at rate $R_2$.  The coordination capacity region ${\cal C}_{p_0}$ is given in Theorem \ref{theorem degraded source}.}
\label{figure degraded source}
\end{figure}

The $(2^{nR_1},2^{nR_2},2^{nR_3},n)$ coordination codes for Figure \ref{figure degraded source} consist of four functions---two encoding functions
\begin{eqnarray*}
i & : & {\cal X}^n \longrightarrow \{1,...,2^{nR_1}\}, \\
j & : & {\cal X}^n \longrightarrow \{1,...,2^{nR_2}\},
\end{eqnarray*}
a recoding function
\begin{eqnarray*}
k & : & \{1,...,2^{nR_1}\} \times {\cal Y}^n \longrightarrow \{1,...,2^{nR_3}\},
\end{eqnarray*}
and a decoding function
\begin{eqnarray*}
z^n & : & \{1,...,2^{nR_2}\} \times \{1,...,2^{nR_3}\} \longrightarrow {\cal Y}^n.
\end{eqnarray*}

The actions $X^n$ and $Y^n$ are chosen by nature i.i.d. according to $p_0(x,y)$, having the property that $Y_i = f_0(X_i)$ for all $i$, and the actions $Z^n$ are a function of $X^n$ and $Y^n$ given by implementing the coordination code as
\begin{eqnarray*}
Y^n & = & y^n(j(X^n),k(i(X^n),Y^n)).
\end{eqnarray*}

Others have investigated source coding networks in the rate-distortion context where two sources are encoded at separate nodes to be reconstructed at a third node.  Kaspi and Berger \cite{Kaspi_berger82} consider a variety of cases where the encoders share some information.  Also, Barros and Servetto \cite{barros-servetto04} articulate the compress and bin strategy for more general bi-directional exchanges of information among the encoders.  While falling under the same general compression strategy, the degraded source network is a special case where optimality can be established, yielding a characterization of the coordination capacity region.

\begin{theorem}[Coordination capacity region]
\label{theorem degraded source}
The coordination capacity region ${\cal C}_{p_0}$ for empirical coordination in the degraded-source network of Figure \ref{figure degraded source} is given by
\begin{eqnarray*}
{\cal C}_{p_0} = \left\{ (R_1,R_2,R_3,p(z|x,y)) \; : \;
\begin{array}{l}
%p(x,y,z) = p_0(x){\bf 1}(y = f_0(x))p(z|x,y), \\
\exists p(u|x,y,z) \mbox{ such that} \\
|{\cal U}| \leq |{\cal X}||{\cal Z}| + 2, \\
R_1 \geq I(X;U|Y), \\
R_2 \geq I(X;Z|U), \\
R_3 \geq I(X;U).
\end{array}
\right\}.
\end{eqnarray*}
\end{theorem}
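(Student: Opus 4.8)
The plan is to prove the two directions of the set equality separately: \emph{achievability} (every tuple in the stated set is achievable) and the \emph{converse} (every achievable tuple lies in it).

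\textbf{Achievability.} For achievability I would use a two-layer ``covering plus binning'' construction around the auxiliary $U$. Fix $p(u|x,y,z)$ that, together with the target $p_0(x,y)p(z|x,y)$ (recall $p_0(x,y)=p_0(x)\mathbf{1}(y=f_0(x))$), satisfies the three inequalities with a small slack $\delta>0$, and let $p(u),p(u|x,y),p(z|u)$ be the induced marginals. Generate a codebook of $2^{n(I(X;U)+\delta)}$ sequences $u^n(m)$ i.i.d.\ $\sim\prod p(u_i)$, randomly partitioned into $2^{n(I(X;U|Y)+2\delta)}$ bins; for each $m$ generate $2^{n(I(X;Z|U)+\delta)}$ sequences $z^n(m,\ell)$ i.i.d.\ $\sim\prod p(z_i|u_i(m))$. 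Observing $X^n$ (hence $Y^n$, a componentwise function of $X^n$), Node $X$ picks $m$ with $(X^n,Y^n,u^n(m))$ jointly typical --- possible w.h.p.\ by the covering lemma since $I(XY;U)=I(X;U)$ because $Y$ is a function of $X$ --- and then $\ell$ with $(X^n,Y^n,u^n(m),z^n(m,\ell))$ jointly typical, using $I(XY;Z|U)=I(X;Z|U)$; it sends the bin index of $m$ on the rate-$R_1$ link and $\ell$ on the rate-$R_2$ link. Node $Y$ recovers $m$ as the unique index in the received bin whose $u^n$-codeword is jointly typical with $Y^n$, reliable by the packing lemma because the bin size $2^{n(I(X;U)-I(X;U|Y)-\delta)}=2^{n(I(Y;U)-\delta)}$ is below $2^{nI(Y;U)}$; it relays $m$ on the rate-$R_3$ link, which carries $I(X;U)+\delta$ bits. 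Node $Z$ outputs $Z^n=z^n(m,\ell)$. By the conditional typicality lemma $(X^n,Y^n,Z^n)$ is jointly typical with the target w.h.p., so its joint type is within $O(\epsilon)$ of $p_0(x,y)p(z|x,y)$ in total variation; letting $\epsilon,\delta\downarrow 0$ and closing up gives every interior point. The bound $|\mathcal{U}|\le|\mathcal{X}||\mathcal{Z}|+2$ follows from the support lemma: one preserves the $|\mathcal{X}||\mathcal{Z}|-1$ free parameters of $p(x,y,z)$ (supported on at most $|\mathcal{X}||\mathcal{Z}|$ atoms since $Y$ is a function of $X$) together with the three functionals $H(X|U),H(X|Y,U),H(X|Z,U)$, which with $H(X),H(X|Y)$ fixed determine $I(X;U),I(X;U|Y),I(X;Z|U)$.

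\textbf{Converse.} Given an achievable tuple, take a sequence of $(2^{nR_1},2^{nR_2},2^{nR_3},n)$ codes whose induced joint type converges in probability to the target, and write $I=i(X^n)$, $K=k(I,Y^n)$, $J=j(X^n)$, so $Z^n$ is a function of $(J,K)$. I would take the single-letter auxiliary $U_i:=(K,X^{i-1})$. From $H(K)\le nR_3$ and the i.i.d.\ source, $nR_3\ge I(X^n;K)=\sum_i I(X_i;K|X^{i-1})=\sum_i I(X_i;U_i)$. From $H(I)\le nR_1$, data processing along $X^n-(I,Y^n)-K$ given $Y^n$, and the identities $I(X_i;X^{i-1}|Y_i)=0$ and $I(X_i;Y_{i+1}^n|X^{i-1},Y_i)=0$ --- both immediate because $X^n$ is i.i.d.\ and $Y^n$ is a componentwise function of $X^n$ --- one gets $nR_1\ge I(X^n;K|Y^n)\ge\sum_i I(X_i;K|X^{i-1},Y_i)=\sum_i I(X_i;U_i|Y_i)$. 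From $H(J)\le nR_2$ and the fact that $Z_i$ is a function of $(J,K)$, $nR_2\ge I(X^n;J|K)\ge\sum_i I(X_i;Z_i|X^{i-1},K)=\sum_i I(X_i;Z_i|U_i)$. Introducing a uniform time-sharing index $Q$ on $\{1,\dots,n\}$ and setting $X=X_Q$, $Y=Y_Q$, $Z=Z_Q$, $U=(U_Q,Q)$ (using $I(X_Q;Q)=I(X_Q;Q|Y_Q)=0$ to absorb $Q$), one obtains $R_1\ge I(X;U|Y)$, $R_2\ge I(X;Z|U)$, $R_3\ge I(X;U)$ for the code's induced distribution, whose $(X,Y,Z)$-marginal equals the expected joint type and hence converges to the target. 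Applying the support lemma at each $n$ to trim $\mathcal{U}$ to size at most $|\mathcal{X}||\mathcal{Z}|+2$, then passing to a subsequence along which these (now uniformly finitely supported) joint distributions converge and invoking continuity of mutual information on the finite-alphabet simplex, yields a single $p(u|x,y,z)$ that attains the three inequalities against the target --- i.e., the tuple lies in $\mathcal{C}_{p_0}$.

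\textbf{Main obstacle.} I expect the delicate part to be the converse, for two reasons. First, one must check that the \emph{single} choice $U_i=(K,X^{i-1})$ delivers all three bounds simultaneously; this hinges on $Y$ being a deterministic function of $X$, which is exactly what kills the cross terms $I(X_i;X^{i-1}|Y_i)$ and $I(X_i;Y_{i+1}^n|X^{i-1},Y_i)$ that would otherwise block the chain-rule manipulations, so getting these Markov relations stated and applied correctly is where care is needed. Second, since achievability only controls the empirical joint type rather than the induced distribution, the single-letter inequalities initially refer to the code's own induced distribution, and the passage to the clean statement about the target requires the compactness-plus-continuity limiting argument above, with the cardinality bound supplying exactly the compactness needed. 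The achievability direction, by contrast, is a fairly routine composition of the covering and packing lemmas in the spirit of the Kaspi--Berger and Barros--Servetto compress-and-bin constructions; the only subtle point there is matching Node $Y$'s binning rate to the covering rate of the $U^n$-codebook so that $R_1\ge I(X;U|Y)$ becomes the governing constraint.
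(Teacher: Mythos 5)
Your proof is correct and follows essentially the same route as the paper: achievability via a covered-and-binned $U^n$ codebook with the bin index sent at rate $R_1$, the raw codeword index forwarded at rate $R_3$, and a conditional $Z^n$ codebook at rate $R_2$; and a converse with the single auxiliary $U_i=(K,X^{i-1})$ (the paper writes $U=(K,X^{Q-1},Q)$ after time-mixing), exploiting $Y=f_0(X)$ to kill the cross terms exactly as you note. Your final paragraph on the cardinality-bound/compactness/continuity passage to the target distribution is a clean spelling-out of a step the paper compresses into a one-line appeal to closedness; otherwise the arguments coincide.
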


\section{Coordination---Partial Results}
\label{section partial results}

We have given the coordination capacity region for several multinode networks.  Those results are complete.  We now investigate networks for which we have only partial results.

In this section we present bounds on the coordination capacity regions ${\cal C}_{p_0}$ for empirical coordination in two network settings of three nodes---the broadcast network and the cascade-multiterminal network.  A communication technique that we find useful in both settings, also used in the degraded-source network of Section \ref{section complete results}, is to use a portion of the communication to send identical messages to all nodes in the network.  The common message serves to correlate the codebooks used on different communication links and can result in reduced rates in the network.

Proofs are left to Section \ref{section proofs}.  Again, as a consequence of Theorem \ref{theorem no common randomness} we need not use common randomness in this section.

\subsection{Broadcast}
\label{subsection broadcast}

We now give bounds on the coordination capacity region for the broadcast network of Figure \ref{figure broadcast}.  In this setting, node X has an action specified by nature according to $p_0(x)$ and sends one message to node Y at rate $R_1$ and a separate message to node Z at rate $R_2$.  Nodes Y and Z each produce an action based on the message they receive.

\begin{figure}[h]
\psfrag{l1}[][][.8]{$X \sim p_0(x)$}
\psfrag{l2}[][][.8]{$Y$}
\psfrag{l3}[][][.8]{$Z$}
\psfrag{l4}[][][.8]{$R_1$}
\psfrag{l5}[][][.8]{$R_2$}
\psfrag{l6}[][][.7]{Node X}
\psfrag{l7}[][][.7]{Node Y}
\psfrag{l8}[][][.7]{Node Z}
\centerline{\includegraphics[width=.25\textwidth]{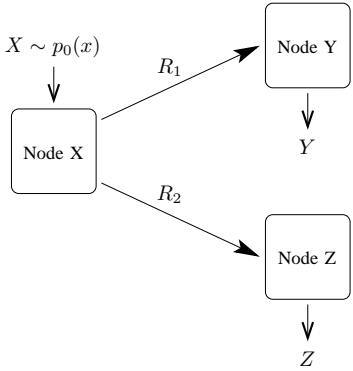}}
\caption{{\em Broadcast.}  The action $X$ is chosen by nature according to $p_0(x)$.  A message is sent from node X to node Y at rate $R_1$, and a separate message is sent from node X to node Z at rate $R_2$.  Nodes Y and Z produce actions based on the messages they receive.  Bounds on the coordination capacity region ${\cal C}_{p_0}$ are given in Theorem \ref{theorem broadcast}.}
\label{figure broadcast}
\end{figure}

Node X serves as the controller for the network.  Nature assigns an action to node X, which then tells node Y and node Z which actions to take.

The $(2^{nR_1},2^{nR_2},n)$ coordination codes consist of two encoding functions
\begin{eqnarray*}
i & : & {\cal X}^n \longrightarrow \{1,...,2^{nR_1}\}, \\
j & : & {\cal X}^n \longrightarrow \{1,...,2^{nR_2}\},
\end{eqnarray*}
and two decoding functions
\begin{eqnarray*}
y^n & : & \{1,...,2^{nR_1}\} \longrightarrow {\cal Y}^n. \\
z^n & : & \{1,...,2^{nR_2}\} \longrightarrow {\cal Z}^n.
\end{eqnarray*}

The actions $X^n$ are chosen by nature i.i.d. according to $p_0(x)$, and the actions $Y^n$ and $Z^n$ are functions of $X^n$ given by implementing the coordination code as
\begin{eqnarray*}
Y^n & = & y^n(i(X^n)). \\
Z^n & = & z^n(j(X^n)).
\end{eqnarray*}

From a rate-distortion point of view, the broadcast network is not a likely candidate for consideration.  The problem separates into two non-interfering rate-distortion problems, and the relationship between the sequences $Y^n$ and $Z^n$ is ignored (unless the decoders communicate as in \cite{Yamamoto_triangle96}).  However, a related scenario, the problem of multiple descriptions \cite{WolfWynerZiv}, where the combination of two messages $I$ and $J$ are used to make a third estimate of the source $X$, demands consideration of the relationship between the two messages.  In fact, the communication scheme for the multiple descriptions problem presented by Zhang and Berger \cite{zhang87Berger} coincides with our inner bound for the coordination capacity region in the broadcast network.

The set of rate-coordination tuples ${\cal C}_{p_0,in}$ is an inner bound on the coordination capacity region, given by
\begin{eqnarray*}
{\cal C}_{p_0,in} & \triangleq & \phantom{extra space to fill the line with}
\end{eqnarray*}
\begin{eqnarray*}
\left\{
\begin{array}{l}
(R_1,R_2,p(y,z|x)) \; : \;
\exists p(u|x,y,z) \mbox{ such that} \\
R_1 \geq I(X;U,Y), \\
R_2 \geq I(X;U,Z), \\
R_1 + R_2 \geq I(X;U,Y) + I(X;U,Z) + I(Y;Z|X,U).
\end{array}
\right\}. & &
\end{eqnarray*}
The set of rate-coordination tuples ${\cal C}_{p_0,out}$ is an outer bound on the coordination capacity region, given by
\begin{eqnarray*}
{\cal C}_{p_0,out} & \triangleq & \left\{
\begin{array}{l}
(R_1,R_2,p(y,z|x)) \; : \; \\
R_1 \geq I(X;Y), \\
R_2 \geq I(X;Z), \\
R_1 + R_2 \geq I(X;Y,Z).
\end{array}
\right\}.
\end{eqnarray*}
Also, define ${\cal R}_{p_0,in}(p(y,z|x))$ and ${\cal R}_{p_0,out}(p(y,z|x))$ to be the sets of rate pairs in ${\cal C}_{p_0,in}$ and ${\cal C}_{p_0,out}$ corresponding to the desired distribution $p(y,z|x)$.

\begin{theorem}[Coordination capacity region bounds]
\label{theorem broadcast}
The coordination capacity region ${\cal C}_{p_0}$ for empirical coordination in the broadcast network of Figure \ref{figure broadcast} is bounded by
\begin{eqnarray*}
{\cal C}_{p_0,in} & \subset & {\cal C}_{p_0} \;\; \subset \;\; {\cal C}_{p_0,out}.
\end{eqnarray*}
\end{theorem}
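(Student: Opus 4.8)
The plan is to establish the two inclusions separately: the outer bound $\mathcal{C}_{p_0}\subseteq\mathcal{C}_{p_0,out}$ by a cut-set argument with single-letterization, and the inner bound $\mathcal{C}_{p_0,in}\subseteq\mathcal{C}_{p_0}$ by a Zhang--Berger-style random coding scheme organized around a common codeword.

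For the converse I would fix an achievable triple $(R_1,R_2,p(y,z|x))$ together with a sequence of coordination codes achieving it, with messages $I=i(X^n)$, $J=j(X^n)$ and actions $Y^n=y^n(I)$, $Z^n=z^n(J)$. Since $Y^n$ is a function of $I$ and $Z^n$ of $J$, the cut-set inequalities $nR_1\ge H(I)\ge I(X^n;Y^n)$, $nR_2\ge H(J)\ge I(X^n;Z^n)$, and $n(R_1+R_2)\ge H(I)+H(J)\ge I(X^n;Y^n,Z^n)$ hold. Because $X^n$ is i.i.d., each right side super-additivizes, e.g.\ $I(X^n;Y^n)\ge\sum_i I(X_i;Y_i)$; introducing a uniform time index $Q$ and using that $X_Q$ is independent of $Q$ gives $R_1\ge I(X_Q;Y_Q)$ and likewise for the other two bounds. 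Finally, as in the proof of Theorem~\ref{theorem no common randomness}, empirical coordination forces $\mathbf{E}\|P_{X^n,Y^n,Z^n}-p_0\,p\|_{TV}\to 0$; since $\mathbf{E}[P_{X^n,Y^n,Z^n}]$ is exactly the law of $(X_Q,Y_Q,Z_Q)$, that law converges in total variation to $p_0(x)p(y,z|x)$, and continuity of mutual information on finite alphabets yields $R_1\ge I(X;Y)$, $R_2\ge I(X;Z)$, $R_1+R_2\ge I(X;Y,Z)$, i.e.\ $(R_1,R_2,p)\in\mathcal{C}_{p_0,out}$.

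For the inner bound I would fix $p(u|x,y,z)$ and rates in $\mathcal{C}_{p_0,in}$ and write $R_1=I(X;U)+\tilde R_1$, $R_2=I(X;U)+\tilde R_2$ (the split that makes the arithmetic close). Generate a common codebook of $\doteq 2^{nI(X;U)}$ sequences $u^n(m)\sim\prod p_U$, and for each $m$ two conditional codebooks $\{y^n(m,k_1)\}$, $\{z^n(m,k_2)\}$ of rates $\tilde R_1,\tilde R_2$, drawn i.i.d.\ from $\prod p_{Y|U}(\cdot\mid u_i(m))$ and $\prod p_{Z|U}(\cdot\mid u_i(m))$ and conditionally independent given $u^n(m)$. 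Node $X$, on observing a typical $X^n$, picks $m$ with $(X^n,u^n(m))$ jointly typical and then a pair $(k_1,k_2)$ making $(X^n,u^n(m),y^n(m,k_1),z^n(m,k_2))$ jointly typical for $q(x,u,y,z)=p_0(x)p(y,z|x)p(u|x,y,z)$; it sends $(m,k_1)$ to node $Y$ and $(m,k_2)$ to node $Z$, each of which simply emits the indicated codeword. When this succeeds, the joint type of $(X^n,Y^n,Z^n)$ lies within $O(\epsilon)$ of $p_0(x)p(y,z|x)$ in total variation, the rates are $R_1=I(X;U)+\tilde R_1$ and $R_2=I(X;U)+\tilde R_2$, and after derandomizing the codebook (using Theorem~\ref{theorem no common randomness} to discard any shared randomness) one recovers exactly $\mathcal{C}_{p_0,in}$.

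The main obstacle is showing that a suitable pair $(k_1,k_2)$ exists with probability tending to one, and I expect to do this by a second-moment argument on $N$, the number of jointly typical pairs. Conditioned on a typical $(X^n,u^n(m))$, a single pair is jointly typical with probability $\doteq 2^{-n[I(X;Y|U)+I(X;Z|U)+I(Y;Z|X,U)]}$, so $\mathbf{E}[N]\to\infty$ precisely under the sum constraint, while the dominant variance contributions---pairs sharing $k_1$, and pairs sharing $k_2$---are controlled exactly by the individual constraints $\tilde R_1>I(X;Y|U)$ and $\tilde R_2>I(X;Z|U)$; Chebyshev then gives $\Pr(N\ge 1)\to 1$. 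This conditional ``mutual covering'' step is where the three inequalities defining $\mathcal{C}_{p_0,in}$ originate, and it coincides with the encoding analysis of the Zhang--Berger multiple-descriptions scheme. I would also need the routine bookkeeping (vanishing probability of an atypical $X^n$ and of the two covering failures, and the $O(\epsilon)$ total-variation estimate from joint typicality), but the variance computation is the substantive content.
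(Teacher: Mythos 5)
Your proposal is correct, and while the converse matches the paper (both bounds follow by applying the two-node single-letterization argument three times---once for each of $I(X;Y)$, $I(X;Z)$, and, letting Y and Z cooperate, $I(X;Y,Z)$), your achievability takes a genuinely different route. The paper achieves ${\cal C}_{p_0,in}$ sequentially: it targets one corner of the pentagon by first covering $X^n$ with $U^n$ at rate $I(X;U)$, then covering $(X^n,U^n)$ with $Y^n$ at rate $I(X;Y|U)$, then---with node X already knowing the realized $Y^n$---covering $(X^n,Y^n,U^n)$ with $Z^n$ at rate $I(X,Y;Z|U)$, invoking Lemma~\ref{lemma generic example} and the Strong Markov Lemma at each step and relying on convexity/time-sharing for the rest of the pentagon. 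You instead run a simultaneous mutual-covering argument \`a la Zhang--Berger: conditionally independent codebooks $\{y^n(m,k_1)\}$, $\{z^n(m,k_2)\}$ given $u^n(m)$, with a second-moment bound on $N$. Your variance accounting is exactly right: the diagonal term yields the sum-rate constraint $\tilde R_1+\tilde R_2>I(X;Y|U)+I(X;Z|U)+I(Y;Z|X,U)$, and the off-diagonal terms with shared $k_1$ (resp.\ shared $k_2$) reduce, after the identity $2\mu - I(X;Y|U) - 2I(X,Y;Z|U)=I(X;Y|U)$ where $\mu$ is the single-pair exponent, to $\tilde R_1>I(X;Y|U)$ (resp.\ $\tilde R_2>I(X;Z|U)$). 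This buys you the whole pentagonal slice in one shot without time-sharing among corner points, and it avoids the Strong Markov Lemma (since you generate $Z^n$ jointly with $Y^n$ rather than trying to ``piggy-back'' on an already-selected $Y^n$), at the cost of the heavier second-moment computation; the paper's sequential scheme is lighter per step but leans on the Markov-lemma machinery and on convexity. Both establish ${\cal C}_{p_0,in}\subset{\cal C}_{p_0}$.
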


{\em Discussion:}
The regions ${\cal C}_{p_0,in}$ and ${\cal C}_{p_0,out}$ are convex.  A time-sharing random variable can be lumped into the auxiliary random variable $U$ in the definition of ${\cal C}_{p_0,in}$ to show convexity.

The inner bound ${\cal C}_{p_0,in}$ is achieved by first sending a common message, represented by $U$, to both receivers and then private messages to each.  The common message effectively correlates the two codebooks to reduce the required rates for specifying the actions $Y^n$ and $Z^n$.  The sum rate takes a penalty of $I(Y;Z|X,U)$ in order to assure that $Y$ and $Z$ are coordinated with each other as well as with $X$.

The outer bound ${\cal C}_{p_0,out}$ is a consequence of applying the two-node result of Theorem \ref{theorem two nodes} in three different ways, once for each receiver, and once for the pair of receivers with full cooperation.

For many distributions, the bounds in Theorem \ref{theorem broadcast} are tight and the rate-coordination region ${\cal R}_{p_0} = {\cal R}_{p_0,in} = {\cal R}_{p_0,out}$.  This is true for all distributions where $X$, $Y$, and $Z$ form a Markov chain in any order.  It is also true for distributions where $Y$ and $Z$ are independent or where $X$ is independent pairwise with both $Y$ and $Z$.  For each of these cases, Table \ref{table broadcast} shows the choice of auxiliary random variable $U$ in the definition of ${\cal R}_{p_0,in}$ that yields ${\cal R}_{p_0,in} = {\cal R}_{p_0,out}$.  In case 5, the region ${\cal R}_{p_0,in}$ is optimized by time-sharing between $U=Y$ and $U=Z$.

\begin{table}[h]
\caption{Known capacity region (cases where ${\cal R}_{p_0,in} = {\cal R}_{p_0,out}$).}
\label{table broadcast}
\centering
\begin{tabular}{|l|c|l|}
\hline
& Condition & Auxiliary \\
\hline
Case 1: & $Y-X-Z$ & $U=\emptyset$ \\
Case 2: & $X-Y-Z$ & $U=Z$ \\
Case 3: & $X-Z-Y$ & $U=Y$ \\
Case 4: & $Y \perp Z$ & $U=\emptyset$ \\
Case 5: & $X \perp Y$ and $X \perp Z$ & $U=Y, U=Z$ \\
\hline
\end{tabular}
\end{table}

Notice that if $R_2=0$ in the broadcast network we find ourselves in the isolated node setting of Section \ref{subsection isolated}.  Consider a particular distribution $p_0(x)p(z)p(y|x,z)$ that could be achieved in the isolated node network.  In the setting of the broadcast network, it might seem that the message from node X to node Z is useless for achieving $p_0(x)p(z)p(y|x,z)$, since $X$ and $Z$ are independent.  However, this is not the case.  For some desired distributions $p_0(x)p(z)p(y|x,z)$, a positive rate $R_2$ in the broadcast network actually helps reduce the required rate $R_1$.

To highlight a specific case where a message to node Z is useful even though $Z$ is independent of $X$ in the desired distribution, consider the following.  Let $\overline{p}_0(x)\overline{p}(z)\overline{p}(y|x,z)$ be the uniform distribution over all combinations of binary $x$, $y$, and $z$ with even parity.  The variables $X$, $Y$, and $Z$ are each Bernoulli-half and pairwise independent, and $X \oplus Y \oplus Z = 0$, where $\oplus$ is addition modulo two.  This distribution satisfies both case 4 and case 5 from Table \ref{table broadcast}, so we know that ${\cal R}_{\overline{p}_0} = {\cal R}_{\overline{p}_0,out}$.  Therefore, the rate-coordination region ${\cal R}_{\overline{p}_0}(\overline{p}(y,z|x))$ is characterized by a single inequality,
\begin{eqnarray*}
{\cal R}_{\overline{p}_0}(\overline{p}(y,z|x)) & = & \{ (R_1,R_2) \in \mathbb{R}_+^2 \; : \; R_1 + R_2 \geq 1 \mbox{ bit} \}.
\end{eqnarray*}
The minimum rate $R_1$ needed when no message is sent from node X to node Z is 1 bit, while the required rate in general is $1 - R_2$ bits.

The following task assignment problem has practical importance.

\begin{example}[Task assignment]
\label{example broadcast}
Consider a task assignment setting similar to Example \ref{example cascade}, where three tasks are to be assigned without duplication to the three nodes X, Y, and Z, and the assignment for node X is chosen uniformly at random by nature.  A distribution capturing this coordination behavior is the uniform distribution over the six permutations of task assignments.  Let $\hat{p}_0(x)$ be the uniform distribution on the set $\{0,1,2\}$, and let $\hat{p}(y,z|x)$ give equal probability to both of the assignments to Y and Z that produce different tasks at the three nodes.  Figure \ref{figure broadcast three card} illustrates a valid outcome of the task assignments.

\begin{figure}[h]
\psfrag{l1}[][][0.8]{$\;\;\;\;\; X$}
\psfrag{l2}[][][0.8]{$Y \;\;\;\;\;$}
\psfrag{l3}[][][0.8]{$Z \;\;\;\;\;$}
\psfrag{l4}[][][0.8]{}
\psfrag{l5}[][][0.8]{$R_1$}
\psfrag{l6}[][][0.8]{$R_2$}
\psfrag{c1}[][][0.8]{$1$}
\psfrag{c2}[][][0.8]{$0$}
\psfrag{c3}[][][0.8]{$2$}
\centerline{\includegraphics[width=6cm]{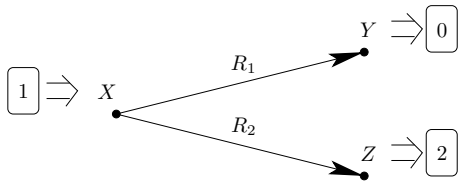}}
\caption{{\em Task assignment in the broadcast network.}  Three tasks, numbered $0$, $1$, and $2$, are distributed among three nodes X, Y, and Z in the broadcast network setting.  The task assignment for X is given randomly by nature.  What rates $R_1$ and $R_2$ are necessary to allow Y and Z to choose different tasks from X and each other?}
\label{figure broadcast three card}
\end{figure}

We can explore the achievable rate region ${\cal R}_{\hat{p}_0}(\hat{p}(y,z|x))$ by using the bounds in Theorem \ref{theorem broadcast}.  In this process, we find rates as low as $\log 3 - \log \phi$ to be sufficient on each link, where $\phi = \frac{\sqrt{5}+1}{2}$ is the golden ratio.

\begin{figure}[h]
\psfrag{l1}[][][0.8]{A}
\psfrag{l2}[][][0.8]{B}
\psfrag{l3}[][][0.8]{C}
\psfrag{l4}[][][0.8]{D}
\psfrag{l5}[][][0.8]{$R_1$}
\psfrag{l6}[][][0.8]{$R_2$}
\psfrag{l7}[][][0.7]{$1$}
\psfrag{l8}[][][0.7]{$\; 1$}
\centerline{\includegraphics[width=5cm]{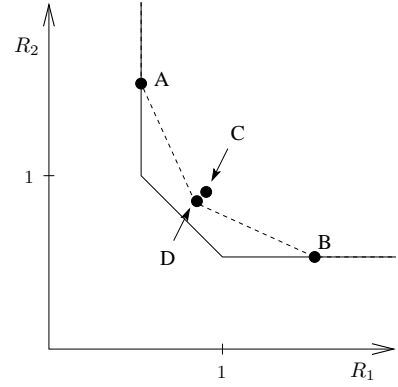}}
\caption{{\em Rate region bounds for task assignment.}  Points $A$, $B$, $C$, and $D$ are achievable rates for the task assignment problem in the broadcast network.  The solid line indicates the outer bound ${\cal R}_{\hat{p}_0,out}(\hat{p}(y,z|x))$, and the dashed line indicates a subset of the inner bound ${\cal R}_{\hat{p}_0,in}(\hat{p}(y,z|x))$.  Points $A$ and $B$ are achieved by letting $U = \emptyset$.  Point $C$ uses $U$ as time-sharing, independent of $X$.  Point $D$ uses $U$ to describe $X$ partially to each of the nodes Y and Z.}
\label{figure three card rates}
\end{figure}

First consider the points in the inner bound ${\cal R}_{\hat{p}_0, in}(\hat{p}(y,z|x))$ that are achieved without the use of the auxiliary variable $U$.  This consists of a pentagonal region of rate pairs.  The extreme point $A = (\log (3/2), \log 3)$, shown in Figure \ref{figure three card rates}, corresponds to the a simple communication approach.  First node X coordinates with node Y.  Theorem \ref{theorem two nodes} for the two-node network declares the minimum rate needed to be $R_1 = \log (3/2)$.  After action $Y$ has been established, node X specifies action $Z$ in it's entire detail using the rate $R_2 = \log 3$.  A complementary scheme achieves the extreme point $B$ in Figure \ref{figure three card rates}.  The sum rate achieved by these points is $R_1 + R_2 = 2 (\log_2 3 - 1/2)$ bits.

We can explore more of the inner bound ${\cal R}_{\hat{p}_0, in}(\hat{p}(y,z|x))$ by adding the element of time-sharing.  That is, use an auxiliary variable $U$ that is independent of $X$.  As long as we can assign tasks in the network so that $X$, $Y$, and $Z$ are each unique, then there will be a method of using time-sharing that will achieve the desired uniform distribution over unique task assignments $\hat{p}$.  For example, devise six task assignment schemes from the one successful scheme by mapping the tasks onto the six different permutations of $\{0,1,2\}$.  By time-sharing equally among these six schemes, we achieve the desired distribution.

With the idea of time-sharing in mind, we achieve a better sum rate by restricting the domain of $Y$ to $\{0,1\}$ and $Z$ to $\{0,2\}$ and letting them be functions of $X$ in the following way:
\begin{eqnarray}
Y & = & \left\{
\begin{array}{ll}
1, & X \neq 1, \\
0, & X = 1,
\end{array}
\right. \\
Z & = & \left\{
\begin{array}{ll}
2, & X \neq 2, \\
0, & X = 2.
\end{array}
\right.
\end{eqnarray}
We can say that $Y$ takes on a default value of $1$, and $Z$ takes on a default value of $2$.  Node X just tells nodes Y and Z when they need to get out of the way, in which case they switch to task $0$.  To achieve this we only need $R_1 \geq H(Y) = \log_3 - 2/3$ bits and $R_2 \geq H(Z) = \log_2 3 - 2/3$ bits, represented by point $C$ in Figure \ref{figure three card rates}.

Finally, we achieve an even smaller sum rate in the inner bound ${\cal R}_{\hat{p}_0, in}(\hat{p}(y,z|x))$ by using a more interesting choice of $U$ in addition to time-sharing.\footnote{Time-sharing is also lumped into $U$, but we ignore that here to simplify the explanation.}  Let $U \in \{0,1,2\}$ be correlated with $X$ in such a way that they are equal more often than one third of the time.  Now restrict the domains of $Y$ and $Z$ based on $U$.  The actions $Y$ and $Z$ are functions of $X$ and $U$ defined as follows:
\begin{eqnarray}
Y & = & \left\{
\begin{array}{ll}
U + 1 \mbox{ mod 3}, & X \neq U + 1 \mbox{ mod 3}, \\
U, & X = U + 1 \mbox{ mod 3},
\end{array}
\right. \\
Z & = & \left\{
\begin{array}{ll}
U - 1 \mbox{ mod 3}, & X \neq U - 1 \mbox{ mod 3}, \\
U, & X = U - 1 \mbox{ mod 3}.
\end{array}
\right.
\end{eqnarray}
This corresponds to sending a compressed description of $X$, represented by $U$, and then assigning default values to $Y$ and $Z$ centered around $U$.  The actions $Y$ and $Z$ sit on both sides of $U$ and only move when X tells them to get out of the way.  The description rates needed for this method are
\begin{eqnarray}
R_1 & \geq & I(X;U) + I(X;Y|U) \nonumber \\
& = & I(X;U) + H(Y|U). \nonumber \\
R_2 & \geq & I(X;U) + I(X;Z|U) \nonumber \\
& = & I(X;U) + H(Z|U).
\end{eqnarray}

Using a symmetric conditional distribution from $X$ to $U$,  calculus provides the following parameters:
\begin{eqnarray}
P(U=u|X=x) & = & \left\{
\begin{array}{ll}
\frac{1}{\sqrt{5}}, & u = x, \\
\frac{1}{\phi \sqrt{5}}, & u \neq x,
\end{array}
\right. \\
\end{eqnarray}
where $\phi = \frac{\sqrt{5}+1}{2}$ is the golden ratio.  This level of compression results in a very low rate of description, $I(X;U) \approx 0.04$ bits, for sending $U$ to each of the nodes Y and Z.

The description rates needed for this method are as follows, and are represented by Point D in Figure \ref{figure three card rates}:
\begin{eqnarray}
R_1 & \geq & I(X;U) + H(Y|U) \nonumber \\
%& = & \log 3 - \frac{1}{2} \log 5 - \frac{2}{\phi \sqrt{5}} \log \phi + I(X;Y|U) \nonumber \\
& = & \log 3 - \frac{1}{2} \log 5 - \frac{2}{\phi \sqrt{5}} \log \phi + H(Y|U) \nonumber \\
& = & \log 3 - \frac{1}{2} \log 5 - \frac{2}{\phi \sqrt{5}} \log \phi + H\left( \frac{1}{\phi \sqrt{5}} \right) \nonumber \\
& = & \log 3 - \frac{2}{\phi \sqrt{5}} \log \phi + \frac{1}{\phi \sqrt{5}} \log \phi - \frac{\phi}{\sqrt{5}} \log \phi \nonumber \\
& = & \log 3 - \left( \phi + \frac{1}{\phi} \right) \frac{1}{\sqrt{5}} \log \phi \nonumber \\
& = & \log 3 - \log \phi, \nonumber \\
R_2 & \geq & \log 3 - \log \phi,
\end{eqnarray}
where $H$ is the binary entropy function.  The above calculation is assisted by observing that $\phi = \frac{1}{\phi}+1$ and $\phi + \frac{1}{\phi} = \sqrt{5}$.
\end{example}

\subsection{Cascade multiterminal}
\label{subsection relay}

We now give bounds on the coordination capacity region for the cascade-multiterminal network of Figure \ref{figure relay}.  In this setting, node X and node Y each have an action specified by nature according to the joint distribution $p_0(x,y)$.  Node X sends a message at rate $R_1$ to node Y.  Based on its own action $Y$ and the incoming message about $X$, node Y sends a message to node Z at rate $R_2$.  Finally, node Z produces an action based on the message from node Y.

\begin{figure}[h]
\psfrag{l1}[][][.8]{$X \sim p_0(x,y)$}
\psfrag{l2}[][][.8]{$Y \sim p_0(x,y)$}
\psfrag{l3}[][][.8]{$Z$}
\psfrag{l4}[][][.8]{$R_1$}
\psfrag{l5}[][][.8]{$R_2$}
\psfrag{l6}[][][.7]{Node X}
\psfrag{l7}[][][.7]{Node Y}
\psfrag{l8}[][][.7]{Node Z}
\centerline{\includegraphics[width=.4\textwidth]{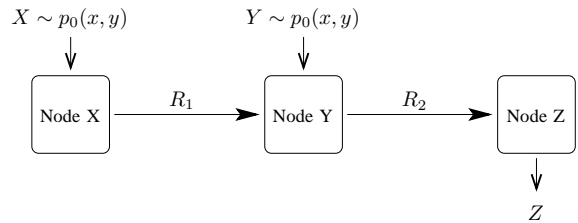}}
\caption{{\em Cascade multiterminal.}  The actions $X$ and $Y$ are chosen by nature according to $p_0(x,y)$.  A message is sent from node X to node Y at rate $R_1$.  Node Y then constructs a message for node Z based on the received message from node X and its own action.  Node Z produces an action based on the message it receives from node Y.  Bounds on the coordination capacity region ${\cal C}_{p_0}$ are given in Theorem \ref{theorem relay}.}
\label{figure relay}
\end{figure}

The $(2^{nR_1},2^{nR_2},n)$ coordination codes consist of an encoding function
\begin{eqnarray*}
i & : & {\cal X}^n \longrightarrow \{1,...,2^{nR_1}\},
\end{eqnarray*}
a recoding function
\begin{eqnarray*}
j & : & \{1,...,2^{nR_1}\} \times {\cal Y}^n \longrightarrow \{1,...,2^{nR_2}\},
\end{eqnarray*}
and a decoding function
\begin{eqnarray*}
z^n & : & \{1,...,2^{nR_2}\} \longrightarrow {\cal Z}^n.
\end{eqnarray*}

The actions $X^n$ and $Y^n$ are chosen by nature i.i.d. according to $p_0(x,y)$, and the actions $Z^n$ are functions of $X^n$ and $Y^n$ given by implementing the coordination code as
\begin{eqnarray*}
Z^n & = & z^n(j(i(X^n),Y^n)).
\end{eqnarray*}

Node Y is playing two roles in this network.  It acts partially as a relay to send on the message from node X to node Z, while at the same time sending a message about its own actions to node Z.  This situation applies to a variety of source coding scenarios.  Nodes X and Y might both be sensors in a sensor network, or node Y can be thought of as a relay for connecting node X to node Z, with side information $Y$.

This network is similar to multiterminal source coding considered by Berger and Tung \cite{Berger78} in that two sources of information are encoded in a distributed fashion.  In fact, the expansion to accommodate cooperative encoders \cite{Kaspi_berger82} can be thought of as a generalization of our network.  However, previous work along these lines is missing one key aspect of efficiency, which is to partially relay the encoded information without changing it.

Vasudevan, Tian, and Diggavi \cite{vasudevan} looked at a similar cascade communication system with a relay. In their setting, the relay's information $Y$ is a degraded version of the decoder's side information, and the decoder is only interested in recovering $X$. Because the relay's observations contain no additional information for the decoder, the relay does not face the dilemma of mixing in some of the side information into its outgoing message. In our cascade multiterminal network, the decoder does not have side information. Thus, the relay is faced with coalescing the two pieces of information $X$ and $Y$ into a single message.  Other research involving similar network settings can be found in \cite{Gu06}, where Gu and Effros consider a more general network but with the restriction that the action $Y$ is a function of the action $X$, and \cite{Bakshi07}, where Bakshi et. al. identify the optimal rate region for lossless encoding of independent sources in a longer cascade (line) network.

The set of rate-coordination tuples ${\cal C}_{p_0,in}$ is an inner bound on the coordination capacity region, given by
\begin{eqnarray*}
{\cal C}_{p_0,in} & \triangleq & \phantom{fill space with a lot of nothing much, so I think}
\end{eqnarray*}
\begin{eqnarray*}
& & \left\{
\begin{array}{l}
(R_1,R_2,p(z|x,y)) \; : \; \\
\exists p(u,v|x,y,z) \mbox{ such that} \\
p(x,y,z,u,v) = p_0(x,y)p(u,v|x)p(z|y,u,v) \\
R_1 \geq I(X;U,V|Y), \\
R_2 \geq I(X;U) + I(Y,V;Z|U).
\end{array}
\right\}.
\end{eqnarray*}
The set of rate-coordination tuples ${\cal C}_{p_0,out}$ is an outer bound on the coordination capacity region, given by
\begin{eqnarray*}
{\cal C}_{p_0,out} & \triangleq & \left\{
\begin{array}{l}
(R_1,R_2,p(z|x,y)) \; : \; \\
\exists p(u|x,y,z) \mbox{ such that} \\
p(x,y,z,u) = p_0(x,y)p(u|x)p(z|y,u) \\
|{\cal U}| \leq |{\cal X}||{\cal Y}||{\cal Z}|, \\
R_1 \geq I(X;U|Y), \\
R_2 \geq I(X,Y;Z).
\end{array}
\right\}.
\end{eqnarray*}
Also, define ${\cal R}_{p_0,in}(p(z|x,y))$ and ${\cal R}_{p_0,out}(p(z|x,y))$ to be the sets of rate pairs in ${\cal C}_{p_0,in}$ and ${\cal C}_{p_0,out}$ corresponding to the desired distribution $p(z|x,y)$.

\begin{theorem}[Coordination capacity region bounds]
\label{theorem relay}
The coordination capacity region ${\cal C}_{p_0}$ for empirical coordination in the cascade multiterminal network of Figure \ref{figure relay} is bounded by
\begin{eqnarray*}
{\cal C}_{p_0,in} & \subset & {\cal C}_{p_0} \;\; \subset \;\; {\cal C}_{p_0,out}.
\end{eqnarray*}
\end{theorem}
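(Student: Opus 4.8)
The statement has two halves, and I would prove them by the standard achievability/converse split. The inner bound ${\cal C}_{p_0,in}\subset{\cal C}_{p_0}$ is established by constructing a sequence of coordination codes from layered random codebooks, where Node X describes itself through auxiliary sequences $U^n$ and $V^n$ and Node Y then partly relays and partly re-encodes this description for Node Z. The outer bound ${\cal C}_{p_0}\subset{\cal C}_{p_0,out}$ is established by single-letterizing an arbitrary coordination code. Both arguments parallel the ones for the cascade network (Theorem \ref{theorem cascade}) and the broadcast network (Theorem \ref{theorem broadcast}); the wrinkle here is keeping straight which information Node Y forwards unchanged and which it recompresses together with its own action $Y^n$.

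\emph{Inner bound.} Fix a conditional pmf $p(u,v|x)p(z|y,u,v)$ so that $p(x,y,z,u,v)=p_0(x,y)p(u,v|x)p(z|y,u,v)$. I would generate a $U^n$-codebook of about $2^{nI(X;U)}$ codewords i.i.d. $\sim\prod p(u)$, and for each $u^n$ a $V^n$-codebook of about $2^{nI(X;V|U)}$ codewords i.i.d. $\sim\prod p(v|u_i)$, then randomly partition the $(u^n,v^n)$ pairs into $2^{nR_1}$ bins. Node X finds a jointly typical $(X^n,U^n,V^n)$ (possible by the covering lemma) and sends the bin index; since Node Y holds the statistically related side information $Y^n$ and $(U,V)-X-Y$ is Markov, a Wyner--Ziv decoding argument recovers $(U^n,V^n)$ at Node Y as long as $R_1>I(X;U,V|Y)$. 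Node Y then transmits to Node Z, over its rate-$R_2$ link, first the $U^n$-codeword index (costing $I(X;U)$ bits per symbol, with no binning because Node Z has no side information) and then, from a $Z^n$-codebook of about $2^{nI(Z;Y,V|U)}$ codewords drawn $\sim\prod p(z|u_i)$, the index of a $z^n$ jointly typical with $(u^n,v^n,y^n)$ (costing $I(Y,V;Z|U)$ bits), so $R_2>I(X;U)+I(Y,V;Z|U)$ suffices for the encoding steps. Finally I would invoke the strong Markov lemma, Theorem \ref{theorem strong markov}, with the Markov chain $Z-(Y,U,V)-X$ to conclude that $(X^n,Y^n,U^n,V^n,Z^n)$ is jointly typical with high probability; marginalizing, the joint type of $(X^n,Y^n,Z^n)$ is within any prescribed $\epsilon$ of $p_0(x,y)p(z|x,y)$, and after a standard expurgation of codebooks empirical coordination follows.

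\emph{Outer bound.} Given any $(2^{nR_1},2^{nR_2},n)$ coordination code achieving $p_0(x,y)p(z|x,y)$, let $I=i(X^n)$ and $J=j(I,Y^n)$ be the messages, so $Z^n=z^n(J)$ depends only on $(I,Y^n)$. I would take the auxiliary $U_i\triangleq(I,X^{i-1},Y_{i+1}^n)$. Using that the pairs $(X_k,Y_k)$ are i.i.d., one verifies the two Markov chains $U_i-X_i-Y_i$ and $Z_i-(Y_i,U_i)-X_i$, matching the factorization required in ${\cal C}_{p_0,out}$. From $nR_1\ge H(I)\ge I(X^n;I|Y^n)$ and the chain rule, with $H(X_i|X^{i-1},Y^n)=H(X_i|Y_i)$ and $H(X_i|I,X^{i-1},Y^n)=H(X_i|U_i,Y_i)$, I get $nR_1\ge\sum_i I(X_i;U_i|Y_i)$; from $nR_2\ge H(J)\ge I(X^n,Y^n;J)\ge I(X^n,Y^n;Z^n)\ge\sum_i I(X_i,Y_i;Z_i)$, where the last inequality is ``conditioning reduces entropy.'' A uniform time-sharing variable $Q$ independent of $(X^n,Y^n)$, together with $U=(U_Q,Q)$ and $(X,Y,Z)=(X_Q,Y_Q,Z_Q)$, turns these into $R_1\ge I(X;U|Y)$ and $R_2\ge I(X,Y;Z)$ with the Markov structure preserved. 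Because the code achieves empirical coordination, the distribution of $(X,Y,Z)$, which equals the expected joint type, converges to $p_0(x,y)p(z|x,y)$; applying the support lemma to bound $|{\cal U}|$ and passing to a convergent subsequence, continuity of mutual information on the finite alphabet gives the inequalities for the target distribution, i.e.\ $(R_1,R_2,p(z|x,y))\in{\cal C}_{p_0,out}$.

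\emph{Main obstacle.} The delicate point is the achievability: after Node Y decodes $(U^n,V^n)$ from a binned message and Node Z reconstructs $Z^n$ from a \emph{second} re-encoded description, the five sequences $(X^n,Y^n,U^n,V^n,Z^n)$ must be simultaneously typical for the long factorization $p_0(x,y)p(u,v|x)p(z|y,u,v)$. The ordinary Markov lemma does not directly apply, since at the last stage the conditioning variables $U^n,V^n$ are themselves random codewords rather than i.i.d.\ sequences; this is exactly the regime handled by Theorem \ref{theorem strong markov}. A lesser but real subtlety on the converse side is that the naive choice $U_i=(I,X^{i-1})$ fails the Markov chain $Z_i-(Y_i,U_i)-X_i$ --- the message $I$ correlates $X_i$ with $X_{i+1}^n$ and hence with $Y_{i+1}^n$, on which $Z_i$ depends --- so one must fold $Y_{i+1}^n$ into $U_i$, as above, to make both required Markov relations hold.
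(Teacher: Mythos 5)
Your proposal is correct and follows essentially the same path as the paper's proof. For achievability, both you and the paper split the description from Node X into a ``relay'' component $U^n$ (forwarded to Node Z at its raw codebook rate $I(X;U)$) and a ``private'' component $V^n$ (binned together with $U^n$ against Node Y's side information to get $R_1>I(X;U,V|Y)$), and both then re-encode $(Y^n,V^n)$ conditioned on $U^n$ at rate $I(Y,V;Z|U)$; the final invocation of the Strong Markov Lemma across the chain $X-(Y,U,V)-Z$ is exactly the paper's closing step, and it is also the crux you identify in the ``main obstacle'' paragraph. The only substantive deviation is in the converse: the paper sets $U_i=(I,X^{i-1},Y^{i-1},Y_{i+1}^n)$, which makes $Z_i$ a deterministic function of $(Y_i,U_i)$ and renders the Markov chain $X_i-(Y_i,U_i)-Z_i$ immediate; you drop $Y^{i-1}$ and rely instead on the additional Markov fact $Y^{i-1}-X^{i-1}-(X_i,I,Y_i^n)$, which holds by the i.i.d. structure but requires an extra line of justification. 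Both choices yield the same single-letter outer bound, so this is a stylistic rather than a mathematical difference, and your remark about why $U_i=(I,X^{i-1})$ alone fails the required Markov property is accurate.
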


{\em Discussion:}
The regions ${\cal C}_{p_0,in}$ and ${\cal C}_{p_0,out}$ are convex.  A time-sharing random variable can be lumped into the auxiliary random variable $U$ in the definition of ${\cal C}_{p_0,in}$ to show convexity.

The inner bound ${\cal C}_{p_0,in}$ is achieved by dividing the message from node X into two parts.  One part, represented by $U$, is sent to all nodes, relayed by node Y to node Z.  The other part, represented by $V$, is sent only to node Y.  Then node Y recompresses $V$ along with $Y$.

The outer bound ${\cal C}_{p_0,out}$ is a combination of the Wyner-Ziv \cite{WynerZiv76} bound for source coding with side information at the decoder, obtained by letting node Y and node Z fully cooperate, and the two-node bound of Theorem \ref{theorem two nodes}, obtained by letting node X and node Y fully cooperate.

For some distributions, the bounds in Theorem \ref{theorem relay} are tight and the rate-coordination region ${\cal R}_{p_0} = {\cal R}_{p_0,in} = {\cal R}_{p_0,out}$.  This is true for all distributions where $X-Y-Z$ form a Morkov chain or $Y-X-Z$ form a Markov chain.  In the first case, where $X-Y-Z$ form a Morkov chain, choosing $U=V=\emptyset$ in the definition of ${\cal C}_{p_0,in}$ reduces the region to all rate pairs such that $R_2 \geq I(Y;Z)$, which meets the outer bound ${\cal C}_{p_0,out}$.  In the second case, where $Y-X-Z$ form a Morkov chain, choosing $U=Z$ and $V=\emptyset$ reduces the region to all rate pairs such that $R_1 \geq I(X;Z|Y)$ and $R_2 \geq I(X;Z)$, which meets the outer bound.  Therefore, we find as special cases that the bounds in Theorem \ref{theorem relay} are tight if $X$ is a function of $Y$, if $Y$ is a function of $X$, or if the reconstruction $Z$ is a function of $X$ and $Y$ \cite{Cuff09}.

Table \ref{table relay} shows choices of $U$ and $V$ from ${\cal R}_{p_0,in}$ that yield ${\cal R}_{p_0,in} = {\cal R}_{p_0,out}$ in each of the above cases.  In case 3, $V$ is selected to minimize $R_1$ along the lines of \cite{Orlitsky01}.

\begin{table}[h]
\caption{Known capacity region (cases where ${\cal R}_{p_0,in} = {\cal R}_{p_0,out}$).}
\label{table relay}
\centering
\begin{tabular}{|l|c|l|}
\hline
& Condition & Auxiliary \\
\hline
Case 1: & $X-Y-Z$ & $U=\emptyset, V=\emptyset$ \\
Case 2: & $Y-X-Z$ & $U=Z, V=\emptyset$ \\
Case 3: & $Z=f(X,Y)$ & $U=\emptyset$ \\
\hline
\end{tabular}
\end{table}

\begin{example}[Task assignment]
\label{example relay task assignment}
Consider again a task assignment setting similar to Example \ref{example cascade}, where three tasks are to be assigned without duplication to the three nodes X, Y, and Z, and the assignments for nodes X and Y are chosen uniformly at random by nature among all pairs of tasks where $X \neq Y$.  A distribution capturing this coordination behavior is the uniform distribution over the six permutations of task assignments.  Let $\hat{p}_0(x,y)$ be the distributions obtained by sampling $X$ and $Y$ uniformly at random from the set $\{1,2,3\}$ without replacement, and let $\hat{p}(z|x,y)$ be the degenerate distribution where $Z$ is the remaining unassigned task in $\{1,2,3\}$.  Figure \ref{figure relay three card} illustrates a valid outcome of the task assignments.
\begin{figure}[h]
\psfrag{l1}[][][0.8]{$\;\;\;\;\; X$}
\psfrag{l2}[][][0.8]{$Y \;\;\;\;\;$}
\psfrag{l3}[][][0.8]{$Z \;\;\;\;\;$}
\psfrag{l4}[][][0.8]{}
\psfrag{l5}[][][0.8]{$R_1$}
\psfrag{l6}[][][0.8]{$R_2$}
\psfrag{c1}[][][0.8]{$3$}
\psfrag{c2}[][][0.8]{$1$}
\psfrag{c3}[][][0.8]{$2$}
\centerline{\includegraphics[width=6cm]{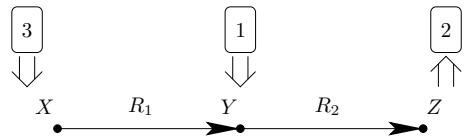}}
\caption{{\em Task assignment in the cascade multiterminal network.}  Three tasks, numbered $1$, $2$, and $3$, are distributed among three nodes X, Y, and Z in the cascade multiterminal network setting.  The task assignments for X and Y are given randomly by nature but different from each other.  What rates $R_1$ and $R_2$ are necessary to allow Z to choose a different task from both X and Y?}
\label{figure relay three card}
\end{figure}

Task assignment in the cascade multiterminal network amounts to computing a function $Z(X,Y)$, and the bounds in Theorem \ref{theorem relay} are tight in such cases.  The rate-coordination region ${\cal R}_{\hat{p}_0}(\hat{p}(z|x,y))$ is given by
\begin{eqnarray*}
{\cal R}_{\hat{p}_0}(\hat{p}(z|x,y)) & = & \left\{ (R_1,R_2) \; : \;
\begin{array}{l}
R_1 \geq \log 2, \\
R_2 \geq \log 3.
\end{array}
\right\}.
\end{eqnarray*}
This is achieved by letting $U=\emptyset$ and $V=X$ in the definition of ${\cal C}_{p_0,in}$.  To show that this region meets the outer bound ${\cal C}_{p_0,out}$, make the observation that $I(X;U|Y) \geq I(X;Z|Y)$ in relation to the bound on $R_1$, since $X - (Y,U) - Z$ forms a Markov chain.
\end{example}
\section{Strong Coordination}
\label{section strong coordination}

So far we have examined coordination where the goal is to generate $Y^n$ through communication based on $X^n$ so that the joint type $P_{X^n,Y^n}(x,y)$ is equal to the desired distribution $p_0(x)p(y|x)$.  This goal relates to the joint behavior at the nodes in the network averaged over time.  There is no imposed requirement that $Y^n$ be random, and the order of the sequence of the $(X_i,Y_i)$ pairs doesn't matter.

How different does the problem become if we actually want the actions at the various nodes in the network to be random according to a desired joint distribution?  In this vein, we turn to a stronger notion of cooperation which we call strong coordination.  We require that the induced distribution over the entire coding block $\tilde{p}(x^n,y^n)$ (induced by the coordination code) be close to the target distribution $p(x^n,y^n) = \prod_{i=1}^n p_0(x_i)p(y_i|x_i)$---so close that a statistician could not tell the difference, based on $(X^n,Y^n)$, of whether $(X^n,Y^n) \sim \tilde{p}(x^n,y^n)$ or $(X^n,Y^n) \sim p(x^n,y^n)$.

Clearly this new strong coordination objective is more demanding than empirical coordination---after all, if one were to generate random actions, i.i.d. in time, according to the appropriate joint distribution, then the empirical distribution would also follow suit.  But in some settings it is crucial for the coordinated behavior to be random.  For example, in situations where an adversary is involved, it might be important to maintain a mystery in the sequence of actions that are generated in the network.

Strong coordination has applications in cooperative game theory, discussed in \cite{cuff08}.  Suppose a team shares the same payoff in a repeated game setting.  An opponent who tries to anticipate and exploit patterns in the team's combined actions will be adequately combatted by strong coordination according to a well-chosen joint distribution.

\subsection{Problem specifics}
\label{subsection strong definitions}

Most of the definitions relating to empirical coordination in Section \ref{subsection definitions} carry over to strong coordination, including the notions of coordination codes and induced distributions.  However, in the context of strong coordination, achievability has nothing to do with the joint type.  Here we define strong achievability to mean that the distribution of the time-sequence of actions in the network is close in total variation to the desired joint distribution, i.i.d. in time.  We discuss the strong coordination capacity region $\underline{\cal C}_{p_0}$, like the region of Definition \ref{definition coordination capacity region}, but instead defined by this notion of strong achievability.

\begin{definition}[Strong achievability]
\label{definition strong achievability}
A desired distribution $p(x,y,z)$ is strongly achievable if there exists a sequence of (non-deterministic) coordination codes such that the total variation between the induced distribution $\tilde{p}(x^n,y^n,z^n)$ and the i.i.d. desired distribution goes to zero.  That is,
\begin{eqnarray*}
\left\| \tilde{p}(x^n,y^n,z^n) - \prod_{i=1}^n p(x_i,y_i,z_i) \right\|_{TV} \longrightarrow 0.
\end{eqnarray*}
\end{definition}

A non-deterministic coordination code is a deterministic code that utilizes an extra argument for each encoder and decoder which is a random variable independent of all the other variables and actions.  It seems quite reasonable to allow the encoders and decoders to use private randomness during the implementation of the coordination code.  This allowance would have also been extended to the empirical coordination framework of sections \ref{section weak coordination}, \ref{section complete results}, and \ref{section partial results}; however, randomized encoding and decoding is not beneficial in that framework because the objective has nothing to do with producing random actions (appropriately distributed).  This claim is similar to Theorem \ref{theorem no common randomness}.  Thus, non-deterministic coordination codes do not improve the empirical coordination capacity over deterministic coordination codes.

Common randomness plays a crucial role in achieving strong coordination.  For instance, in a network with no communication, only independent actions can be generated at each node without common randomness, but actions can be generated according to any desired joint distribution if enough common randomness is available, as is illustrated in Figure \ref{f_all_dist} of Section \ref{section introduction}.  In addition, for each desired joint distribution we can identify a specific bit-rate of common randomness that must be available to the nodes in the network.  This motivates us to deal with common randomness more precisely.

Aside from the communication in the network, we allow common randomness to be supplied to each node.  However, to quantify the amount of common randomness, we limit it to a rate of $R_0$ bits per action.  For an $n$-block coordination code, $\omega$ is uniformly distributed on the set $\Omega = \{1,...,2^{nR_0}\}$.  In this way, common randomness is viewed as a resource alongside communication.

\subsection{Preliminary observations}
\label{subsection strong observations}

The strong coordination capacity region $\underline{\cal C}_{p_0}$ is not convex in general.  This becomes immediately apparent when we consider a network with no communication and without any common randomness.  An arbitrary joint distribution is not strongly achievable without communication or common randomness, but any extreme point in the probability simplex corresponds to a degenerate distribution that is trivially achievable.  Thus we see that convex combinations of achievable points in the strong coordination capacity region are not necessarily strongly achievable, and cannot be achieved through simple time-sharing as was done for empirical coordination.

We use total variation as a measurement of fidelity for the distribution of the actions in the network.  This has a number of implications.  If two distributions have a small total variation between them, then a hypothesis test cannot reliably tell them apart.  Additionally, the expected value of a bounded function of these random variables cannot differ by much.  Steinberg and Verd{\'u}, for example, also use total variation as one of a handful of fidelity criteria when considering the simulation of random variables in \cite{steinberg}.  On the other hand, Wyner used normalized relative entropy as his measurement of error for generating random variables in \cite{wyner}.  Neither quantity, total variation or normalized relative entropy, is dominated by the other in general (because of the normalization).  However, relative entropy would give infinite penalty if the support of the block-distribution of actions is not contained in the support of the desired joint distribution.  We find cases where the rates required under the constraint of normalized relative entropy going to zero are unpleasantly high.  For instance, lossless source coding would truly have to be lossless, with zero error.

Based on the success of random codebooks in information theory and source coding in particular, it seems hopeful that we might always be able to use common randomness to augment a coordination code intended for empirical coordination to result in a randomized coordination code that achieves strong coordination.  Bennett et. al. demonstrate this principle for the two-node setting with their reverse Shannon theorem \cite{bennett2002}.  They use common randomness to generate a random codebook.  Then the encoder synthesizes a memoryless channel and finds a sequence in the codebook with the same joint type as the synthesized output.  Will methods like this work in other network coordination settings as well?  The following conjecture makes this statement precise and is consistent with both networks considered for strong coordination in this section of the paper.
\begin{conjecture}[Strong meets empirical coordination]
\label{conjecture strong equals empirical}
With enough common randomness, for instance if $\omega \sim \mbox{Unif}\{[0,1]\}$, the strong coordination capacity region is the same as the empirical coordination capacity region for any specific network setting .  That is,
\begin{eqnarray*}
\mbox{With unlimited common randomness: } \phantom{space} \underline{\cal C}_{p_0} & = & {\cal C}_{p_0}.
\end{eqnarray*}
\end{conjecture}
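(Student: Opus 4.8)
The plan is to prove the two inclusions separately. The inclusion $\underline{\cal C}_{p_0}\subseteq{\cal C}_{p_0}$ holds for \emph{any} amount of common randomness and is the routine half: if the induced block distribution $\tilde p(x^n,y^n,z^n)$ is within $\epsilon$ in total variation of $\prod_{i=1}^{n}p(x_i,y_i,z_i)$, then since the joint type $P_{X^n,Y^n,Z^n}$ concentrates on $p$ in probability under the i.i.d.\ product (weak law of large numbers), and total variation can change the probability of the event ``joint type far from $p$'' by at most $\epsilon$, the joint type also concentrates on $p$ under $\tilde p$. Hence every strongly achievable rate-coordination point is empirically achievable, and this direction needs nothing beyond the law of large numbers.

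The substance is the reverse inclusion ${\cal C}_{p_0}\subseteq\underline{\cal C}_{p_0}$ under unlimited common randomness. I would establish it network by network for the settings of Sections \ref{section complete results} and \ref{section partial results} by building, for each, a randomized coordination code that uses the \emph{same} communication rates as the empirical achievability scheme and forces $\|\tilde p(x^n,y^n,z^n)-\prod_{i=1}^n p(x_i,y_i,z_i)\|_{TV}\to 0$. The engine is a \emph{soft-covering} (channel-resolvability) argument in the spirit of Wyner \cite{wyner} and the reverse Shannon theorem \cite{bennett2002}: common randomness $\omega$ is spent to generate at every node a shared random codebook whose codewords are drawn i.i.d.\ from the marginal of the relevant variable --- the auxiliary $U$, and $Y$, $Z$, etc.\ --- appearing in the achievability proof of the corresponding empirical theorem. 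At the source, given the i.i.d.\ sequence $X^n$, a likelihood (or joint-typicality) encoder picks codeword indices; because the achievability rates satisfy the mutual-information inequalities of the theorem (e.g.\ $R_1\ge I(X;Y,Z)$ and $R_2\ge I(X;Z)$ in the cascade), there are enough indices to carry along the links for every downstream node to reconstruct its part of the codebook selection and then emit its action through a channel synthesized from shared or private randomness. The soft-covering lemma then certifies that the \emph{joint} law of $(X^n,U^n,Y^n,Z^n)$ produced this way is total-variation close to the intended i.i.d.\ law, and marginalizing out the auxiliaries yields strong coordination.

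I expect the main obstacle to be the multivariate, layered nature of this covering step: one must control the joint distribution at all nodes \emph{simultaneously}, so that codewords selected at different nodes stay jointly typical and the induced measure factors the way the target demands, with a total-variation error that does not accumulate as messages are relayed (as in the cascade and cascade-multiterminal networks) or reused on several links (the common message $U$ in the broadcast and cascade-multiterminal networks). This is exactly where the strong Markov Lemma (Theorem \ref{theorem strong markov}) should do the work --- it allows one to assert that a sequence passing through a synthesized channel remains jointly typical with side information held elsewhere in the network, with high probability under the \emph{induced} rather than the i.i.d.\ distribution. The delicate sub-points are showing that the source encoder's index is nearly uniform so that the downstream resolvability lemmas apply verbatim, and handling the networks for which only bounds on ${\cal C}_{p_0}$ are known; there the statement can at best be checked on the cases where ${\cal C}_{p_0,in}={\cal C}_{p_0,out}$ (Tables \ref{table broadcast} and \ref{table relay}), which is why the result is posed here as a conjecture rather than proved outright.
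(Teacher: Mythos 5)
This statement is posed in the paper as a \emph{conjecture}, and the paper does not prove it; the authors only note that it is consistent with the two special cases they do solve for strong coordination (the no-communication network and the two-node network). So there is no paper proof to compare against, and your submission should be read as a proof \emph{sketch} rather than a proof. Taken as such, your easy direction $\underline{\cal C}_{p_0} \subseteq {\cal C}_{p_0}$ is correct and essentially complete: total variation closeness of $\tilde p$ to the i.i.d.\ product transfers concentration of the joint type from the product law (weak LLN) to the induced law, which is exactly Definition \ref{definition achievability}. And your identification of soft covering / channel resolvability, in the spirit of Wyner and the reverse Shannon theorem, as the right engine for the hard direction matches what the paper itself gestures at in the paragraph preceding the conjecture.

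The genuine gap is in how you propose to close the hard direction. You lean on the Strong Markov Lemma (Theorem \ref{theorem strong markov}) to ``do the work'' of controlling the induced joint distribution as messages are relayed, but that lemma is a \emph{typicality} statement: it guarantees that $(x^n,y^n,Z^n)$ lands in ${\cal T}_{4\epsilon}^{(n)}$ with high probability. Strong coordination requires something categorically stronger --- that the \emph{distribution} $\tilde p(x^n,y^n,z^n)$ be total-variation close to $\prod_i p(x_i,y_i,z_i)$. A deterministic scheme that always outputs a single typical sequence satisfies the lemma's conclusion yet is maximally far in total variation from an i.i.d.\ law, so the lemma cannot substitute for a distribution-level (resolvability-type) statement at the relay nodes. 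What is actually needed, and what neither your sketch nor the paper supplies, is a layered soft-covering lemma that controls the TV error jointly across all nodes, together with a proof that the encoder's message index remains (approximately) uniform as it is re-encoded and forwarded; you flag the latter as a ``delicate sub-point'' but do not resolve it, and it is precisely where the argument is open. A final, smaller point: even for the networks whose empirical regions are fully characterized (cascade, isolated node, degraded source), the strong coordination region is not established anywhere in the paper; so the conjecture is open there too, not merely for the broadcast and cascade-multiterminal settings where ${\cal C}_{p_0}$ itself is only bounded.
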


If Conjecture \ref{conjecture strong equals empirical} is true, then results regarding empirical coordination should influence strong coordination schemes, and strong coordination capacity regions will reduce to empirical coordination capacity regions under the appropriate limit.

\subsection{No communication}
\label{subsection no communication strong}

Here we characterize the strong coordination capacity region $\underline{\cal C}$ for the no communication network of Figure \ref{figure no communication strong}.  A collection of nodes X, Y, and Z generate actions according to the joint distribution $p(x,y,z)$ using only common randomness (and private randomization).  The strong coordination capacity region characterizes the set of joint distributions that can be achieved with common randomness at a rate of $R_0$ bits per action.

\begin{figure}[h]
\psfrag{X}[][][.8]{$X$}
\psfrag{Y}[][][.8]{$Y$}
\psfrag{Z}[][][.8]{$Z$}
\psfrag{Dist}[][][.8]{$|\Omega| = 2^{nR_0}$}
\centerline{\includegraphics[width=6cm]{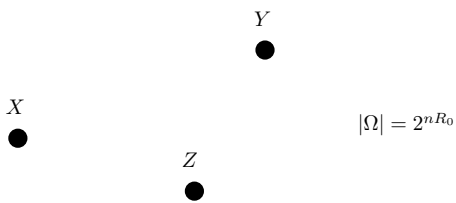}}
\caption{{\em No communication.}  Three nodes generate actions $X$, $Y$, and $Z$ according to $p(x,y,z)$ without communication.  The rate of common randomness needed is characterized in Theorem \ref{theorem no communication strong}.}
\label{figure no communication strong}
\end{figure}

Wyner considered a two-node setting in \cite{wyner}, where correlated random variables are constructed based on common randomness.  He found the amount of common randomness needed and named the quantity ``common information.''  Here we extend that result to three nodes, and the conclusion for any number of nodes is immediately apparent.

The $n$-block coordination codes consist of three non-deterministic decoding functions,
\begin{eqnarray*}
x^n & : & \{1,...,2^{nR_0}\} \longrightarrow {\cal X}^n, \\
y^n & : & \{1,...,2^{nR_0}\} \longrightarrow {\cal Y}^n, \\
z^n & : & \{1,...,2^{nR_0}\} \longrightarrow {\cal Z}^n.
\end{eqnarray*}
Each function can use private randomization to probabilistically map the common random bits $\omega$ to action sequences.  That is, the functions $x^n(\omega)$, $y^n(\omega)$, and $z^n(\omega)$ behave according to conditional probability mass functions $p(x^n|\omega)$, $p(y^n|\omega)$, and $p(z^n|\omega)$.

The rate region given in Theorem \ref{theorem no communication strong} can be generalized to any number of nodes.

\begin{theorem}[Strong coordination capacity region]
\label{theorem no communication strong}
The strong coordination capacity region $\underline{\cal C}$ for the no communication network of Figure \ref{figure no communication strong} is given by
\begin{eqnarray*}
\underline{\cal C} & = & \left\{
\begin{array}{l}
p(x,y,z) \; : \; \exists p(u|x,y,z) \mbox{ such that} \\
p(x,y,z,u) = p(u)p(x|u)p(y|u)p(z|u) \\
|{\cal U}| \leq |{\cal X}||{\cal Y}||{\cal Z}|, \\
R_0 \geq I(X,Y,Z;U).
\end{array}
\right\}.
\end{eqnarray*}
\end{theorem}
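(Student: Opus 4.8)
\emph{Proof plan.} This region is the three‑node instance of a Wyner‑type common‑information characterization, so the plan is a codebook‑covering achievability together with a single‑letterization converse that exploits the conditional independence forced by the absence of communication.

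\emph{Achievability.} Fix a pmf $p(u)$ and channels $p(x|u)$, $p(y|u)$, $p(z|u)$ that reproduce the target marginal and satisfy $I(X,Y,Z;U) < R_0$. Draw a codebook $\{U^n(\omega) : \omega \in \{1,\ldots,2^{nR_0}\}\}$ with entries i.i.d.\ $\sim \prod_i p(u_i)$ and reveal it to every node. Given the common randomness $\omega$, each node computes $U^n(\omega)$ and produces its action sequence by passing $U^n(\omega)$ through its own memoryless channel---$p(x|u)$ at node X, $p(y|u)$ at node Y, $p(z|u)$ at node Z---using its private randomization as the independent channel noise; in particular $X^n$, $Y^n$, $Z^n$ are conditionally independent given $\omega$, consistent with the code structure. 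The induced law is
\begin{eqnarray*}
\tilde{p}(x^n,y^n,z^n) & = & \frac{1}{2^{nR_0}} \sum_{\omega} \prod_{i=1}^{n} p(x_i|U_i(\omega)) p(y_i|U_i(\omega)) p(z_i|U_i(\omega)),
\end{eqnarray*}
i.e.\ the output of the ``super‑channel'' $U \to (X,Y,Z)$ fed a uniform input over the codebook. The decisive ingredient is a soft‑covering (channel‑resolvability) lemma: once the codebook size exceeds $2^{nI(X,Y,Z;U)}$, the codebook‑averaged total variation between $\tilde p(x^n,y^n,z^n)$ and $\prod_i p(x_i,y_i,z_i)$ tends to zero, so some deterministic codebook achieves strong coordination; taking a closure absorbs the boundary rate $R_0 = I(X,Y,Z;U)$. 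With one channel out of $U$ per node the argument is verbatim for any number of nodes.

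\emph{Converse.} Suppose a sequence of (non‑deterministic) codes with $|\Omega| = 2^{nR_0}$ satisfies $\|\tilde p(x^n,y^n,z^n) - \prod_i p(x_i,y_i,z_i)\|_{TV} \to 0$. Since $X^n$ is a function of $(\omega,\omega_X)$, $Y^n$ of $(\omega,\omega_Y)$, $Z^n$ of $(\omega,\omega_Z)$ with the private streams $\omega_X,\omega_Y,\omega_Z$ mutually independent and independent of $\omega$, we have $H(X^n,Y^n,Z^n|\omega) = H(X^n|\omega)+H(Y^n|\omega)+H(Z^n|\omega)$, hence
\begin{eqnarray*}
nR_0 & \geq & H(\omega) \;\; \geq \;\; I(X^n,Y^n,Z^n;\omega) \\
& = & H(X^n,Y^n,Z^n) - \sum_{V} H(V^n|\omega) \\
& \geq & H(X^n,Y^n,Z^n) - \sum_{i=1}^{n} \sum_{V} H(V_i|\omega),
\end{eqnarray*}
where $V$ ranges over $\{X,Y,Z\}$ and the last step is subadditivity of entropy. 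Continuity of entropy under total variation (the alphabets are finite) gives $H(X^n,Y^n,Z^n) \geq n H(X,Y,Z) - n\epsilon_n$ with $\epsilon_n \to 0$. Introduce a time index $Q \sim \mathrm{Unif}\{1,\ldots,n\}$, independent of everything, and set $U \triangleq (\omega,Q)$; then $X_Q,Y_Q,Z_Q$ are conditionally independent given $U$, so $\sum_V H(V_Q|U) = H(X_Q,Y_Q,Z_Q|U)$, and dividing the display by $n$ yields $R_0 \geq I(X_Q,Y_Q,Z_Q;U) - \epsilon_n'$ with $\epsilon_n' \to 0$ (the extra slack again by continuity, since the law of $(X_Q,Y_Q,Z_Q)$ converges to $p$ in total variation). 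A Fenchel--Eggleston--Carath{\'e}odory support‑lemma argument then trims $U$ to $|{\cal U}| \leq |{\cal X}||{\cal Y}||{\cal Z}|$ while preserving both the marginal of $(X_Q,Y_Q,Z_Q)$ and $I(X_Q,Y_Q,Z_Q;U)$---each surviving conditional $p(\cdot|u)$ remains a product distribution, so the three‑way conditional independence is retained. Finally, compactness of the (bounded‑alphabet) parameter space $(p(u),p(x|u),p(y|u),p(z|u))$ lets us extract a convergent subsequence whose limit factors as $p(u)p(x|u)p(y|u)p(z|u)$, has marginal exactly $p(x,y,z)$, and satisfies $R_0 \geq I(X,Y,Z;U)$, placing $p(x,y,z)$ in the claimed region.

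\emph{Main obstacle.} The technical heart is the soft‑covering lemma underpinning achievability---establishing, in total variation rather than Wyner's normalized relative entropy, that a codebook of rate $I(X,Y,Z;U)$ suffices to synthesize the i.i.d.\ target; the conditional‑independence bookkeeping, the cardinality bound, and the subsequence/compactness limit in the converse are all routine.
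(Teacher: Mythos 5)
Your proof is correct and follows the paper's plan in its essentials: achievability by indexing a $U^n$ codebook with the common randomness and having each node pass $U^n(\omega)$ through its own memoryless channel, invoking a soft-covering/resolvability lemma at rate above $I(X,Y,Z;U)$; converse by exploiting the conditional independence of $X^n,Y^n,Z^n$ given $\omega$ to single-letterize, then a support-lemma cardinality reduction. The one place you take a genuinely different route is the single-letterization itself. The paper's sketch keeps $I(X^n,Y^n,Z^n;\omega)$ and expands it over time by the chain rule, using the step $I(X_q,Y_q,Z_q;U\mid X^{q-1},Y^{q-1},Z^{q-1})=I(X_q,Y_q,Z_q;U,X^{q-1},Y^{q-1},Z^{q-1})$, which is exact only for i.i.d.\ blocks; it handles the approximation by appealing to Cuff (2008) for the claim that a TV-close block ``can essentially be treated as i.i.d.'' You instead write $I(X^n,Y^n,Z^n;\omega)=H(X^n,Y^n,Z^n)-\sum_V H(V^n\mid\omega)$ (exact, by the code architecture), apply subadditivity to each $H(V^n\mid\omega)$, and then invoke continuity of entropy under total variation twice, once for $H(X^n,Y^n,Z^n)$ and once for $H(X_Q,Y_Q,Z_Q)$. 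This buys you explicit, self-contained error terms $\epsilon_n\to0$ and avoids appealing to an external ``approximately i.i.d.\ chain rule'' lemma; the paper's version buys brevity by outsourcing that step. Both are sound, and your remark that the support-lemma reduction preserves the product structure of $p(\cdot\mid u)$ correctly addresses the one place where the cardinality bound for this theorem requires care.
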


{\em Discussion:}
The proof of Theorem \ref{theorem no communication strong}, sketched in Section \ref{section proofs}, follows nearly the same steps as Wyner's common information proof.  This generalization can be interpreted as a proposed measurement of common information between a group of random variables.  Namely, the amount of common randomness needed to generate a collection of random variables at isolated nodes is the amount of common information between them.  However, it would also be interesting to consider a richer problem by allowing each subset of nodes to have an independent common random variable and investigating all of the rates involved.

\begin{example}[Task assignment]
\label{example no communication strong}
Suppose there are tasks numbered $1,...,k$, and three of them are to be assigned randomly to the three nodes X, Y, and Z without duplication.  That is, the desired distribution $\hat{p}(x,y,z)$ for the three actions in the network is the distribution obtained by sampling $X$, $Y$, and $Z$ uniformly at random from the set $\{1,...,k\}$ without replacement.  The three nodes do not communicate but have access to common randomness at a rate of $R_0$ bits per action.  We want to determine the infimum of rates $R_0$ required to strongly achieve $\hat{p}(x,y,z)$.  Figure \ref{figure no communication three card strong} illustrates a valid outcome of the task assignments.

\begin{figure}[h]
\psfrag{l1}[][][0.8]{$X$}
\psfrag{l2}[][][0.8]{$Z$}
\psfrag{l3}[][][0.8]{$Y$}
\psfrag{l7}[][][0.8]{$k \geq 6:$}
\psfrag{c1}[][][0.8]{$3$}
\psfrag{c2}[][][0.8]{$2$}
\psfrag{c3}[][][0.8]{$6$}
\psfrag{l8}[][][0.8]{$|\Omega| = 2^{nR_0}$}
\centerline{\includegraphics[width=.4\textwidth]{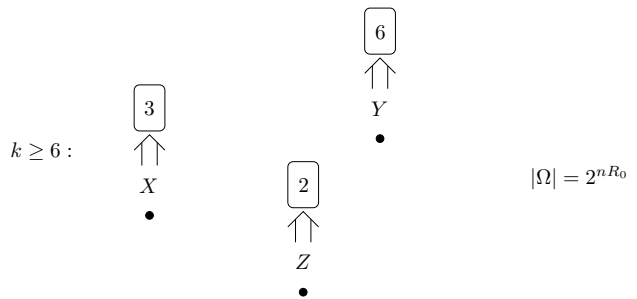}}
\caption{{\em Random task assignment with no communication.}  A task from a set of tasks numbered $1,...,k$ is to be assigned randomly but uniquely to each of the nodes X, Y, and Z without any communication between them.   The rate of common randomness needed to accomplish this is roughly $R_0 \geq 3 \log 3$ for large $k$.}
\label{figure no communication three card strong}
\end{figure}

Theorem \ref{theorem no communication strong} tells us which values of $R_0$ will result in $\hat{p}(x,y,z) \in \underline{\cal C}$.  We must optimize over distributions of an auxiliary random variable $U$.  Two things come in to play to make this optimization manageable:  The variables $X$, $Y$, and $Z$ are all conditionally independent given $U$; and the distribution $\hat{p}$ has sparsity.  For any particular value of $U$, the conditional supports of $X$, $Y$, and $Z$ must be disjoint.  Therefore,
\begin{eqnarray*}
I(X,Y,Z;U) & = & H(X,Y,Z) - H(X,Y,Z|U) \\
& = & H(X,Y,Z) - {\bf E} \left[ H(X,Y,Z|U=u) \right] \\
& \geq & H(X,Y,Z) - {\bf E} \left[ \log (k_{1,U} k_{2,U} k_{3,U}) \right], \\
\end{eqnarray*}
where $k_{1,U}$, $k_{2,U}$, and $k_{3,U}$ are integers that sum to $k$ for all $U$.  Therefore, we maximize $\log(k_{1,U}k_{2,U}k_{3,U})$ by letting the three integers be as close to equal as possible.  Furthermore, it is straightforward to find a joint distribution that meets this inequality with equality.

If $k$, the number of tasks, is divisible by three, then we see that $\hat{p}(x,y,z) \in \underline{\cal C}$ for values of $R_0 > 3 \log 3 - \log(\frac{k}{k-1}) - \log(\frac{k}{k-2})$.  No matter how large $k$ is, the required rate never exceeds $R_0 > 3 \log 3$.
\end{example}

\subsection{Two nodes}
\label{subsection two nodes strong}

We can revisit the two-node network from Section \ref{subsection two nodes} and ask what communication rate is needed for strong coordination.  In this network the action at node X is specified by nature according to $p_0(x)$, and a message is sent from node X to node Y at rate $R$.  Common randomness is also available to both nodes at rate $R_0$.  The common randomness is independent of the action $X$.

\begin{figure}[h]
\psfrag{l1}[][][.8]{$X \sim p_0(x)$}
\psfrag{l2}[][][.8]{$Y$}
\psfrag{l4}[][][.8]{$R$}
\psfrag{l6}[][][.7]{Node X}
\psfrag{l7}[][][.7]{Node Y}
\psfrag{l8}[][][.8]{$|\Omega| = 2^{nR_0}$}
\centerline{\includegraphics[width=.35\textwidth]{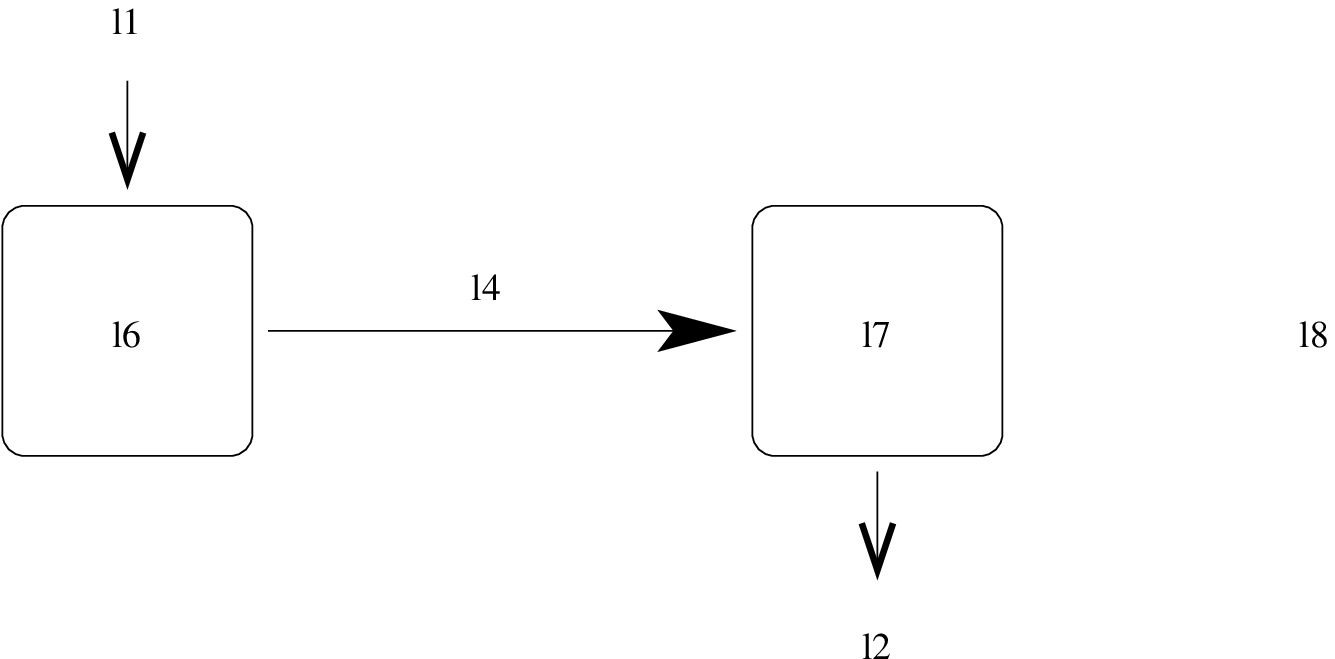}}
\caption{{\em Two nodes.}  The action at node X is specified by nature according to $p_0(x)$, and a message is sent from node X to node Y at rate $R$.  Common randomness is also available to both nodes at rate $R_0$.  The common randomness is independent of the action $X$.  The strong coordination capacity region $\protect\underline{\cal C}$ depends on the amount of common randomness available.  With no common randomness, $\protect\underline{\cal C}$ contains all rate-coordination pairs where the rate is greater than the common information between $X$ and $Y$.  With enough common randomness, $\protect\underline{\cal C}$ contains all rate-coordination pairs where the rate is greater than the mutual information between $X$ and $Y$.}
\label{figure two nodes strong}
\end{figure}

The rates $R_0$ and $R$ required for strong coordination in the two-node network are characterized in \cite{cuff08} and were independently discovered by Bennett et. al. \cite{devetak} in the context of synthesizing a memoryless channel.  Here we take particular note of the two extremes:  what is the strong coordination capacity region when no common randomness is present, and how much common randomness is enough to maximize the strong coordination capacity region?

The $(2^{nR},n)$ coordination codes consist of a non-deterministic encoding function,
\begin{eqnarray*}
i & : & {\cal X}^n \times \{1,...,2^{nR_0}\} \longrightarrow \{1,...,2^{nR}\}.
\end{eqnarray*}
and a non-deterministic decoding function,
\begin{eqnarray*}
y^n & : & \{1,...,2^{nR}\} \times \{1,...,2^{nR_0}\} \longrightarrow {\cal Y}^n.
\end{eqnarray*}
Both functions can use private randomization to probabilistically map the arguments onto the range of the function.  That is, the encoding function $i(x^n,\omega)$ behaves according to a conditional probability mass function $p(i|x^n,\omega)$, and the decoding function $y^n(i,\omega)$ behaves according to a conditional probability mass function $p(y^n|i,\omega)$.

The actions $X^n$ are chosen by nature i.i.d. according to $p_0(x)$, and the actions $Y^n$ are constructed by implementing the non-deterministic coordination code as
\begin{eqnarray*}
Y^n & = & y^n(i(X^n,\omega),\omega).
\end{eqnarray*}

Let us define two quantities before stating the result.  The first is Wyner's common information $C(X;Y)$ \cite{wyner}, which turns out to be the communication rate requirement for strong coordination in the two-node network when no common randomness is available:
\begin{eqnarray*}
C(X;Y) & \triangleq & \min_{U \; : \; X - U - Y} I(X,Y;U),
\end{eqnarray*}
where the notation $X-U-Y$ represents a Markov chain from $X$ to $U$ to $Y$.
The second quantity we call {\em necessary conditional entropy} $H(Y \dag X)$, which we will show to be the amount of common randomness needed to maximize the strong coordination capacity region in the two-node network:
\begin{eqnarray*}
H(Y \dag X) & \triangleq & \min_{f \; : \; X - f(Y) - Y} H(f(Y)|X).
\end{eqnarray*}

\begin{theorem}[Strong coordination capacity region]
\label{theorem two nodes strong}
With no common randomness, $R_0 = 0$, the strong coordination capacity region $\underline{\cal C}_{p_0}$ for the two-node network of Figure \ref{figure two nodes strong} is given by
\begin{eqnarray*}
\underline{\cal C}_{p_0} & = & \left\{ (R, p(y|x)) \; : \; R \geq C(X;Y) \right\}.
\end{eqnarray*}
On the other hand, if and only if the rate of common randomness is greater than the necessary conditional entropy, $R_0 \geq H(Y \dag X)$, the strong coordination capacity region $\underline{\cal C}_{p_0}$ for the two-node network of Figure \ref{figure two nodes strong} is given by
\begin{eqnarray*}
\underline{\cal C}_{p_0} & = & \left\{ (R, p(y|x)) \; : \; R \geq I(X;Y) \right\}.
\end{eqnarray*}
\end{theorem}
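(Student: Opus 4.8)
The plan is to establish the two regimes separately, noting that each statement is of the form ``achievability plus converse.'' For the no-common-randomness case ($R_0 = 0$), the region should be $R \geq C(X;Y)$, and the natural approach mirrors Wyner's common information argument. For achievability, fix a test channel $p(u|x,y)$ achieving a Markov chain $X - U - Y$ with $I(X,Y;U)$ close to $C(X;Y)$, and think of $U^n$ as playing the role Wyner's auxiliary plays: node $X$, seeing $X^n$, finds an index into a codebook of $U^n$-sequences jointly typical with $X^n$ --- this costs roughly $I(X;U)$ bits if the codebook has $\approx 2^{nI(Y;U)}$ extra sequences per bin, so $R \geq I(X;U) = I(X,Y;U)$ suffices when the Markov chain is tight (since then $I(X;U) = I(X,Y;U)$). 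Node $Y$ then generates $Y^n$ i.i.d. through $p(y|u)$ from the recovered $U^n$. One must verify that the induced joint distribution on $(X^n, Y^n)$ is close in total variation to the i.i.d. target; this is the soft-covering / cloud-mixing argument (the codebook is dense enough that the selected $U^n$ looks like a fresh sample given $X^n$). For the converse, one takes any strong coordination code achieving rate $R$, sets $U_i = (I, X^{i-1})$ or a similar choice, and shows that total-variation closeness forces $X_i - U_i - Y_i$ to be approximately a Markov chain and $\frac1n \sum I(X_i,Y_i;U_i) \geq R - \epsilon_n$; a single-letterization plus continuity-of-entropy (total variation controls the distributions, hence the mutual informations, up to vanishing terms) finishes it. The cardinality bound on $\mathcal{U}$ follows from the usual support lemma / Carath\'eodory argument.

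For the large-common-randomness case, the target region $R \geq I(X;Y)$ matches the empirical coordination region of Theorem \ref{theorem two nodes}, so the ``only if'' / converse direction --- $R \geq I(X;Y)$ is necessary --- is immediate: strong coordination implies empirical coordination, and Theorem \ref{theorem two nodes} already gives $R \geq I(X;Y)$ there. The substantive content is the achievability: with $R_0 \geq H(Y \dag X)$ bits of common randomness, rate $R$ slightly above $I(X;Y)$ achieves strong coordination. Here the idea is the ``reverse Shannon'' style construction: use the common randomness $\omega$ to generate a random codebook of $Y^n$-sequences (roughly $2^{nI(X;Y)}$ of them, drawn i.i.d. from $p(y)$, or organized into $2^{nR_0}$ bins); node $X$ uses $X^n$ to pick, via joint typicality and a likelihood-weighted choice, an index $I$ of rate $\approx I(X;Y)$; node $Y$, sharing $\omega$, reads off the codeword $Y^n$. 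The soft-covering lemma guarantees that the joint distribution of $(X^n, Y^n)$ is TV-close to i.i.d. $p_0(x)p(y|x)$. The role of $H(Y \dag X)$ is that it is exactly the rate of common randomness needed so that the portion of $Y^n$ not determined by $X^n$ (the ``residual randomness'' $f(Y)$, where $f$ is the minimizing function) can be supplied from $\omega$ rather than being paid for in the message rate --- this is why $R_0 \geq H(Y \dag X)$ is both sufficient and (by a matching converse showing that with less common randomness the rate must strictly exceed $I(X;Y)$) necessary.

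The two main obstacles I anticipate are: (i) the soft-covering argument in the second case, which requires care to show that total variation --- not just a weaker metric like normalized relative entropy --- goes to zero, and that private randomization at the decoder can manufacture the correct conditional $p(y^n|I,\omega)$; and (ii) the ``only if'' direction for the common-randomness threshold, i.e., showing that if $R_0 < H(Y \dag X)$ then no rate $R$ --- however large --- achieves the full region, or more precisely that the minimum achievable $R$ strictly exceeds $I(X;Y)$. This converse must extract from a TV-close code a valid Markov-type structure with a genuine deterministic function $f$, and argue that the ``common part'' of $Y^n$ relative to $X^n$ cannot be synthesized with fewer than $H(f(Y)|X)$ bits of shared randomness; I expect this to require a careful information-spectrum or blocklength-robust version of the argument that the shared-randomness rate lower-bounds the conditional entropy of the part of $Y$ that is independent of $X$'s description. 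Since the theorem attributes this result to \cite{cuff08} and \cite{devetak}, I would lean on the detailed arguments there and present the two regimes as consequences, spending most effort on making the soft-covering achievability and the single-letter converse self-contained.
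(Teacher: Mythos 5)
Your proposal takes a substantially different — and more laborious — route than the paper. The paper's proof does not re-derive achievability or a converse at all; it imports, as a black box, the full two-node strong coordination region from \cite{cuff08} (the region given by $\exists\, U$ with $X-U-Y$, $R \geq I(X;U)$, and $R_0 + R \geq I(X,Y;U)$) and then derives both halves of Theorem~\ref{theorem two nodes strong} by a short optimization over this single-letter region. For $R_0=0$, the binding constraint becomes $R \geq I(X,Y;U)$, whose Markov-constrained minimum is exactly $C(X;Y)$. For the $R_0 \geq H(Y\dag X)$ direction, the paper verifies sufficiency by substituting $U = f^*(Y)$ (where $f^*$ attains $H(Y\dag X)$), checking $I(X;U) \leq I(X;Y)$ and $I(X,Y;U) \leq I(X;Y) + H(Y\dag X)$; and it establishes necessity purely within the single-letter region by noting that $R = I(X;Y)$, together with $X-U-Y$, forces $I(X;U) = I(X;Y)$, hence $I(X;U|Y)=0$ and the \emph{double} Markov structure $X-U-Y$, $X-Y-U$, after which minimizing $R_0 \geq I(Y;U|X)$ amounts to maximizing $H(Y|U,X)$; the optimizer is identified by clumping $y$-values with identical backward channels $p(x|y)$. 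Your proposal instead sketches a from-scratch soft-covering achievability and a single-letter converse, and anticipates an ``information-spectrum or blocklength-robust'' converse for the common-randomness threshold. This is far more machinery than the theorem actually needs once one grants the \cite{cuff08} region.

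Beyond the mismatch in approach, there are genuine gaps in your sketch. First, in the $R_0=0$ achievability you claim a rate of $I(X;U)$ suffices ``with $\approx 2^{nI(Y;U)}$ extra sequences per bin,'' and then assert $I(X;U) = I(X,Y;U)$. Without common randomness there is no mechanism for the decoder to resolve within a bin, so binning does not apply; the true requirement is $R \geq I(X,Y;U)$, and $I(X,Y;U) - I(X;U) = I(Y;U|X)$ is not zero for a general Wyner-optimal $U$ satisfying only $X-U-Y$. The equality you need holds only under the \emph{extra} condition $Y-X-U$, which you have not established and which does not hold in general. Second, your plan for the ``only if'' direction of $R_0 \geq H(Y\dag X)$ is left entirely conceptual (``I expect this to require a careful information-spectrum \ldots version of the argument''), whereas the actual argument is short and purely single-letter once the \cite{cuff08} region is in hand: $R = I(X;Y)$ forces the double Markov chain, which in turn pins down the optimal $U$ as the deterministic function $f^*(Y)$ obtained by identifying $y$'s with equal $p(x|y)$, and one shows $H(Y|\tilde U, X) \leq H(Y|f^*(Y),X)$ for any other admissible $\tilde U$ because $f^*(Y)$ is recoverable from $\tilde U$ almost surely. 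Your proposal does not contain this identification, so its necessity argument, as written, would not go through.
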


{\em Discussion:}
The proof of Theorem \ref{theorem two nodes strong}, found in Section \ref{section proofs}, is an application of Theorem 3.1 in \cite{cuff08}.  This theorem is consistent with Conjecture \ref{conjecture strong equals empirical}---with enough common randomness, the strong coordination capacity region $\underline{\cal C}_{p_0}$ is the same as the coordination capacity region ${\cal C}_{p_0}$ found in Section \ref{subsection two nodes}.

For many joint distributions, the necessary conditional entropy $H(Y \dag X)$ will simply equal the conditional entropy $H(Y|X)$.

\begin{example}[Task assignment]
\label{example two nodes strong}
Consider again a task assignment setting similar to Example \ref{example no communication strong}, where tasks are numbered $1,...,k$ and are to be assigned randomly to the two nodes $X$ and $Y$ without duplication.  The action $X$ is supplied by nature, uniformly at random ($\hat{p}_0(x)$), and the desired distribution $\hat{p}(y|x)$ for the action $Y$ is the uniform distribution over all tasks not equal to $X$.  Figure \ref{figure two nodes three card strong} illustrates a valid outcome of the task assignments.

\begin{figure}[h]
\psfrag{l1}[][][0.8]{$X \;$}
\psfrag{l2}[][][0.8]{$Y$}
\psfrag{l4}[][][0.8]{}
\psfrag{l5}[][][0.8]{$R$}
\psfrag{l7}[][][0.8]{$k \geq 7:$}
\psfrag{c1}[][][0.8]{$2$}
\psfrag{c2}[][][0.8]{$7$}
\psfrag{l8}[][][0.8]{$|\Omega| = 2^{nR_0}$}
\centerline{\includegraphics[width=.4\textwidth]{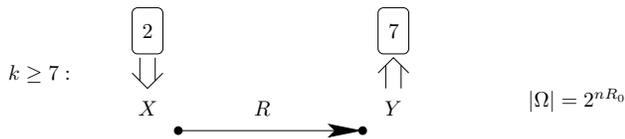}}
\caption{{\em Task assignment in the two-node network.}  A task from a set of tasks numbered $1,...,k$ is to be assigned randomly but uniquely to each of the nodes X and Y in the two-node network.  The task assignment for X is given by nature.  Common randomness at rate $R_0$ is available to both nodes, and a message is sent from node X to node Y at rate $R$.  When no common randomness is available, the required communication rate is $R \geq 2 - \log ( \frac{k}{k-1} )$ bits (for even $k$).  At the other extreme, if the rate of common randomness is greater than $\log(k-1)$, then $R \geq \log ( \frac{k}{k-1} )$ suffices.}
\label{figure two nodes three card strong}
\end{figure}

To apply Theorem \ref{theorem two nodes strong} we must evaluate the three quantities $I(X;Y)$, $C(X;Y)$, and $H(Y \dag X)$.  For the joint distribution $\hat{p}_0(x)\hat{p}(y|x)$, the necessary conditional entropy $H(Y \dag X)$ is exactly the conditional entropy $H(Y|X)$.  The computation of the common information $C(X;Y)$ follows the same steps as the derivation found in Example \ref{example no communication strong}.  Let $\lceil k \rceil$ take the value of $k$ rounded up to the nearest even number.
\begin{eqnarray*}
I(X;Y) & = & \log \left( \frac{k}{k-1} \right), \\
C(X;Y) & = & 2 \mbox{ bits } - \log \left( \frac{\lceil k \rceil}{\lceil k \rceil -1} \right), \\
H(Y \dag X) & = & \log \left( k-1 \right).
\end{eqnarray*}

Without common randomness, we find that the communication rate $R \geq 2 \mbox{ bits } - \log \left( \frac{\lceil k \rceil}{\lceil k \rceil -1} \right)$ is necessary to strongly achieve $\hat{p}_0(x)\hat{p}(y|x)$.  The strong coordination capacity region $\underline{\cal C}_{\hat{p}_0}$ expands as the rate of common randomness $R_0$ increases.  Additional common randomness is no longer useful when $R_0 > \log(k-1)$.  With this amount of common randomness, only the communication rate $R \geq \log (\frac{k}{k-1})$ is necessary to strongly achieve $\hat{p}_0(x)\hat{p}(y|x)$.
\end{example}

\section{Rate-distortion Theory}
\label{section rate distortion}

The challenge of describing random sources of information with the fewest bits possible can be defined in a number of different ways.  Traditionally, source coding in networks follows the path of rate-distortion theory by establishing multiple distortion penalties for the multiple sources and reconstructions in the network.  Yet, fundamentally, the rate-distortion problem is intimately connected to empirical coordination.

The basic result of rate-distortion theory for a single memoryless source states that in order to achieve any desired distortion level you must find an appropriate conditional distribution of the reconstruction $\hat{X}$ given the source $X$ and then use a communication rate larger than the mutual information $I(X;\hat{X})$.  This lends itself to the interpretation that optimal encoding for a rate-distortion setting really comes down to coordinating a reconstruction sequence with a source sequence according to a selected joint distribution.  Here we make that observation formal by showing that in general, even in networks, the rate-distortion region is a projection of the coordination capacity region.

The coordination capacity region ${\cal C}_{p_0}$ is a set of rate-coordination tuples.  We can express rate-coordination tuples as vectors.  For example, in the cascade network of Section \ref{subsection cascade} there are two rates $R_1$ and $R_2$.  The actions in this network are $X$, $Y$, and $Z$, where $X$ is given by nature.  Order the space ${\cal X} \times {\cal Y} \times {\cal Z}$ in a sequence $(x_1,y_1,z_1),...,(x_m,y_m,z_m)$, where $m = |{\cal X}||{\cal Y}||{\cal Z}|$.  The rate-coordination tuples $(R_1,R_2,p(y,z|x))$ can be expressed as vectors $[R_1,R_2,p(y_1,z_1|x_1),...,p(y_m,z_m|x_m)]^T$.

The rate-distortion region ${\cal D}_{p_0}$ is the closure of the set of rate-distortion tuples that are achievable in a network.  We say that a distortion $D$ is achievable if there exists a rate-distortion code that gives an expected average distortion less than $D$, using $d$ as a distortion measurement.  For example, in the cascade network of Section \ref{subsection cascade} we might have two distortion functions:  The function $d_1(x,y)$ measures the distortion in the reconstruction at node Y; the function $d_2(x,y,z)$ evaluates distortion jointly between the reconstructions at nodes Y and Z.  The rate-distortion region ${\cal D}_{p_0}$ would consist of tuples $(R_1,R_2,D_1,D_2)$, which indicate that using rates $R_1$ and $R_2$ in the network, a source distributed according to $p_0(x)$ can be encoded to achieve no more than $D_1$ expected average distortion as measured by $d_1$ and $D_2$ distortion as measured by $d_2$.

The relationship between the rate-distortion region ${\cal D}_{p_0}$ and the coordination capacity region ${\cal C}_{p_0}$ is that of a linear projection.  Suppose we have multiple finite-valued distortion functions $d_1,...,d_k$.  We construct a distortion matrix $D$ using the same enumeration $(x_1,y_1,z_1),...,(x_m,y_m,z_m)$ of the space ${\cal X} \times {\cal Y} \times {\cal Z}$ as was used to vectorize the tuples in ${\cal C}_{p_0}$:
\begin{eqnarray*}
D & \triangleq & \phantom{fill space with a lot of nothing much, so I}
\end{eqnarray*}
\begin{eqnarray*}
& \left[
\begin{array}{ccc}
d_1(x_1,y_1,z_1)p_0(x_1) & \cdots & d_1(x_m,y_m,z_m)p_0(x_m) \\
\vdots & \vdots & \vdots \\
d_k(x_1,y_1,z_1)p_0(x_1) & \cdots & d_k(x_m,y_m,z_m)p_0(x_m)
\end{array}
\right]. &
\end{eqnarray*}
The distortion matrix $D$ is embedded in a block diagonal matrix $A$ where the upper-left block is the identity matrix $I$ with the same dimension as the number of rates in the network:
\begin{eqnarray*}
A & \triangleq & \left[
\begin{array}{cc}
I & 0 \\
0 & D
\end{array}
\right].
\end{eqnarray*}

\begin{theorem}[Rate-distortion region]
\label{theorem rate distortion}
The rate-distortion region ${\cal D}_{p_0}$ for a memoryless source with distribution $p_0$ in any rate-limited network is a linear projection of the coordination capacity region ${\cal C}_{p_0}$ by the matrix $A$,
\begin{eqnarray*}
{\cal D}_{p_0} & = & A \; {\cal C}_{p_0}.
\end{eqnarray*}
We treat the elements of ${\cal D}_{p_0}$ and ${\cal C}_{p_0}$ as vectors, as discussed, and the matrix multiplication by $A$ is the standard set multiplication.
\end{theorem}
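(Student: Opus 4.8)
The whole statement rests on one elementary observation: a rate--distortion code for the network \emph{is} a coordination code, and for finite--valued distortion functions the expected average distortion is a \emph{linear} functional of the expected joint type of the actions. Concretely, if a blocklength-$n$ code produces actions with (random) joint type $P^{(n)} := P_{X^n,Y^n,Z^n}$ and we set $\bar p^{(n)} := \mathbf{E}[P^{(n)}]$, then for each distortion measure $d_\ell$ one has $\mathbf{E}\bigl[\tfrac1n\sum_{i=1}^n d_\ell(X_i,Y_i,Z_i)\bigr]=\sum_{x,y,z} d_\ell(x,y,z)\,\bar p^{(n)}(x,y,z)$. Moreover, because $X^n$ is i.i.d.\ $p_0$ and independent of any common randomness, $\bar p^{(n)}$ always has $X$-marginal $p_0$, so $\bar p^{(n)}(x,y,z)=p_0(x)q^{(n)}(y,z|x)$; vectorizing $q^{(n)}$ against the enumeration used to build $D$, the number $\sum_{x,y,z} d_\ell\, p_0(x)q^{(n)}(y,z|x)$ is exactly the $\ell$-th entry of $D$ applied to $q^{(n)}$. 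Thus $A$ carries the rate--coordination vector $[R_1,R_2,q]^T$ to the rate--distortion vector $[R_1,R_2,Dq]^T$, and the theorem splits into two set inclusions, which I would prove as follows.

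$A\,\mathcal C_{p_0}\subseteq\mathcal D_{p_0}$. It suffices to show that for every \emph{achievable} coordination point $(R_1,R_2,q)$ the vector $[R_1,R_2,Dq]^T$ lies in $\mathcal D_{p_0}$; the closure then takes care of the rest. By achievability there is a sequence of $(2^{nR_1},2^{nR_2},n)$ coordination codes with $\|P^{(n)}-p_0 q\|_{TV}\to 0$ in probability, hence, total variation being bounded and $\langle d_\ell,\cdot\rangle$ a bounded linear functional, $\mathbf{E}\langle d_\ell,P^{(n)}\rangle\to\langle d_\ell,p_0 q\rangle=(Dq)_\ell$ for each $\ell$ by bounded convergence. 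So for every $\epsilon>0$ some code in the sequence has rates $(R_1,R_2)$ and expected distortion below $(Dq)_\ell+\epsilon$ in every coordinate, giving $[R_1,R_2,Dq+\epsilon\mathbf{1}]^T\in\mathcal D_{p_0}$; letting $\epsilon\downarrow 0$ and using that $\mathcal D_{p_0}$ is closed finishes this inclusion.

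$\mathcal D_{p_0}\subseteq A\,\mathcal C_{p_0}$. This is the direction with real content, because a rate--distortion code only controls the \emph{expected} joint type $\bar p^{(n)}$, whereas membership in $\mathcal C_{p_0}$ demands that the joint type itself \emph{concentrate}. The plan is to manufacture concentration by repetition. Take an achievable rate--distortion tuple, witnessed (after derandomizing private randomness exactly as in the proof of Theorem~\ref{theorem no common randomness}) by deterministic blocklength-$n$ codes with rates $(R_1^{(n)},R_2^{(n)})\to(R_1,R_2)$ and $\limsup_n\mathbf{E}\langle d_\ell,\bar p^{(n)}\rangle\le D_\ell$, where $\bar p^{(n)}=p_0 q^{(n)}$. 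Fix $n$ and run $m$ independent copies of the $n$-th code on $m$ fresh i.i.d.\ source blocks; this is a single blocklength-$mn$ network code of the \emph{same} rates, and for each position $i\in\{1,\dots,n\}$ the $m$ triples occupying position $i$ across the blocks are i.i.d.\ with the $i$-th per-letter law $p_i$ of the original code. By the law of large numbers the joint type of the blocklength-$mn$ code converges in probability, as $m\to\infty$, to $\tfrac1n\sum_{i=1}^n p_i=\bar p^{(n)}=p_0 q^{(n)}$; hence $p_0 q^{(n)}$ is achievable for empirical coordination at rates $(R_1^{(n)},R_2^{(n)})$, i.e.\ $(R_1^{(n)},R_2^{(n)},q^{(n)})\in\mathcal C_{p_0}$ and so $[R_1^{(n)},R_2^{(n)},Dq^{(n)}]^T\in A\,\mathcal C_{p_0}$. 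Since the conditionals $q^{(n)}$ lie in a compact set, passing to a subsequence along which $q^{(n)}\to q^\star$, $R_j^{(n)}\to R_j$, and each distortion coordinate converges, closedness of $\mathcal C_{p_0}$ gives $(R_1,R_2,q^\star)\in\mathcal C_{p_0}$ with $Dq^\star\le D$ coordinatewise. A final routine step---time-sharing $q^\star$ against zero-rate ``lazy'' codes with constant output sequences and invoking convexity, Lemma~\ref{lemma convexity}---supplies the monotone completion in the distortion coordinates and places $[R_1,R_2,D]^T$ in $A\,\mathcal C_{p_0}$.

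\textbf{Main obstacle.} The crux is the second inclusion: upgrading control of the expected distortion (equivalently, of the expected joint type) to an honest coordination code whose joint type concentrates. The device that does this is running an i.i.d.-sourced blocklength-$n$ code on many independent blocks and letting the law of large numbers do the concentrating; everything else---the linearity of $D$, the bounded-convergence step in the easy inclusion, and the closure/monotone bookkeeping in the distortion coordinates---is routine.
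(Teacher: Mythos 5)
Your proposal takes essentially the same route as the paper: the easy inclusion $A\,\mathcal{C}_{p_0}\subseteq\mathcal{D}_{p_0}$ follows from linearity of expected distortion in the joint type plus bounded convergence, and the hard inclusion is obtained by running the given rate-distortion code independently on many blocks and invoking the weak law of large numbers so that the joint type of the concatenated sequence concentrates at the expected joint type $\bar p^{(n)}$, which yields the needed coordination code. That is exactly the paper's two-part argument, and you have correctly identified the concentration step as where the work lives.

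The one place you go beyond the paper is the closing ``monotone completion'' step---the paper in fact omits this entirely, showing only that every rate-distortion point dominates some point of $A\,\mathcal{C}_{p_0}$. Your instinct that something more is needed (since the extracted $q^\star$ may have $Dq^\star$ strictly below the target $D$) is sharp, but the proposed fix---time-sharing against zero-rate lazy codes---does not reach every distortion value. Concretely, for the two-node network with $X$ uniform binary and Hamming distortion, at rate $R=0$ every achievable coordination has $Y$ independent of $X$, so $Dq=1/2$ identically; thus $A\,\mathcal{C}_{p_0}$ at rate $0$ is the single distortion value $1/2$. Meanwhile $\mathcal{D}_{p_0}$, being upward-closed in the distortion coordinate by its definition (``expected average distortion less than $D$''), contains $(0,D)$ for every $D\ge 1/2$, and no amount of time-sharing among rate-$0$ codes climbs above $1/2$. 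This shows the strict set equality fails at dominated points and is really a defect in the theorem statement that both you and the paper pass over silently; the equality should be read on the Pareto-relevant boundary, or $\mathcal{D}_{p_0}$ should be defined without the upward closure. Your core argument, which matches the paper's, is sound.
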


{\em Discussion:}
The proof of Theorem \ref{theorem rate distortion} can be found in Section \ref{section proofs}.  Since the coordination capacity region ${\cal C}_{p_0}$ is a convex set, the rate-distortion region ${\cal D}_{p_0}$ is also a convex set.

Clearly we can use a coordination code to achieve the corresponding distortion in a rate-distortion setting.  But the theorem makes a stronger statement.  It says that there is not a more efficient way of satisfying distortion limits in any network setting with memoryless sources than by using a code that produces the same joint type for almost every observation of the sources.  It is conceivable that a rate-distortion code for a network setting would produce a variety of different joint types, each satisfying the distortion limit, but varying depending on the particular source sequence observed.  However, given such a rate-distortion code, repeated uses will produce a longer coordination code that consistently achieves coordination according to the expected joint type.  The expected joint type of a good rate-distortion code can be shown to satisfy the distortion constraints.

\begin{figure}[h]
\psfrag{p0}[][][0.8]{$P(Y=1|X=0)$}
\psfrag{p1}[][][0.8]{$P(Y=1|X=1)$}
\psfrag{p00}[][][0.8]{$p_0$} \psfrag{p11}[][][0.6]{$p_1$}
\psfrag{R=0.1}[][][0.8]{$R=0.1$} \psfrag{R=0.4}[][][0.6]{$0.4$}
\psfrag{R=0.7}[][][0.8]{$0.7$}
\psfrag{R(pYX)}[][][0.8]{$R(p_0,p_1)$}
\psfrag{EE}[][][0.8]{$d \leq D$}
\centerline{\includegraphics[width=.4\textwidth]{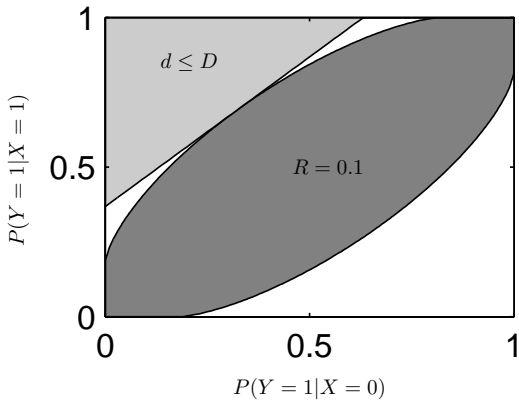}}
\caption{{\em Coordination capacity and rate-distortion.}  The coordination-rate region for a uniform binary source $X$ and binary action $Y$, where $X$ is described at rate $R = 0.1$ bits to node Y in the two-node network.  The shaded region shows distributions with Hamming distortion less than $D$, where $D$ is chosen to satisfy $R(D) = 0.1$ bits.}
\label{figure rate distortion}
\end{figure}

Geometrically, each distortion constraint defines a hyperplane that divides the coordination-rate region into two sets---one that satisfies the distortion constraint and one that does not.  Therefore, minimizing the distortion for fixed rates in the network amounts to finding optimal extreme points in the coordination-rate region in the directions orthogonal to these hyperplanes.  Figure \ref{figure rate distortion} shows the coordination-rate region for $R = 0.1$ bits in the two-node network of Section \ref{subsection two nodes}, with a uniform binary source $X$ and binary $Y$.  The figure also shows the region satisfying a Hamming distortion constraint $D$.

\section{Proofs}
\label{section proofs}

\subsection{Empirical Coordination - Achievability (Sections \ref{section complete results}, \ref{section partial results})}

For a distribution $p(x)$, define the {\em typical set} \TS{} with respect to $p(x)$ to be sequences $x^n$ whose types are $\epsilon$-close to $p(x)$ in total variation.  That is,
\begin{eqnarray}
\label{equation typical set}
\TS{} & \triangleq & \{ x^n \in {\cal X}^n \; : \; \| P_{x^n}(x) - p(x) \|_{TV} < \epsilon \}.
\end{eqnarray}
This definition is almost the same as the definition of the strongly typical set ${\cal A}_{\epsilon}^{*(n)}$ found in (10.106) of Cover and Thomas \cite{CovThom06}, and it shares the same important properties.  The difference is that here we give a total variation constraint ($L_1$ distance) on the type of the sequence rather than an element-wise constraint ($L_{\infty}$ distance).\footnote{Additionally, our definition of the typical set handles the zero probability events more liberally, but this doesn't present any serious complications.}  We deal with \TS{} since it relates more closely to the definition of achievability in Definition \ref{definition achievability}.  However, the sets are almost the same, as the following sandwich suggests:
\begin{eqnarray*}
\nonumber
{\cal A}_{\epsilon}^{*(n)} \; \subset \; \TS{} \; \subset \; {\cal A}_{\epsilon |{\cal X}|}^{*(n)}.
\end{eqnarray*}

A jointly typical set with respect to a joint distribution $p(x,y)$ inherits the same definition as (\ref{equation typical set}), where total variation of the type is measured with respect to the joint distribution.  Thus, achieving empirical coordination with respect to a joint distribution is a matter of constructing actions that are {\em $\epsilon$-jointly typical} (i.e. in the jointly typical set \TS{}) with high probability for arbitrary $\epsilon$.

\subsubsection{Strong Markov Lemma}

If $X-Y-Z$ form a Markov chain, and the pair of sequences $x^n$ and $y^n$ are jointly typical as well as the pair of sequences $y^n$ and $z^n$, it is not true in general that the three sequences $x^n$, $y^n$, and $z^n$ are jointly typical as a triple.  For instance, consider any triple $(x^n,y^n,z^n)$ that is jointly typical with respect to a non-Markov joint distribution having marginal distributions $p(x,y)$ and $p(y,z)$.  However, the Markov Lemma \cite{Berger78} states that if $Z^n$ is randomly distributed according to $\prod_{i=1}^n p(z_i|y_i)$, then with high probability it will be jointly typical with both $x^n$ and $y^n$.  This lemma is used to establish joint typicality in source coding settings where side information is not known to the encoder.  Yet, for a network and encoding scheme that is more intricate, the standard Markov Lemma lacks the necessary strength.  Here we introduce a generalization that will help us analyze the layers of ``piggy-back''-style codes \cite{Wyner75_WAK} used in our achievability proofs.\footnote{Through conversation we discovered that similar effort is being made by Young-Han Kim and Abbas El Gamal and may soon be found in the Stanford EE478 Lecture Notes.}

\begin{theorem}[Strong Markov Lemma]
\label{theorem strong markov}
Given a joint distribution $p(x,y,z)$ on the finite alphabet ${\cal X} \times {\cal Y} \times {\cal Z}$ that yields a Markov chain $X-Y-Z$ (i.e. $p(x,y,z) = p(y)p(x|y)p(z|y)$), let $x^n$ and $y^n$ be arbitrary sequences that are $\epsilon$-jointly typical.  Suppose that $Z^n$ is randomly chosen from the set of $z^n$ sequences that are $\epsilon$-jointly typical with $y^n$ and additionally that the distribution of $Z^n$ is permutation-invariant with respect to $y^n$, which is to say, any two sequences $z^n$ and $\tilde{z}^n$ of the same joint type with $y^n$ have the same probability.  That is,
\begin{eqnarray}
\label{definition permutation invariant}
P_{y^n,z^n} = P_{y^n,\tilde{z}^n} & \Rightarrow & P(Z^n = z^n) = P(Z^n = \tilde{z}^n). \;\;\;\;
\end{eqnarray}
Then,
\begin{eqnarray*}
\Pr \left( (x^n,y^n,Z^n) \in {\cal T}_{4\epsilon}^{(n)} \right) & > & \xi_n,
\end{eqnarray*}
where $\xi_n \to 1$ exponentially fast as $n$ goes to infinity.
\end{theorem}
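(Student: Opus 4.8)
\emph{Proof sketch.} The plan is to condition on how the $Z$-symbols are distributed among the ``$y$-blocks'' of $y^n$ and then exploit the permutation-invariance hypothesis. For $(x,y,z)\in{\cal X}\times{\cal Y}\times{\cal Z}$, write $n_y$, $n_{x,y}$, $n_{y,z}$ for the number of indices $i$ with $y_i=y$, with $(x_i,y_i)=(x,y)$, and with $(y_i,Z_i)=(y,z)$ respectively, and $N(x,y,z)$ for the number with $(x_i,y_i,Z_i)=(x,y,z)$, so that $P_{x^n,y^n,Z^n}(x,y,z)=N(x,y,z)/n$, while $n_{x,y}/n_y=P_{x^n|y^n}(x|y)$ is a conditional type and $n_{y,z}/n=P_{y^n,Z^n}(y,z)$. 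By the triangle inequality I would split
\[
\bigl\|P_{x^n,y^n,Z^n}-p\bigr\|_{TV}\;\le\;\underbrace{\tfrac12\textstyle\sum_{x,y,z}\bigl|\tfrac{N(x,y,z)}{n}-\tfrac{n_{x,y}n_{y,z}}{n_y n}\bigr|}_{T_1}\;+\;\underbrace{\bigl\|\tfrac{n_{x,y}n_{y,z}}{n_y n}-p(x,y,z)\bigr\|_{TV}}_{T_2},
\]
where $T_2$ is controlled deterministically from the typicality constraints and $T_1$ carries the randomness of $Z^n$.

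\emph{The term $T_2$.} Since $x^n,y^n$ are $\epsilon$-jointly typical, the marginal $y^n$ is $\epsilon$-typical for $p(y)$, and by hypothesis every realization of $Z^n$ is $\epsilon$-jointly typical with $y^n$. Using the Markov factorization $p(x,y,z)=p(x|y)p(y,z)$ and two applications of the triangle inequality, $T_2$ reduces to a fixed combination of $\|P_{x^n,y^n}-p(x,y)\|_{TV}$, $\|P_{y^n}-p(y)\|_{TV}$, and $\|P_{y^n,Z^n}-p(y,z)\|_{TV}$, giving $T_2<3\epsilon$ for every admissible $Z^n$. This is routine total-variation arithmetic; the one point requiring care is that symbols of zero or negligible probability carry only $O(\epsilon)$ total mass rather than $O(\epsilon)$ per symbol, which is exactly what the total-variation definition of the typical set supplies.

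\emph{The term $T_1$.} For $T_1$ I would condition on the multiset of $Z$-values appearing at the $n_y$ positions of block $y$; the permutation-invariance hypothesis (\ref{definition permutation invariant}) then forces that multiset to appear in uniformly random order. Hence $N(x,y,z)$ equals the number of $z$-labeled slots hit when the $n_{x,y}$ positions with $x_i=x$ are drawn without replacement from the $n_y$ positions of block $y$, a hypergeometric random variable whose conditional mean is precisely $n_{x,y}n_{y,z}/n_y$. A Chernoff/Hoeffding bound for sampling without replacement then gives, for each triple,
\[
\Pr\Bigl(\bigl|N(x,y,z)-\tfrac{n_{x,y}n_{y,z}}{n_y}\bigr|\ge sn\Bigr)\;\le\;2\exp\!\bigl(-2s^2n^2/n_{x,y}\bigr)\;\le\;2\exp(-2s^2n),
\]
where the last step uses $n_{x,y}\le n$, so the exponent does not degrade even when a block or its $x$-part is small. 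Choosing $s=2\epsilon/(|{\cal X}||{\cal Y}||{\cal Z}|)$ and a union bound over the finitely many $(x,y,z)$ makes $T_1\le\epsilon$ with probability at least $\xi_n:=1-2|{\cal X}||{\cal Y}||{\cal Z}|\exp(-2s^2n)$, which tends to $1$ exponentially fast. Since $T_2<3\epsilon$ surely, $T_1+T_2<4\epsilon$ with probability at least $\xi_n$, i.e.\ $(x^n,y^n,Z^n)\in{\cal T}_{4\epsilon}^{(n)}$ with that probability.

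\emph{Main obstacle.} The crux is reading ``permutation-invariant with respect to $y^n$'' correctly: it amounts to conditional uniformity of the arrangement inside each $y$-block, which is exactly what identifies $N(x,y,z)$ as a sampling-without-replacement count so that a concentration inequality applies, and the bound $n_{x,y}\le n$ is what keeps the exponent uniform over block sizes and salvages exponential convergence. Everything else — the total-variation estimate of $T_2$ and the union bound in $T_1$ — is standard once this structural point is in place.
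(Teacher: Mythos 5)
Your proof is correct and reaches the same $4\epsilon$ bound by the same decomposition, but the concentration step — the analogue of the paper's Lemma~\ref{lemma close to markov} — is done by a genuinely different technique. Both you and the paper separate $\|P_{x^n,y^n,Z^n} - p\|_{TV}$ into a random piece (distance from the ``empirically Markov'' reference type $P_{x^n,y^n}P_{Z^n|y^n}$, your $T_1$) and a deterministic piece handled by three triangle-inequality steps using the three typicality hypotheses (your $T_2 < 3\epsilon$), and both exploit permutation invariance by first conditioning on the joint type of $(y^n,Z^n)$, under which the arrangement inside each $y$-block is uniform. Where you diverge is in bounding $T_1$: the paper counts conditional type classes, showing that the fraction of $z^n$ in $T_{P_{\bar z^n|y^n}}(y^n)$ whose joint type with $(x^n,y^n)$ is not $\epsilon$-Markov is at most $n^{\beta}2^{-n\alpha}$, with $\alpha$ the minimum of $I(X;Z|Y)$ over the compact set of distributions that violate Markovity by $\ge\epsilon$ in total variation. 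You instead recognize each $N(x,y,z)$, conditionally, as a hypergeometric count (overlap of a fixed $n_{x,y}$-subset with a uniformly placed $n_{y,z}$-subset inside a $y$-block of size $n_y$) with mean exactly $n_{x,y}n_{y,z}/n_y$, and then apply Hoeffding for sampling without replacement. Your observation that $n_{x,y}\le n$ keeps the exponent uniform in the block sizes is the right replacement for the paper's compactness argument establishing $\alpha > 0$ uniformly. The trade-off: the paper's route produces the information-theoretically natural exponent $\alpha \approx \min I(X;Z|Y)$ and fits the method-of-types toolkit it is already using; yours is more elementary and self-contained, needing only a classical concentration inequality, at the cost of a constant-times-$\epsilon^2$ exponent. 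Both are valid proofs of the stated claim. Two small points of hygiene you should add in a final write-up: (i) interpret $n_{x,y}n_{y,z}/(n_y n)$ as zero whenever $n_y = 0$ so that $T_1$ is well-defined; and (ii) spell out that the Hoeffding bound is applied conditionally on $P_{y^n,Z^n}$ and that, since the resulting tail bound $2\exp(-2s^2n)$ is uniform in the conditioning, it transfers unconditionally before the union bound over $(x,y,z)$.
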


Notice that permutation invariance is a condition satisfied by most random codebook based proof techniques---for instance, encoding schemes based on i.i.d. codebooks tend to be permutation invariant.  To recover the familiar Markov Lemma, let $Z^n$ have a distribution based on $y^n$ according to $\prod_{i=1}^n p(z_i|y_i)$, where $y^n$ is an $\epsilon$-typical sequence.  Due to the A.E.P., $y^n$ and $Z^n$ will be $2\epsilon$-jointly typical with high probability.  Furthermore, Theorem \ref{theorem strong markov} can be invoked because the distribution is permutation invariant.

The key to proving Theorem \ref{theorem strong markov} is found in Lemma \ref{lemma close to markov}, which uses permutation invariance and counting arguments to show that most realizations look empirically Markov.

\begin{lemma}[Markov Tendency]
\label{lemma close to markov}
Let $x^n \in {\cal X}^n$ and $y^n \in {\cal Y}^n$ be arbitrary sequences.  Suppose that the random sequence $Z^n \in {\cal Z}^n$ has a distribution that is permutation-invariant with respect to $y^n$, as in (\ref{definition permutation invariant}).  Then with high probability which only depends on the sizes of the alphabets ${\cal X}$, ${\cal Y}$, and ${\cal Z}$, the joint type $P_{x^n,y^n,Z^n}$ will be $\epsilon$-close to the Markov joint type $P_{x^n,y^n}P_{Z^n|y^n}$.  That is, for any $\epsilon > 0$,
\begin{eqnarray}
\label{equation close to markov}
\left\| P_{x^n,y^n,Z^n} - P_{x^n,y^n}P_{Z^n|y^n} \right\|_{TV} & < & \epsilon,
\end{eqnarray}
with a probability of at least $1 - 2^{- \alpha n + \beta \log n},$ where $\alpha$ and $\beta$ only depend on the alphabet sizes and $\epsilon$.
\end{lemma}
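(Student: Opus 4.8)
The plan is to condition on the fixed sequence $y^n$ and exploit the fact that ``permutation-invariant with respect to $y^n$'' means the law of $Z^n$ is invariant under the subgroup of index permutations that fix $y^n$, namely the direct product $\prod_{b \in {\cal Y}} S_{{\cal I}_b}$ of symmetric groups acting separately on the level sets ${\cal I}_b \triangleq \{ i : y_i = b \}$. First I would break the target quantity along the $y$-coordinate. Writing $n_b \triangleq |{\cal I}_b|$, and letting $n^{(b)}_x$, $m^{(b)}_z$, $N^{(b)}_{x,z}$ be the number of indices $i \in {\cal I}_b$ with $x_i = x$, with $Z_i = z$, and with $(x_i,Z_i) = (x,z)$ respectively, one has $P_{x^n,y^n}(x,b) = n^{(b)}_x/n$ and $P_{Z^n|y^n}(z|b) = m^{(b)}_z/n_b$, so the left-hand side of (\ref{equation close to markov}) equals
\begin{eqnarray*}
\frac{1}{2} \sum_{b \in {\cal Y}} \frac{n_b}{n} \sum_{x,z} \left| \frac{N^{(b)}_{x,z}}{n_b} - \frac{n^{(b)}_x}{n_b}\cdot\frac{m^{(b)}_z}{n_b} \right|.
\end{eqnarray*}
It therefore suffices to show that, within each level set, the empirical joint distribution of $(x_i, Z_i)$ is close to the product of its marginals.

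The key step is to observe that conditioning on the full count profile $(m^{(b)}_z)_{b \in {\cal Y}, z \in {\cal Z}}$ fixes the $\prod_b S_{{\cal I}_b}$-orbit of $Z^n$; since the law of $Z^n$ is constant on such orbits, the conditional law is uniform on that orbit, which means the restrictions $Z^{{\cal I}_b}$ become independent across $b$, each a uniformly random arrangement of a fixed multiset with count vector $(m^{(b)}_z)_z$. In this conditional picture, for a fixed level set $b$ and a fixed pair $(x,z)$, the count $N^{(b)}_{x,z}$ is exactly the number of the $m^{(b)}_z$ randomly placed copies of symbol $z$ that fall among the $n^{(b)}_x$ positions bearing $x$-symbol $x$---a hypergeometric variable with mean $n^{(b)}_x m^{(b)}_z / n_b$. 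Hoeffding's inequality for sampling without replacement (or McDiarmid's bounded-differences inequality applied to the random permutation) then gives a universal $c > 0$ with $\Pr\bigl( | N^{(b)}_{x,z} - n^{(b)}_x m^{(b)}_z/n_b | > t\, n_b \bigr) \le 2 e^{-c t^2 n_b}$, uniformly over the count profile being conditioned on.

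To assemble the bound, fix $\delta \triangleq \epsilon/(4|{\cal Y}|)$ and call ${\cal I}_b$ ``small'' if $n_b < \delta n$ and ``large'' otherwise. The small level sets contribute at most $|{\cal Y}|\delta < \epsilon/2$ to the displayed sum, because each inner sum over $(x,z)$ is at most $2$. For the at most $|{\cal Y}|$ large level sets, a union bound over the $|{\cal X}||{\cal Z}|$ pairs $(x,z)$ together with the triangle inequality shows that, outside an event of probability at most $2|{\cal X}||{\cal Y}||{\cal Z}|\,e^{-c t^2 \delta n}$, every large-level-set inner sum is at most $t|{\cal X}||{\cal Z}|$, so the large level sets contribute at most $\tfrac12 t|{\cal X}||{\cal Z}|$; choosing $t \triangleq \epsilon/(2|{\cal X}||{\cal Z}|)$ makes this $< \epsilon/2$. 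Adding the two contributions yields (\ref{equation close to markov}). Because the per-group concentration bound is uniform in the conditioned count profile, the same failure probability holds unconditionally, of the form $2^{-\alpha n + \beta \log n}$ with $\alpha,\beta$ depending only on $\epsilon$ and the alphabet sizes---the $\beta \log n$ slack leaving room for crude book-keeping such as union-bounding over the polynomially many possible count profiles.

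The main obstacle is the second step: pinning down precisely that permutation invariance with respect to $y^n$, once the count profile is fixed, upgrades to genuine \emph{uniformity} on an orbit and hence to independent uniform within-level-set arrangements, so that the overlap counts $N^{(b)}_{x,z}$ are truly hypergeometric and amenable to sharp concentration rather than merely exchangeable in some weaker sense. Everything downstream is a routine concentration-plus-union-bound calculation; a purely combinatorial variant---counting, via Stirling's approximation, the fraction of sequences of a given $y^n$-conditional type whose joint type with $(x^n,y^n)$ is within $\epsilon$ of the product type, and observing that the product type dominates exponentially---would work equally well and is perhaps closer in spirit to the ``counting arguments'' alluded to in the statement.
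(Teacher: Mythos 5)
Your proof is correct, and the first move is exactly the paper's key observation: permutation-invariance, once one conditions on the conditional type of $z^n$ given $y^n$, upgrades to a \emph{uniform} distribution on that conditional type class (equivalently, on the orbit of $\prod_{b} S_{{\cal I}_b}$). From that shared starting point the two proofs diverge. The paper stays in the method-of-types world: it lower-bounds the size of the conditional type class $T_{P_{\bar z^n|y^n}}(y^n)$ by $n^{-|{\cal Y}||{\cal Z}|}2^{nH(Z|Y)}$, upper-bounds the size of each ``bad'' conditional type class $T_{P_{\bar z^n|x^n,y^n}}(x^n,y^n)$ by $2^{n(H(Z|Y)-I(X;Z|Y))}\leq 2^{n(H(Z|Y)-\alpha)}$ where $\alpha=\min I(X;Z|Y)$ over types that are $\epsilon$-far from Markov, and then union-bounds over the polynomially many bad types, yielding $2^{-\alpha n+\beta\log n}$. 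You instead decompose the total variation along the $y$-levels, identify the per-level overlap counts $N^{(b)}_{x,z}$ as hypergeometric under the uniform conditional law, and apply Hoeffding's inequality for sampling without replacement; small levels are handled by a crude mass bound and large levels by a union bound over the finitely many $(x,z,b)$ triples. Both routes are sound; the paper's gives the exponent an information-theoretic meaning ($\alpha$ as a minimum conditional mutual information over a compact set of non-Markov types), while yours yields a cleaner failure probability of the form $C\cdot 2^{-\alpha' n}$ with fully explicit constants and no $\log n$ slack. You even anticipate the paper's route in your closing remark about a Stirling-based counting variant---that is precisely what the paper does.
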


\begin{proof}[Proof of Theorem \ref{theorem strong markov}]
The proof of Theorem \ref{theorem strong markov} relies mainly on Lemma \ref{lemma close to markov} and repeated use of the triangle inequality.  From Lemma \ref{lemma close to markov} we know that with probability approaching one as $n$ tends to infinity, inequality (\ref{equation close to markov}) is satisfied, namely,
\begin{eqnarray*}
\left\| P_{x^n,y^n,Z^n} - P_{x^n,y^n}P_{Z^n|y^n} \right\|_{TV} & < & \epsilon.
\end{eqnarray*}
In this event, we now show that
\begin{eqnarray*}
(x^n,y^n,Z^n) & \in & {\cal T}_{4\epsilon}^{(n)}.
\end{eqnarray*}
By the definition of total variation one can easily show that
\begin{eqnarray*}
\| P_{x^n,y^n}P_{Z^n|y^n} & - & p_{X,Y} P_{Z^n|y^n} \|_{TV} \\
& = & \| P_{x^n,y^n} - p_{X,Y} \|_{TV} \\
& < & \epsilon.
\end{eqnarray*}
Similarly,
\begin{eqnarray*}
\| p_{Y} p_{X|Y} P_{Z^n|y^n} & - & P_{y^n} p_{X|Y} P_{Z^n|y^n} \|_{TV} \\
& = & \| p_{Y} - P_{y^n} \|_{TV} \\
& < & \epsilon.
\end{eqnarray*}
And finally,
\begin{eqnarray*}
\| P_{y^n,Z^n} p_{X|Y} & - & p_{X,Y,Z} \|_{TV} \\
& = & \| P_{y^n,Z^n} - p_{Y,Z} \|_{TV} \\
& < & \epsilon.
\end{eqnarray*}
Thus, the triangle inequality gives
\begin{eqnarray*}
\| P_{x^n,y^n,Z^n} - p_{X,Y,Z} \|_{TV} & < & 4 \epsilon. \qedhere
\end{eqnarray*}
\end{proof}

\begin{proof}[Proof of Lemma \ref{lemma close to markov}]
We start by defining two constants that simplify this discussion.  The first constant, $\alpha$, is the key to obtaining the uniform bound that Lemma \ref{lemma close to markov} provides.
\begin{eqnarray*}
\alpha & \triangleq & \min_{
p(x,y,z) \in {\cal S}_{{\cal X},{\cal Y},{\cal Z}} \; : \; \|p(x,y,z) - p(x,y)p(z|y) \|_{TV} \geq \epsilon
}
I(X;Z|Y), \\
\beta & \triangleq & 2 |{\cal X}| |{\cal Y}| |{\cal Z}|.
\end{eqnarray*}
Here ${\cal S}_{{\cal X},{\cal Y},{\cal Z}}$ is the simplex with dimension corresponding to the product of the alphabet sizes.  Notice that $\alpha$ is defined as a minimization of a continuous function over a compact set; therefore, by analysis we know that the minimum is achieved in the set.  Since $I(X;Z|Y)$ is positive for any distribution that does not form a Markov chain $X-Y-Z$, we find that $\alpha$ is positive for $\epsilon > 0$.  The constants $\alpha$ and $\beta$ are functions of $\epsilon$ and the alphabet sizes $|{\cal X}|$, $|{\cal Y}|$, and $|{\cal Z}|$.

We categorize sequences into sets with the same joint type. The {\em type class} $T_{p(y,z)}$ is defined as
\begin{eqnarray*}
T_{p(y,z)} & \triangleq & \{ (y^n,z^n) \; : \; P_{y^n,z^n} = p(y,z) \}.
\end{eqnarray*}
We also define a {\em conditional type class} $T_{p(z|y)}(y^n)$ to be the set of $z^n$ sequences such that the pair $(y^n,z^n)$ are in the type class $T_{p(y,z)}$.  Namely,
\begin{eqnarray*}
T_{p(z|y)}(y^n) & \triangleq & \{ z^n \; : \; P_{y^n,z^n} = p(z|y) P_{y^n} \}.
\end{eqnarray*}

We will show that the statement made in (\ref{equation close to markov}) is true conditionally for each conditional type class $T_{p(z|y)}(y^n)$ and therefore must be true overall.

Suppose $Z^n$ falls in the conditional type class $T_{P_{\bar{z}^n|y^n}}(y^n)$.  By assumption (\ref{definition permutation invariant}), all $z^n$ in this type class are equally likely.  Assessing probabilities simply becomes a matter of counting.  From the method of types \cite{CovThom06} we know that
\begin{eqnarray*}
\left| T_{P_{\bar{z}^n|y^n}}(y^n) \right| & \geq & n^{- |{\cal Y}| |{\cal Z}| } 2^{n H_{P_{y^n,\bar{z}^n}} (Z|Y)}.
\end{eqnarray*}

We also can bound the number of $z^n$ sequences in $T_{P_{\bar{z}^n|y^n}}(y^n)$ that do not satisfy (\ref{equation close to markov}).  These sequences must fall in a conditional type class $T_{P_{\bar{z}^n|x^n,y^n}}(x^n,y^n)$ where
\begin{eqnarray*}
\left\| P_{x^n,y^n,\bar{z}^n} - P_{x^n,y^n}P_{\bar{z}^n|y^n} \right\|_{TV} \geq \epsilon.
\end{eqnarray*}
For each such type class, the size can be bounded by
\begin{eqnarray*}
\left| T_{P_{\bar{z}^n|x^n,y^n}}(x^n,y^n) \right| & \leq & 2^{n H_{P_{x^n,y^n,\bar{z}^n}} (Z|X,Y)} \\
& = & 2^{n \left( H_{P_{y^n,\bar{z}^n}} (Z|Y) - I_{P_{x^n,y^n,\bar{z}^n}} (X;Z|Y) \right)} \\
& \leq & 2^{n \left( H_{P_{y^n,\bar{z}^n}} (Z|Y) - \alpha \right)}.
\end{eqnarray*}
Furthermore, there are only polynomially many types, bounded by $n^{|{\cal X}| |{\cal Y}| |{\cal Z}| }$.  Therefore, the probability that $Z^n$ does not satisfy (\ref{equation close to markov}) for any conditional type $P_{\bar{z}^n|y^n}$ is bounded by
\begin{eqnarray*}
\Pr ( \; \mbox{not } (\ref{equation close to markov}) \! \! \! \! & | & \! \! \! \! Z^n \in T_{P_{\bar{z}^n|y^n}}(y^n) \; ) \\
& = & \frac{\left| \{ z^n \in T_{P_{\bar{z}^n|y^n}}(y^n) : \mbox{ not } (\ref{equation close to markov}) \} \right|}{\left| T_{P_{\bar{z}^n|y^n}}(y^n) \right|} \\
& \leq & \frac{n^{|{\cal X}| |{\cal Y}| |{\cal Z}|} 2^{n \left( H_{P_{y^n,\bar{z}^n}} (Z|Y) - \alpha \right)}}{n^{- |{\cal Y}| |{\cal Z}| } 2^{n H_{P_{y^n,\bar{z}^n}} (Z|Y)}} \\
& = & n^{|{\cal Y}| |{\cal Z}| + |{\cal X}| |{\cal Y}| |{\cal Z}|} 2^{- \alpha n} \\
& \leq & 2^{- \alpha n + \beta \log n}. \qedhere
\end{eqnarray*}
\end{proof}

\subsubsection{Generic Achievability Proof}

The coding techniques for achieving the empirical coordination regions in Sections \ref{section complete results} and \ref{section partial results} are familiar from rate distortion theory.  For the proofs, we construct random codebooks for communication and show that the resulting encoding schemes perform well on average, producing jointly-typical actions with high probability.  This proves that there must be at least one deterministic scheme that performs well.  Here we prove one generally useful example to verify that the rate-distortion techniques actually do work for achieving empirical coordination.  The technique here is very similar to the source coding technique of ``piggy-back'' codes introduced by Wyner \cite{Wyner75_WAK}.

Consider the two-node source coding setting of Figure \ref{figure two nodes side info} with arbitrary sequences $x^n$, $y^n$, and $z^n$ that are $\epsilon$-jointly typical according to a joint distribution $p(x,y,z)$.  The sequences $x^n$ and $y^n$ are available to the encoder at node 1, while $y^n$ and $z^n$ are available to the decoder at node 2.  We can think of $x^n$ as the source to be encoded and $y^n$ and $z^n$ as side information known to either both nodes or the decoder only, respectively.  Communication from node 1 to node 2 at rate $R$ is used to produce a sequence $U^n$. Original results related to this setting in the context of rate-distortion theory can be found in the work of Wyner and Ziv \cite{WynerZiv76}.  Here we analyze a randomized coding scheme that attempts to produce a sequence $U^n$ at the decoder such that $(x^n,y^n,z^n,U^n)$ are $(8\epsilon)$-jointly typical with respect to a joint distribution of the form $p(x,y,z)p(u|x,y)$.  We give a scheme that uses a communication rate of $R > I(X;U|Y,Z)$ and is successful with probability approaching one as $n$ tends to infinity for all jointly typical sequences $x^n$, $y^n$, and $z^n$.

\begin{figure}[h]
\psfrag{l1}[][][.8]{$x^n,y^n$}
\psfrag{l2}[][][.8]{$U^n$}
\psfrag{l4}[][][.8]{$I \in \left[ 2^{nR} \right]$}
\psfrag{l6}[][][.7]{Node 1}
\psfrag{l7}[][][.7]{Node 2}
\psfrag{l8}[][][.8]{$y^n,z^n$}
\centerline{\includegraphics[width=.25\textwidth]{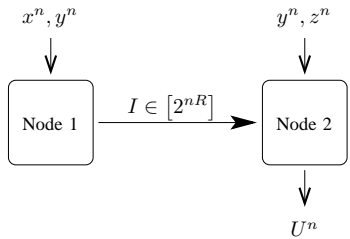}}
\caption{{\em Two nodes with side information.}  This network represents a generic source coding setting encountered in networks and will illustrate standard encoding techniques. The sequences $x^n$, $y^n$, and $z^n$ are jointly typical with respect to $p_0(x,y,z)$.  Only $x^n$ and $y^n$ are observed by the encoder at node 1.  A message is sent to specify $U^n$ to node 2 at rate $R$.  A randomized coding scheme can produce $U^n$ to be jointly typical with $(x^n,y^n,z^n)$ with respect to a Markov chain $Z - (X,Y) - U$ with high probability, regardless of the particular sequences $x^n$, $y^n$, and $z^n$, as long as the rate is greater than the conditional mutual information $I(X;U|Y,Z)$.}
\label{figure two nodes side info}
\end{figure}

The $(2^{nR},n)$ coordination codes consist of a randomized encoding function
\begin{eqnarray*}
i & : & {\cal X}^n \times {\cal Y}^n \times \Omega \longrightarrow \{1,...,2^{nR}\},
\end{eqnarray*}
and a randomized decoding function
\begin{eqnarray*}
u^n & : & \{1,...,2^{nR}\} \times {\cal Y}^n \times {\cal Z}^n \times \Omega \longrightarrow {\cal U}^n.
\end{eqnarray*}
These functions are random simply because the common randomness $\omega$ is involved for generating random codebooks.

The sequences $x^n, y^n$, and $z^n$ are arbitrary jointly typical sequences according to $p_0(x,y,z)$, and the sequence $U^n$ is a randomized function of $x^n, y^n$, and $z^n$ given by implementing the coordination code as
\begin{eqnarray*}
U^n & = & u^n(i(x^n,y^n,\omega), y^n, z^n, \omega).
\end{eqnarray*}

\begin{lemma}[Generic Coordination with Side Information]
\label{lemma generic example}
For the two-node network with side information of Figure \ref{figure two nodes side info} and any discrete joint distribution of the form $p(x,y,z)p(u|x,y)$, there exists a function $\delta(\epsilon)$ which goes to zero as $\epsilon$ goes to zero such that, for any $\epsilon>0$ and rate $R > I(X;U|Y,Z) + \delta(\epsilon)$, there exists a sequence of randomized coordination codes at rate $R$ for which
\begin{eqnarray*}
\Pr \left( (x^n,y^n,z^n,U^n) \in {\cal T}_{\delta(\epsilon)}^{(n)} \right) & \to & 1
\end{eqnarray*}
as $n$ goes to infinity, uniformly for all $(x^n,y^n,z^n) \in {\cal T}_{\epsilon}^{(n)}$.
\end{lemma}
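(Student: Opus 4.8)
The plan is to run a Wyner--Ziv-style random binning argument, with the essential twist that the Strong Markov Lemma (Theorem \ref{theorem strong markov}) is what certifies that the recovered codeword is jointly typical with the \emph{entire} triple $(x^n,y^n,z^n)$ rather than merely with the marginals against which it was matched. Fix the target joint law $p(x,y,z)p(u|x,y)$ and note that it induces the Markov chain $Z-(X,Y)-U$; pick a small $\epsilon'>0$. The common randomness $\omega$ is used to generate a codebook of $2^{nR'}$ sequences $U^n(l)$, drawn i.i.d.\ from $\prod_{i=1}^n p(u_i)$ (with $p(u)$ the $U$-marginal of the target), together with a uniformly random assignment of the indices $l$ to $2^{nR}$ bins. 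Seeing $x^n,y^n$, the encoder picks $L$ uniformly at random among the codewords that are jointly typical with $(x^n,y^n)$ (declaring an error if none exists) and transmits the bin index of $L$. The decoder, holding the bin index and side information $(y^n,z^n)$, outputs the unique codeword in that bin that is jointly typical with $(y^n,z^n)$ (declaring an error if there is not exactly one).

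The analysis splits into three pieces, each uniform over $(x^n,y^n,z^n)\in{\cal T}_{\epsilon}^{(n)}$ because the codebook is independent of these sequences. \emph{Covering:} a fixed codeword is jointly typical with a given typical $(x^n,y^n)$ with probability roughly $2^{-nI(X,Y;U)}$, so for $R'>I(X,Y;U)+\epsilon'$ the encoder finds a match with probability tending to one. \emph{Full joint typicality:} since the codebook is i.i.d.\ and the encoder breaks ties uniformly, the conditional law of $U^n(L)$ given $(x^n,y^n)$ is permutation-invariant with respect to $(x^n,y^n)$ (any two candidate sequences of the same joint type with $(x^n,y^n)$ are related by a coordinate permutation fixing $x^n$ and $y^n$, under which both the codebook law and joint typicality are invariant); invoking Theorem \ref{theorem strong markov} for the chain $Z-(X,Y)-U$, with $z^n$, $(x^n,y^n)$, and $U^n(L)$ in the roles of its three sequences, shows that $(x^n,y^n,z^n,U^n(L))\in{\cal T}_{\delta(\epsilon)}^{(n)}$ with respect to the target with probability tending to one. \emph{Binning:} a fixed codeword is jointly typical with $(y^n,z^n)$ with probability roughly $2^{-nI(Y,Z;U)}$, and there are about $2^{n(R'-R)}$ competitors in the bin besides $L$, so for $R'-R<I(Y,Z;U)-\epsilon'$ no competitor is jointly typical with $(y^n,z^n)$ with probability tending to one; combined with the previous step, which already exhibits $U^n(L)$ itself as jointly typical with $(y^n,z^n)$, the decoder recovers exactly $U^n(L)$. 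Eliminating $R'$ from the two rate constraints gives $R>I(X,Y;U)-I(Y,Z;U)+2\epsilon'$, and since $I(X,Y;U)-I(Y,Z;U)=I(X;U|Y,Z)$ (using $H(U|X,Y)=H(U|X,Y,Z)$ from the Markov chain), one takes $\delta(\epsilon)$ to absorb $\epsilon'$ together with the accumulated expansion of the typicality parameters.

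The main obstacle, and the precise reason the ordinary Markov Lemma does not suffice here, is the second step: after binning, $U^n(L)$ is a priori known only to be jointly typical with $(x^n,y^n)$ while $(x^n,y^n,z^n)$ is separately jointly typical, yet the resulting four-dimensional joint type is unconstrained. The real work is to check that the construction meets the permutation-invariance hypothesis of Theorem \ref{theorem strong markov} (including on the conditioning event that covering succeeds, which is itself permutation-invariant) and that the covering and binning bounds hold uniformly over all typical triples; once that is in place the remainder is routine rate bookkeeping. A secondary bookkeeping point is the chain of $\epsilon$-inflations: the covering step delivers a codeword typical at a mildly enlarged level, the Strong Markov Lemma enlarges it by a further constant factor, and the decoder's typicality test must be run at that enlarged level, which is what forces the conclusion to be stated with $\delta(\epsilon)$ in place of $\epsilon$.
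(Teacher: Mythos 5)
Your proposal is correct and follows essentially the same route as the paper: an i.i.d.\ random codebook for $U^n$ drawn from the marginal $p(u)$, random binning at rate $R$, a covering step at rate just above $I(X,Y;U)$, the Strong Markov Lemma invoked (via permutation-invariance of the selected codeword's law given $(x^n,y^n)$) to promote typicality against the full triple including $z^n$, a packing step that rules out false positives in the bin, and the rate arithmetic $I(X,Y;U)-I(Y,Z;U)=I(X;U|Y,Z)$. The only cosmetic difference is your encoder breaks ties uniformly among matching codewords while the paper's selects the first match; both preserve the permutation-invariance hypothesis of Theorem~\ref{theorem strong markov}, so the arguments are interchangeable.
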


\begin{proof}
Consider a joint distribution $p(x,y,z)p(u|x,y)$ and define $\gamma$ to be the excess rate, $\gamma = R - I(X;U|Y,Z)$.  The conditions of Lemma \ref{lemma generic example} require that $\gamma > \delta(\epsilon)$ for some $\delta(\epsilon)$ that goes to zero as $\epsilon$ goes to zero.  We will identify a valid function $\delta(\epsilon)$ at the conclusion of the following analysis.

We first over-cover the typical set of $(x^n,y^n)$ using a codebook of size $2^{nR_c}$, where $R_c = I(X,Y;U) + \gamma/2$.  We then randomly categorize the codebook sequences into $2^{nR}$ bins, yielding roughly $2^{nR_b}$ sequences in each bin, where
\begin{eqnarray*}
R_b & = & R_c - R \\
& = & I(X,Y;U) - I(X;U|Y,Z) - \gamma/2 \\
& = & I(X,Y,Z;U) - I(X;U|Y,Z) - \gamma/2 \\
& = & I(Y,Z;U) - \gamma/2.
\end{eqnarray*}

{\em Codebook:}
Using $\omega$, generate a codebook ${\mathbb C}$ of $2^{n R_c}$ sequences $u^n(j)$ independently according to the marginal distribution $p(u)$, namely $\prod_{i=1}^n p(u_i)$.  Randomly and independently assign each one a bin number $b(u^n(j))$ in the set $\{1,...,2^{nR}\}$.

{\em Encoder:}
The encoding function $i(x^n,y^n,\omega)$ can be explained as follows.  Search the codebook ${\mathbb C}$ and identify an index $j$ such that $(x^n,y^n,u^n(j)) \in {\cal T}_{2\epsilon}^{(n)}$.  If multiple exist, select the first such $j$.  If none exist, select $j=1$.  Send the bin number $i(x^n,y^n,\omega) = b(u^n(j))$.

{\em Decoder:}
The decoding function $u^n(i,y^n,z^n,\omega)$ can be explained as follows.  Consider the codebook ${\mathbb C}$ and identify an index $j$ such that $(y^n,z^n,u^n(j)) \in {\cal T}_{8\epsilon}^{(n)}$ and $b(u^n(j)) = i$.  If multiple exist, select the first such $j$.  If none exist, select $j=1$.  Produce the sequence $U^n = u^n(j)$.

{\em Error Analysis:}
We conservatively declare errors for any of the following, $E_1$, $E_2$, or $E_3$.

{\em Error 1:  The encoder does not find a $(2\epsilon)$-jointly typical sequence in the codebook.}  By the method of types one can show, as in Lemma 10.6.2 of \cite{CovThom06}, that each sequence in ${\mathbb C}$ is $(2\epsilon)$-jointly typical with $(x^n,y^n)$ with probability greater than $2^{-n(I(X,Y;U) + \delta_1(\epsilon))}$ for $n$ large enough, where $\delta_1(\epsilon)$ goes to zero as $\epsilon$ goes to zero.

Each sequence in the codebook ${\mathbb C}$ is generated independently, so the probability that none of them are jointly typical is bounded by
\begin{eqnarray*}
\Pr(E_1) & \leq & (1 - 2^{-n(I(X,Y;U) + \delta_1(\epsilon))})^{2^{nR_c}} \\
& \leq & e^{-2^{nR_c}2^{-n(I(X,Y;U) + \delta_1(\epsilon))}} \\
& = & e^{-2^{n(R_c - I(X,Y;U) - \delta_1(\epsilon))}} \\
& = & e^{-2^{n(\gamma/2 - \delta_1(\epsilon))}}.
\end{eqnarray*}

{\em Error 2:  The sequence identified by the encoder is not $(8\epsilon)$-jointly typical with $(x^n,y^n,z^n)$.}  Assuming $E_1$ did not occur, because of the Markovity $Z-(X,Y)-U$ implied by $p(x,y,z)p(u|x,y)$ and the symmetry of our codebook construction, we can invoke Theorem \ref{theorem strong markov} to verify that the conditional probability $Pr(E_2 | E_1^c)$ is arbitrarily small for large enough $n$.

{\em Error 3:  The decoder finds more than one eligible action sequence.}  Assume that $E_1$ and $E_2$ did not occur.  If the decoder considers the same index $j$ as the encoder selected, then certainly $u^n(j)$ will be be eligible, which is to say it will be $(8\epsilon)$-jointly typical with $(y^n,z^n)$, and the bin index will match the received message.  For all other sequences in the codebook ${\mathbb C}$, an appeal to the property of iterated expectation indicates that the probability of eligibility is slightly less than the a priori probability that a randomly generated sequence and bin number will yield eligibility (had you not known that it was not the sequence selected by the encoder), which is upper bounded by $2^{-nR}2^{-n(I(Y,Z;U) - \delta_2(\epsilon))}$.  Therefore, by the method of types and the union bound,
\begin{eqnarray*}
\Pr (E_3 | E_1^c, E_2^c) & \leq & 2^{nR_c}2^{-nR}2^{-n(I(Y,Z;U) - \delta_2(\epsilon))} \\
& = & 2^{-n(R - R_c + I(Y,Z;U) - \delta_2(\epsilon))} \\
& = & 2^{-n(I(Y,Z;U) - R_b - \delta_2(\epsilon))} \\
& = & 2^{-n(\gamma/2 - \delta_2(\epsilon))}.
\end{eqnarray*}

Thus we can select $\delta(\epsilon) = \max\{ 2 \delta_1(\epsilon), 2 \delta_2(\epsilon), 8 \epsilon \}$ to make all error terms go to zero and satisfy the lemma.
\end{proof}

With the result of Lemma \ref{lemma generic example} in mind, we can confidently talk about using communication to establish coordination of sequences across links in a network.  Throughout the following explanations we will no longer pay particular attention to the $\epsilon$ in the $\epsilon$-jointly typical set.  Instead, we will simply make reference to the generic jointly typical set, with the assumption that $\epsilon$ is sufficiently small and $n$ is sufficiently large.

\subsubsection{Two nodes - Theorem \ref{theorem two nodes}}

It is clear from Lemma \ref{lemma generic example} that an action sequence $Y^n$ jointly typical with $X^n$ can be specified with high probability using any rate $R > I(X;Y)$.  With high probability $X^n$ will be a typical sequence.  Apply Lemma \ref{lemma generic example} with $Y = Z = \emptyset$.

\subsubsection{Isolated node - Theorem \ref{theorem isolated}}

No proof is necessary, as this is a special case of the cascade network with $R_2 = 0$.

\subsubsection{Cascade - Theorem \ref{theorem cascade}}

The cascade network of Figure \ref{figure cascade} has a sequence $X^n$ given by nature.  The actions $X^n$ will be typical with high probability.  Consider the desired coordination $p(y,z|x)$.  A sequence $Z^n$ can be specified with rate $R_Z > I(X;Z)$ to be jointly typical with $X^n$.  This communication is sent to node Y and forwarded on to node Z.  Additionally, now that every node knows $Z^n$, a sequence $Y^n$ can be specified with rate $R_Y > I(X;Y|Z)$ and sent to node Y.  The rates used are
$R_1 = R_Y + R_Z > I(X;Y,Z)$ and $R_2 = R_Z > I(X;Z)$.
\begin{eqnarray*}
R_1 \; = \; R_Y + R_Z & > & I(X;Y,Z), \\
R_2 \; = \; R_Z & > & I(X;Z).
\end{eqnarray*}

\subsubsection{Degraded source - Theorem \ref{theorem degraded source}}

The degraded source network of Figure \ref{figure degraded source} has a sequence $X^n$ given by nature, known to node X, and another sequence $Y^n$, which is a letter-by-letter function of $X^n$, known to node Y.  Incidentally, $Y^n$ is also known to node X because it is a function of the available information.  The actions $X^n$ and $Y^n$ will be jointly typical with high probability.

Consider the desired coordination $p(z|x,y)$ and choose a distribution for the auxiliary random variable $p(u|x,y,z)$ to help achieve it.  The encoder first specifies a sequence $U^n$ that is jointly typical with $X^n$ and $Y^n$.  This requires a rate $R_U > I(X,Y;U) = I(X;U)$, but with binning we only need a rate of $R_1 > I(X;U|Y)$ to specify $U^n$ from node X to node Y.  Binning is not used when $U^n$ is forwarded to node Z.  Finally, after everyone knows $U^n$, the action sequence $Z^n$ jointly typical with $X^n$, $Y^n$, and $U^n$ is specified to node Z at a rate of $R_2 > I(X,Y;Z|U) = I(X;Z|U)$.  Thus, all rates are achievable which satisfy
\begin{eqnarray*}
R_1 & > & I(X;U|Y), \\
R_2 & > & I(X;Z|U), \\
R_3 \; = \; R_U & > & I(X;U).
\end{eqnarray*}

\subsubsection{Broadcast - Theorem \ref{theorem broadcast}}

The broadcast network of Figure \ref{figure broadcast} has a sequence $X^n$ given by nature, known to node X.  The action sequence $X^n$ will be typical with high probability.

Consider the desired coordination $p(y,z|x)$ and choose a distribution for the auxiliary random variable $p(u|x,y,z)$ to help achieve it.  We will focus on achieving one corner point of the pentagonal rate region.  The encoder first specifies a sequence $U^n$ that is jointly typical with $X^n$ using a rate $R_U > I(X;U)$.  This sequence is sent to both node Y and node Z.  After everyone knows $U^n$, the encoder specifies an action sequence $Y^n$ that is jointly typical with $X^n$ and $U^n$ using rate $R_Y > I(X;Y|U)$.  Finally, the encoder at node X, knowing both $X^n$ and $Y^n$, can specify an action sequence $Z^n$ that is jointly typical with $(X^n,Y^n,U^n)$ using a rate $R_Z > I(X,Y;Z|U)$.  This results in rates
\begin{eqnarray*}
R_1 \; = \; R_U + R_Y & > & I(X;U) + I(X;Y|U) \; = \; I(X;U,Y), \\
R_2 \; = \; R_U + R_X & > & I(X;U) + I(X,Y;Z|U).
\end{eqnarray*}

\subsubsection{Cascade multiterminal - Theorem \ref{theorem relay}}

The cascade multiterminal network of Figure \ref{figure relay} has a sequence $X^n$ given by nature, known to node X, and another sequence $Y^n$ given by nature, known to node Y.  The actions $X^n$ and $Y^n$ will be jointly typical with high probability.

Consider the desired coordination $p(z|x,y)$ and choose a distribution for the auxiliary random variables $U$ and $V$ according to the inner bound in Theorem \ref{theorem relay}.  That is, $p(x,y,z,u,v) = p(x,y)p(u,v|x)p(z|y,u,v)$.  We specify a sequence $U^n$ to be jointly typical with $X^n$.  By the Strong Markov Lemma (Theorem \ref{theorem strong markov}), in conjunction with the symmetry of our random coding scheme and the Markovity of the distribution $p(x,y)p(u|x)$, the sequence $U^n$ will be jointly typical with the pair $(X^n,Y^n)$ with high probability.  Using binning, we only need a rate of $R_{U,1} > I(X;U|Y)$ to specify $U^n$ from node X to node Y (as in Lemma \ref{lemma generic example}).  However, we cannot use binning for the message to node Z, so we send the index of the codework itself at a rate of $R_{U,2} > I(X;U)$.  Now that everyone knows the sequence $U^n$, it is treated as side information.

A second auxiliary sequence $V^n$ is specified from node X to node Y to be jointly typical with $(X^n,Y^n,U^n)$.  This scenario coincides exactly with Lemma \ref{lemma generic example}, and a sufficient rate is $R_V > I(X;V|U,Y)$.  Finally, an action sequence $Z^n$ is specified from node Y to node Z to be jointly typical with $(Y^n,V^n,U^n)$, where $U^n$ is side information known to the encoder and decoder.  We achieve this using a rate $R_Z > I(Y,V;Z|U)$.  Again, because of the symmetry of our encoding scheme, the Strong Markov Lemma (Theorem \ref{theorem strong markov}) tells us that $(X^n,Y^n,U^n,V^n,Z^n)$ will be jointly typical, and therefore, $(X^n,Y^n,Z^n)$ will be jointly typical.

The rates used by this scheme are
\begin{eqnarray*}
R_1 & = & R_{U,1} + R_V > I(X;U,V|Y), \\
R_2 & = & R_{U,2} + R_Z > I(X;U) + I(Y,V;Z|U).
\end{eqnarray*}

\subsection{Empirical Coordination - Converse (Sections \ref{section complete results}, \ref{section partial results})}

In proving outer bounds for the coordination capacity of various networks, a common {\em time mixing} trick is to make use of a random time variable $Q$ and then consider the value of a random sequence $X^n$ at the random time $Q$ using notation $X_Q$.  We first make this statement precise and discuss the implications of such a construction.

Considering a coordination code for a block length $n$.  We assign $Q$ to have a uniform distribution over the set $\{1,...,n\}$, independent of the action sequences in the network.  The variable $X_Q$ is simply a function of the sequence $X^n$ and the variable $Q$; namely, the variable $X_Q$ takes on the value of the $Q$th element in the sequence $X^n$.  Even though all sequences of actions and auxiliary variables in the network are independent of $Q$, the variable $X_Q$ need not be independent of $Q$.

Here we list a couple of key properties of time mixing.

{\em Property 1:  If all elements of a sequence $X^n$ are identically distributed, then $X_Q$ is independent of $Q$.  Furthermore, $X_Q$ has the same distribution as $X_1$.}  Verifying this property is easy when one considers the conditional distribution of $X_Q$ given $Q$.

{\em Property 2:  For a collection of random sequences $X^n$, $Y^n$, and $Z^n$, the expected joint type ${\bf E} P_{X^n,Y^n,Z^n}$ is equal to the joint distribution of the time-mixed variables $(X_Q,Y_Q,Z_Q)$.}

\begin{eqnarray*}
& {\bf E} & P_{X^n,Y^n,Z^n}(x,y,z) \\
& = & \sum_{x^n,y^n,z^n} p(x^n,y^n,z^n) P_{X^n,Y^n,Z^n}(x,y,z) \\
& = & \sum_{x^n,y^n,z^n} p(x^n,y^n,z^n) \frac{1}{n} \sum_{q=1}^n {\bf 1} ((x_q,y_q,z_q)=(x,y,z)) \\
& = & \frac{1}{n} \sum_{q=1}^n \sum_{x^n,y^n,z^n} p(x^n,y^n,z^n) {\bf 1} ((x_q,y_q,z_q)=(x,y,z)) \\
& = & \frac{1}{n} \sum_{q=1}^n p_{X_q,Y_q,Z_q}(x,y,z) \\
& = & \sum_{q=1}^n p_{X_Q,Y_Q,Z_Q|Q}(x,y,z|q)p(q) \\
& = & p_{X_Q,Y_Q,Z_Q}(x,y,z).
\end{eqnarray*}

\subsubsection{Two nodes - Theorem \ref{theorem two nodes}}

Assume that a rate-coordination pair $(R,p(y|x))$ is in the interior of the coordination capacity region ${\cal C}_{p_0}$ for the two-node network of Figure \ref{figure two nodes} with source distribution $p_0(x)$.  For a sequence of $(2^{nR},n)$ coordination codes that achieves $(R,p(y|x))$, consider the induced distribution on the action sequences.

Recall that $I$ is the message from node X to node Y.

\begin{eqnarray*}
n R & \geq & H(I) \\
& \geq & I(X^n;Y^n) \\
& = & \sum_{q=1}^n I(X_q;Y^n|X^{q-1}) \\
& = & \sum_{q=1}^n I(X_q;Y^n,X^{q-1}) \\
& \geq & \sum_{q=1}^n I(X_q;Y_q) \\
& = & n I(X_Q;Y_Q|Q) \\
& \stackrel{a}{=} & n I(X_Q;Y_Q,Q) \\
& \geq & n I(X_Q;Y_Q).
\end{eqnarray*}
Equality $a$ comes from Property 1 of time mixing.

We would like to be able to say that the joint distribution of $X_Q$ and $Y_Q$ is arbitrarily close to $p_0(x)p(y|x)$ for some $n$.  That way we could conclude, by continuity of the entropy function, that $R \geq I(X;Y)$.

The definition of achievability (Definition \ref{definition achievability}) states that
\begin{eqnarray*}
\left\| P_{X^n,Y^n,Z^n}(x,y,z)-p_0(x)p(y,z|x) \right\|_{TV} \longrightarrow 0 \mbox{ in probability}.
\end{eqnarray*}
Because total variation is bounded, this implies that
\begin{eqnarray*}
{\bf E} \left\| P_{X^n,Y^n,Z^n}(x,y,z)-p_0(x)p(y,z|x) \right\|_{TV} \longrightarrow 0.
\end{eqnarray*}
Furthermore, by the Jensen Inequality,
\begin{eqnarray*}
{\bf E} P_{X^n,Y^n,Z^n}(x,y,z) \longrightarrow p_0(x)p(y,z|x).
\end{eqnarray*}
Now Property 2 of time mixing allows us to conclude the argument for Theorem \ref{theorem two nodes}.

\subsubsection{Isolated node - Theorem \ref{theorem isolated}}

No proof is necessary, as this is a special case of the cascade network with $R_2 = 0$.

\subsubsection{Cascade - Theorem \ref{theorem cascade}}

For the cascade network of Figure \ref{figure cascade}, apply the bound from the two-node network twice---once to show that the rate $R_1 \geq I(X;Y,Z)$ is needed even if node Y and node Z are allowed to fully cooperate, and once to show that the rate $R_2 \geq I(X;Z)$ is needed even if node X and node Y are allowed to fully cooperate.

\subsubsection{Degraded source - Theorem \ref{theorem degraded source}}

Assume that a rate-coordination quadruple $(R_1,R_2,R_3,p(z|x,y))$ is in the interior of the coordination capacity region ${\cal C}_{p_0}$ for the degraded source network of Figure \ref{figure degraded source} with source distribution $p_0(x)$ and the degraded relationship $Y_i = f_0(x_i)$.  For a sequence of $(2^{nR_1},2^{nR_2},2^{nR_3},n)$ coordination codes that achieves $(R_1,R_2,R_3,p(z|x,y))$, consider the induced distribution on the action sequences.

Recall that the message from node X to node Y at rate $R_1$ is labeled $I$, the message from node X to node Z at rate $R_2$ is labeled $J$, and the message from node Y to node Z at rate $R_3$ is labeled $K$.  We identify the auxiliary random variable $U$ as the collection of random variables $(K,X^{Q-1},Q)$.

\begin{eqnarray*}
n R_1 & \geq & H(I) \\
& \geq & H(I|Y^n) \\
& \stackrel{a}{=} & H(I,K|Y^n) \\
& \geq & H(K|Y^n) \\
& = & I(X^n;K|Y^n) \\
& = & \sum_{q=1}^n I(X_q;K|Y^n,X^{q-1}) \\
& = & \sum_{q=1}^n I(X_q;K,X^{q-1},Y^{q-1},Y_{q+1}^n|Y_q) \\
& \geq & \sum_{q=1}^n I(X_q;K,X^{q-1}|Y_q) \\
& = & n I(X_Q;K,X^{Q-1}|Y_Q,Q) \\
& \stackrel{b}{=} & n I(X_Q;K,X^{Q-1},Q|Y_Q) \\
& = & n I(X_Q;U|Y_Q).
\end{eqnarray*}
Equality $a$ is justified because the message $K$ is a function of the message $I$ and the sequence $Y^n$.  Equality $b$ comes from Property 1 of time mixing.

\begin{eqnarray*}
n R_2 & \geq & H(J) \\
& \geq & H(J|K) \\
& \stackrel{a}{=} & H(J,Z^n|K) \\
& \geq & H(Z^n|K) \\
& = & I(X^n;Z^n|K) \\
& \geq & \sum_{q=1}^n I(X_q;Z^n|K,X^{q-1}) \\
& \geq & \sum_{q=1}^n I(X_q;Z_q|K,X^{q-1}) \\
& = & n I(X_Q;Z_Q|K,X^{Q-1},Q) \\
& = & n I(X_Q;Z_Q|U).
\end{eqnarray*}
Equality $a$ is justified because the action sequence $Z^n$ is a function of the messages $J$ and $K$.  Equality $b$ comes from Property 1 of time mixing.

\begin{eqnarray*}
n R_3 & \geq & H(K) \\
& = & I(X^n;K) \\
& = & \sum_{q=1}^n I(X_q;K|X^{q-1}) \\
& = & n I(X_Q;K|X^{Q-1},Q) \\
& \stackrel{a}{=} & n I(X_Q;K,X^{Q-1},Q) \\
& = & n I(X_Q;U).
\end{eqnarray*}
Equality $a$ comes from Property 1 of time mixing.

As seen in the proof for the two-node network, the joint distribution of $X_Q$, $Y_Q$, and $Z_Q$ is arbitrarily close to $p_0(x){\bf 1}(y = f_0(x))p(z|x,y)$.  Therefore, since ${\cal C}_{p_0}$ is a closed set, $(R_1,R_2,R_3,p(z|x,y))$ is in the coordination capacity region stated in Theorem \ref{theorem degraded source}.

It remains to bound the cardinality of $U$.  We can use the standard method rooted in the support lemma of \cite{Csiszar81}.  The variable $U$ should have $|{\cal X}||{\cal Z}| - 1$ elements to preserve the joint distribution $p(x,z)$, which in turn preserves $p(x,y,z)$, $H(X)$, and $H(X|Y)$, and three more elements to preserve $H(X|U)$, $H(X|Y,U)$, and $H(X|Z,U)$.

\subsubsection{Broadcast - Theorem \ref{theorem broadcast}}

For the broadcast network of Figure \ref{figure broadcast}, apply the bound from the two-node network three times---once to show that the rate $R_1 \geq I(X;Y)$ is needed and once to show that the rate $R_2 \geq I(X;Z)$ is needed, and finally a third time to show that the sum-rate $R_1 + R_2 = I(X;Y,Z)$ is needed even if node Y and node Z are allowed to fully cooperate.

\subsubsection{Cascade multiterminal - Theorem \ref{theorem relay}}

Assume that a rate-coordination triple $(R_1,R_2,p(z|x,y))$ is in the interior of the coordination capacity region ${\cal C}_{p_0}$ for the cascade multiterminal network of Figure \ref{figure relay} with source distribution $p_0(x,y)$.  For a sequence of $(2^{nR_1},2^{nR_2},n)$ coordination codes that achieves $(R_1,R_2,p(z|x,y))$, consider the induced distribution on the action sequences.

Recall that the message from node X to node Y at rate $R_1$ is labeled $I$, and the message from node Y to node Z at rate $R_2$ is labeled $J$.  We identify the auxiliary random variable $U$ as the collection of random variables $(I,X^{Q-1},Y^{Q-1},Y_{Q+1}^n,Q)$.  This is the same choice of auxiliary variable used by Wyner and Ziv \cite{WynerZiv76}.  Notice that $U$ satisfies the Markov chain properties $U - X_Q - Y_Q$ and $X_Q - (Y_Q,U) - Z_Q$

\begin{eqnarray*}
n R_1 & \geq & H(I) \\
& \geq & H(I|Y^n) \\
& = & I(X^n;I|Y^n) \\
& = & \sum_{q=1}^n I(X_q;I|Y^n,X^{q-1}) \\
& = & \sum_{q=1}^n I(X_q;I,X^{q-1},Y^{q-1},Y_{q+1}^n|Y_q) \\
& = & n I(X_Q;I,X^{Q-1},Y^{Q-1},Y_{Q+1}^n|Y_Q,Q) \\
& \stackrel{a}{=} & n I(X_Q;I,X^{Q-1},Y^{Q-1},Y_{Q+1}^n,Q|Y_Q) \\
& = & n I(X_Q;U|Y_Q).
\end{eqnarray*}
Equality $a$ comes from Property 1 of time mixing.

\begin{eqnarray*}
n R_2 & \geq & H(J) \\
& \geq & I(X^n,Y^n;Z^n) \\
& = & \sum_{q=1}^n I(X_q,Y_q;Z_q|X^{q-1},Y^{q-1}) \\
& = & \sum_{q=1}^n I(X_q,Y_q;Z_q,X^{q-1},Y^{q-1}) \\
& \geq & \sum_{q=1}^n I(X_q,Y_q;Z_q) \\
& = & n I(X_Q,Y_Q;Z_Q|Q) \\
& \stackrel{a}{=} & n I(X_Q,Y_Q;Z_Q,Q) \\
& \geq & I(X_Q,Y_Q;Z_Q).
\end{eqnarray*}
Equality $a$ comes from Property 1 of time mixing.

As seen in the proof for the two-node network, the joint distribution of $X_Q$, $Y_Q$, and $Z_Q$ is arbitrarily close to $p_0(x,y)p(z|x,y)$.  Therefore, since ${\cal C}_{p_0}$ is a closed set, $(R_1,R_2,p(z|x,y))$ is in the coordination capacity region stated in Theorem \ref{theorem relay}.

It remains to bound the cardinality of $U$.  We can again use the standard method of \cite{Csiszar81}.  Notice that $p(x,y,z|u) = p(x|u)p(y|x)p(z|y,u)$ captures all of the Markovity constraints of the outer bound.  Therefore, convex mixtures of distributions of this form are valid for achieving points in the outer bound.  The variable $U$ should have $|{\cal X}||{\cal Y}||{\cal Z}| - 1$ elements to preserve the joint distribution $p(x,y,z)$, which in turn preserves $I(X,Y;Z)$ and $H(X|Y)$, and one more element to preserve $H(X|Y,U)$.

\subsection{Strong Coordination (Section \ref{section strong coordination})}

\subsubsection{No communication - Theorem \ref{theorem no communication strong}}

The network of Figure \ref{figure no communication strong} with no communication generalizes Wyner's common information work \cite{wyner} to three nodes.  Here we provide a sketch of the proof.

The following phenomenon was noticed both by Wyner \cite{wyner} and by Han and Verd\'{u} \cite{han}.  Consider a memoryless channel $p(x|u)$.  A channel input with distribution $p(u)$ induces an output with distribution $p(x) = \sum_u p(u)p(x|u)$.  If the inputs are i.i.d. then the outputs are i.i.d. as well.  Now suppose that instead a channel input sequence $U^n$ is chosen uniformly at random from a set ${\cal M}$ of $2^{nR}$ deterministic sequences.  If $R > I(X;U)$ then the set ${\cal M}$ can be chosen so that the output distribution is arbitrarily close in total variation to the i.i.d. distribution $\prod_{i=1}^n p(x_i)$ for large enough $n$.

Figure \ref{figure no communication strong achievability} illustrates how to achieve the strong coordination capacity region $\underline{\cal C}$ of Theorem \ref{theorem no communication strong}.  Let each decoder simulate a memoryless channel from $U$ to $X$, $Y$, or $Z$, depending on the particular node.  The common randomness $\omega$ is used to index a sequence $U^n(\omega)$ that is used as the inputs to the channels.  Notice that the action sequences $X^n$, $Y^n$, and $Z^n$ produced via these three separate channels are distributed the same as if they were generated as outputs of a single channel because $p(x,y,z|u) = p(x|u)p(y|u)p(z|u)$ according to the definition of $\underline{\cal C}$ in the theorem.  Since $R > I(X,Y,Z;U)$ for points in the interior of $\underline{\cal C}$, this scheme will achieve strong coordination.

\begin{figure}[h]
\psfrag{l1}[][][.9]{$\omega$}
\psfrag{l2}[][][.9]{\;\;\;\;\;\; Common randomness}
\psfrag{l3}[][][.8]{Node X}
\psfrag{l4}[][][.8]{Node Y}
\psfrag{l5}[][][.8]{Node Z}
\psfrag{l6}[][][.8]{$p(x|u)$}
\psfrag{l7}[][][.8]{$p(y|u)$}
\psfrag{l8}[][][.8]{$p(z|u)$}
\psfrag{l9}[][][.8]{$U^n(\omega)$}
\psfrag{l10}[][][.8]{$U^n(\omega)$}
\psfrag{l11}[][][.8]{$U^n(\omega)$}
\psfrag{l12}[][][.8]{$X^n$}
\psfrag{l13}[][][.8]{$Y^n$}
\psfrag{l14}[][][.8]{$Z^n$}
\centerline{\includegraphics[width=.4\textwidth]{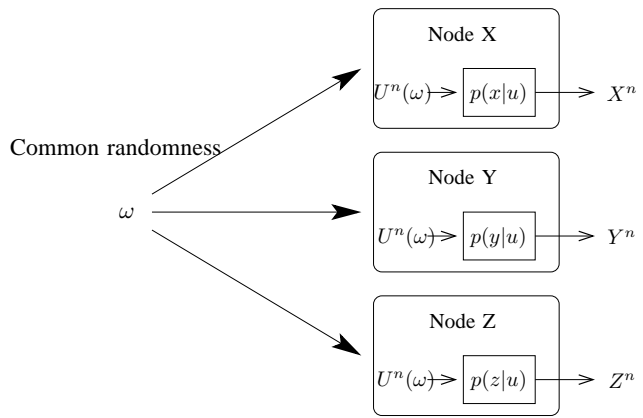}}
\caption{{\em Achievability for no-communication network.}  The strong coordination capacity region $\protect\underline{\cal C}$ of Theorem \ref{theorem no communication strong} is achieved in a network with no communication by using the common randomness to specify a sequence $U^n(\omega)$ that is then passed through a memoryless channel at each node using private randomness.}
\label{figure no communication strong achievability}
\end{figure}

For the converse, identify the auxiliary variable $U$ as $\omega$ and notice that $X_q$, $Y_q$, and $Z_q$ are conditionally independent (for all $q$) given $\omega$.
\begin{eqnarray*}
n R & \geq & H(\omega) \\
& \geq & I(X^n,Y^n,Z^n;\omega) \\
& \geq & I(X^n,Y^n,Z^n;U).
\end{eqnarray*}
Since $X^n$, $Y^n$, and $Z^n$ have a joint distribution close in total variation to the i.i.d. distribution $\prod_{i=1}^n p(x_i,y_i,z_i)$, it can be shown that they can essentially be treated as i.i.d. sequences in the mutual information bounds (see \cite{cuff08}).  If they were i.i.d. we would have
\begin{eqnarray*}
& & I(X^n,Y^n,Z^n;U) \\
& = & \sum_{q=1}^n I(X_q,Y_q,Z_q;U|X^{q-1},Y^{q-1},Z^{q-1}) \\
& = & \sum_{q=1}^n I(X_q,Y_q,Z_q;U,X^{q-1},Y^{q-1},Z^{q-1}) \\
& \geq & \sum_{q=1}^n I(X_q,Y_q,Z_q;U) \\
& \geq & n \min_{\tilde{U}} I(X,Y,Z;\tilde{U}),
\end{eqnarray*}
where the minimization is over all eligible auxiliary $\tilde{U}$ that separate $X$, $Y$, and $Z$ into conditional independence.

It remains to bound the cardinality of $U$.  We can again use the standard method of \cite{Csiszar81}.  The variable $U$ should have $|{\cal X}||{\cal Y}||{\cal Z}| - 1$ elements to preserve the joint distribution $p(x,y,z)$, which in turn preserves $H(X,Y,Z)$, and one more element to preserve $H(X,Y,Z|U)$.

\subsubsection{Two nodes - Theorem \ref{theorem two nodes strong}}

The strong coordination capacity region for the two-node network of Figure \ref{figure two nodes strong} is the main result of \cite{cuff08}:
\begin{eqnarray}
\label{equation two nodes strong capacity}
\underline{\cal C}_{p_0} & = & \left\{
\begin{array}{l}
p(y|x) \; : \; \exists p(u|x,y) \mbox{ such that} \\
p(x,y,u) = p(u)p(x|u)p(y|u) \\
|{\cal U}| \leq |{\cal X}||{\cal Y}| + 1, \\
R \geq I(X;U), \\
R_0 + R \geq I(X,Y;U).
\end{array}
\right\},
\end{eqnarray}
where $R_0$ refers to the rate of common randomness, and $R$ refers to the communication rate.

In the case of no common randomness ($R_0 = 0$), the stronger inequality in (\ref{equation two nodes strong capacity}) on the rate $R$ become the second, $R \geq I(X,Y;U)$.  Because of the Markov constraint on $U$, the minimum value of the right-hand side of this inequality is Wyner's common information $C(X;Y)$.

Additionally, Theorem \ref{theorem two nodes strong} states that if $R_0$ is greater than the necessary conditional entropy $H(Y \dag X)$ then rates $R > I(X;Y)$ are sufficient for achieving strong coordination.  This is a straightforward application of the definition of $H(Y \dag X)$.  We can verify this with the following choice of $U$:
\begin{eqnarray*}
U & = & \argmin_{f(Y) \; : \; X - f(Y) - Y} H(f(Y)|X).
\end{eqnarray*}
Notice that this choice of $U$ separates $X$ and $Y$ into a Markov chain by definition.  Also, the mutual information $I(X;U)$ is less than or equal to $I(X;Y)$, since $U$ is a function of $Y$, thus satisfying the first rate inequality in (\ref{equation two nodes strong capacity}).  The second inequality is satisfied because of the chain rule,
\begin{eqnarray*}
I(X,Y;U) & = & I(X;U) + I(Y;U|X) \\
& = & I(X;U) + H(Y \dag X) \\
& \leq & I(X;Y) + H(Y \dag X).
\end{eqnarray*}

Furthermore, we can show that this is the least amount of common randomness needed to fully expand the strong coordination capacity region.  In other words, the minimum $R_0$ such that $(R_0,I(X;Y))$ is in the strong rate-coordination region $\underline{\cal R}_{p_0}p(y|x)$ is $H(Y \dag X)$.

To prove this, first consider the implications of $R = I(X;Y)$.  This means that in order to satisfy the first rate inequality in (\ref{equation two nodes strong capacity}), we must have $I(X;U) \leq I(X;Y)$.  However, because of the Markovity, $I(X;U) = I(X;U,Y)$.  Therefore, $I(X;U|Y) = 0$, which implies a second Markov condition $X - Y - U$ in addition to $X - U - Y$.

We are concerned with minimizing the required rate of common randomness $R_0$.  Since $R = I(X;Y)$, the second rate inequality in (\ref{equation two nodes strong capacity}) becomes $R_0 \geq I(Y;U|X)$.  The conditional entropy $H(Y|X)$ is fixed, so we want to maximize the conditional entropy $H(Y|U,X)$.

With the distribution $p(x|y)$ in mind, we can clump values of $Y$ together for which the channel from $Y$ to $X$ is identical.  Define a function $f$ with the property that
\begin{eqnarray}
\label{equation same conditional}
f(y) = f(\tilde{y}) & \Longleftrightarrow & p(x|y) = p(x|\tilde{y}) \mbox{ for } \forall x \in {\cal X}.
\end{eqnarray}
Letting $U = f(Y)$ will be the choice of $U$ that simultaneously maximizes $H(Y|U,X)$ and satisfies the Markov conditions $X - U - Y$ and $X - Y - U$.  We can compare $U$ to any other choice $\tilde{U}$ that satisfies the conditions and show that the resulting conditional entropy $H(Y|\tilde{U},X)$ is smaller.

Another way to state the two Markov conditions is that for all values of $y$ and $\tilde{u}$ such that $p(y,\tilde{u}) > 0$, the conditional distributions $p(x|y)$ and $p(x|\tilde{u})$ are equal because $p(x|y) = p(x|y,\tilde{u}) = p(x|\tilde{u})$.  Notice that the value of $U = f(Y)$, characterized in (\ref{equation same conditional}), only depends on the channel $p(x|y)$.  However, with probability one the value of $U$ can be determined from $\tilde{U}$ based on the conditional distribution $p(x|\tilde{u})$.  Therefore,
\begin{eqnarray*}
H(Y|\tilde{U},X) & = & H(Y,U|\tilde{U},X) \\
& = & H(Y|U,\tilde{U},X) + I(U|\tilde{U},X) \\
& = & H(Y|U,\tilde{U},X) \\
& \leq & H(Y|U,X).
\end{eqnarray*}

\subsection{Rate-distortion theory (Sections \ref{section rate distortion})}

We establish the relationship from Theorem \ref{theorem rate distortion} between the coordination capacity region and the rate-distortion region in two parts.  First we show that ${\cal D}_{p_0}$ contains $A {\cal C}_{p_0}$ and then the other way around.  To keep clutter to a minimum and without loss of generality, we only discuss a single distortion measure $d$, rate $R$, and a pair of sequences of actions $X^n$ and $Y^n$.

\subsubsection{Coordination implies distortion (${\cal D}_{p_0} \supset A {\cal C}_{p_0}$)}

The distortion incurred with respect to a distortion function $d$ on a set of sequences of actions is a function of the joint type of the sequences.  That is,
\begin{eqnarray}
\label{equation distortion type}
d^{(n)}(x^n,y^n) & = & \frac{1}{n} \sum_{i=1}^n d(x_i,y_i) \nonumber \\
& = & \frac{1}{n} \sum_{i=1}^n \sum_{x,y} {\bf 1}(x_i = x,y_i = y) d(x,y) \nonumber \\
& = & \sum_{x,y} d(x,y) \frac{1}{n} \sum_{i=1}^n {\bf 1}(x_i = x,y_i = y) \nonumber \\
& = & \sum_{x,y} d(x,y) P_{x^n,y^n}(x,y) \nonumber \\
& = & {\bf E}_{P_{x^n,y^n}} d(X,Y).
\end{eqnarray}

When a rate-coordination tuple $(R,p(x,y))$ is in the interior of the coordination capacity region ${\cal C}_{p_0}$, we are assured the existence of a coordination code for any $\epsilon > 0$ for which
\begin{eqnarray*}
\Pr (\| P_{X^n,Y^n} - p \|_{TV} > \epsilon) & < & \epsilon.
\end{eqnarray*}
Therefore, with probability greater that $1 - \epsilon$,
\begin{eqnarray*}
{\bf E}_{P_{X^n,Y^n}} d(X,Y) & \leq & {\bf E}_{p} d(X,Y) + \epsilon d_{max}.
\end{eqnarray*}
Recalling (\ref{equation distortion type}) yields,
\begin{eqnarray*}
{\bf E} d^{(n)}(x^n,y^n) & \leq & {\bf E}_{p} d(X,Y) + 2\epsilon d_{max}.
\end{eqnarray*}
As expected, a sequence of $(2^{nR},n)$ coordination codes that achieves empirical coordination for the joint distribution $p(x,y)$ also achieves the point in the rate-distortion region with the same rate and with distortion value ${\bf E}_p d(X,Y)$.

\subsubsection{Distortion implies coordination (${\cal D}_{p_0} \subset A {\cal C}_{p_0}$)}

Suppose that a $(2^{nR},n)$ rate-distortion codes achieves distortion ${\bf E} d^{(n)}(X^n,Y^n) \leq D$.  Substituting from (\ref{equation distortion type}),
\begin{eqnarray*}
{\bf E} \left[ {\bf E}_{P_{X^n,Y^n}} d(X,Y) \right] & \leq & D.
\end{eqnarray*}
However,
\begin{eqnarray*}
{\bf E} \left[ {\bf E}_{P_{X^n,Y^n}} d(X,Y) \right] & = & {\bf E}_{{\bf E} P_{X^n,Y^n}} d(X,Y)
\end{eqnarray*}
by linearity.

We can achieve the rate-coordination pair $(R, {\bf E} P_{X^n,Y^n})$ by augmenting the rate-distortion code.  If we repeat the use of the rate-distortion code over $k$ blocks of length $n$ each, then we induce a joint distribution on $(X^{kn},Y^{kn})$ that consists of i.i.d. sub-blocks $(X^n,Y^n),...,(X_{kn-n+1}^{kn},Y_{kn-n+1}^{kn})$ denoted as $(X^{(1)n},Y^{(1)n}),...,(X^{(k)n},Y^{(k)n})$.

By the weak law of large number,
\begin{eqnarray*}
P_{X^{kn},Y^{kn}} & = & \frac{1}{k} \sum_{i=1}^k P_{X^{(i)n},Y^{(i)n}} \\
& \longrightarrow & {\bf E} P_{X^n,Y^n} \mbox{ in probability}.
\end{eqnarray*}
Point-wise convergence in probability implies that as $k$ grows
\begin{eqnarray*}
\left\| P_{X^{kn},Y^{kn}} - {\bf E} P_{X^n,Y^n} \right\|_{TV} & \longrightarrow & 0 \mbox{ in probability}.
\end{eqnarray*}

Thus, for any point $(R,D)$ in the rate-distortion region we have identified an associated point $(R, {\bf E} P_{X^n,Y^n})$ in the coordination-capacity region.  Indeed, the rate-distortion region is a linear projection of the coordination-capacity region.

\section{Remarks}
\label{section conclusion}

Rather than inquire about the possibility of moving data in a network, we have asked for the set of all achievable joint distribution on actions at the nodes.  For some three-node networks we have fully characterized the answer to this question, while for others we have established bounds.

Some of the results discussed in this work extend nicely to larger networks.  Consider for example an extended cascade network shown in Figure \ref{figure cascade large}, where $X$ is given randomly by nature and $Y_1$ through $Y_{k-1}$ are actions based on a cascade of communication.  Just as in the cascade network of Section \ref{subsection cascade}, we can achieve rates $R_i \geq I(X;Y_i,...,Y_k)$ for empirical coordination by sending messages to the last nodes in the chain first and conditioning later messages on earlier ones.  These rates meet the cut-set bound.  We now can make an interesting observation about assigning unique tasks to nodes in such a network.  Suppose $k$ tasks are to be completed by the $k$ nodes in this cascade network, one at each node.  Node X is assigned a task randomly, and the communication in the network is used to assign a permutation of all the tasks to the nodes in the network.  The necessary rates in the network are $R_i \geq \log(\frac{k}{i})$.  The sum of all the rates in the network, for large $k$, is then approximately $R_{total} \geq k$ nats, where $k$ is the number of tasks and nodes in the network.

\begin{figure}[h]
\psfrag{l1}[][][.8]{$R_1$}
\psfrag{l2}[][][.8]{$R_2$}
\psfrag{l3}[][][.8]{$R_{k-1}$}
\psfrag{l4}[][][.8]{$X \sim p_0(x)$}
\psfrag{l5}[][][.8]{$Y_1$}
\psfrag{l6}[][][.8]{$Y_2$}
\psfrag{l7}[][][.8]{$Y_{k-2}$}
\psfrag{l8}[][][.8]{$Y_{k-1}$}
\centerline{\includegraphics[width=.5\textwidth]{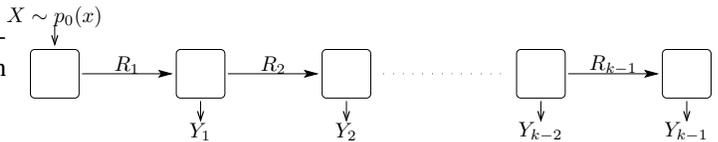}}
\caption{{\em Extended cascade network.}  This is an extension of the cascade network of Section \ref{subsection cascade}.  Action $X$ is given randomly by nature according to $p_0(x)$, and a cascade of communication is used to produce actions $Y_1$ through $Y_{k-1}$.  The coordination capacity region contains all rate-coordination tuples that satisfy $R_i \geq I(X;Y_i,...,Y_k)$ for all $i$.  In particular, the sum rate needed to assign a permutation of $k$ tasks to the $k$ nodes grows linearly with the number of nodes.}
\label{figure cascade large}
\end{figure}

Now consider the same task assignment scenario for an extended broadcast network shown in Figure \ref{figure broadcast large}.  Here again $X$ is given randomly by nature, but $Y_1$ through $Y_{k-1}$ are actions based on individual messages sent to each of the nodes.  Again, we want to assign a permutation of all the $k$ tasks to all of the $k$ nodes.  We can use ideas from the broadcast network results of Section \ref{subsection broadcast}.  For example, let us assign default tasks to the nodes so that $Y_1 = 1,...,Y_{k-1} = k-1$ unless told otherwise.  Now the communication is simply used to tell each node when it must choose task $k$ rather than the default task, which will happen about one time out of $k$.  The rates needed for this scheme are $R_i \geq H(1/k)$, where $H$ is the binary entropy function.  For large $k$, the sum of all the rates in the network is approximately $R_{total} \geq \ln k + 1$ nats.  The cut-set bound gives us a lower bound on the sum rate of $R_{total} \geq \ln k$ nats.  Therefore, we can conclude that the optimal sum rate scales with the logarithm of the number of nodes in the network.

\begin{figure}[h]
\psfrag{l1}[][][.8]{$\;\;\;\;\;\;\; X \sim p_0(x)$}
\psfrag{l2}[][][.8]{$Y_1$}
\psfrag{l3}[][][.8]{$Y_2$}
\psfrag{l4}[][][.8]{$Y_3$}
\psfrag{l5}[][][.8]{$Y_4$}
\psfrag{l6}[][][.8]{$Y_{k-2}$}
\psfrag{l7}[][][.8]{$Y_{k-1}$}
\psfrag{l8}[][][.8]{}
\psfrag{l9}[][][.8]{}
\psfrag{l10}[][][.8]{}
\psfrag{l11}[][][.8]{}
\psfrag{l12}[][][.8]{}
\psfrag{l13}[][][.8]{}
\centerline{\includegraphics[width=.3\textwidth]{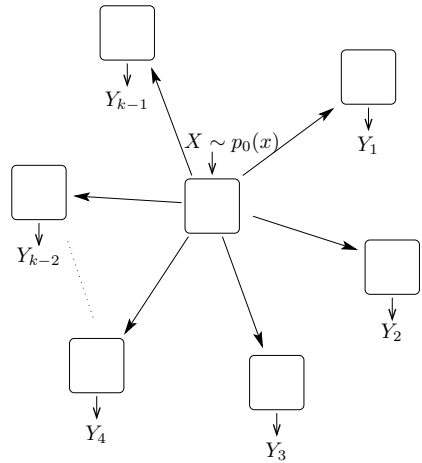}}
\caption{{\em Extended broadcast network.}  This is an extension of the broadcast network of Section \ref{subsection broadcast}.  Action $X$ is given randomly by nature according to $p_0(x)$, and each peripheral node produces an action $Y_i$ based on an individual message at rate $R_i$.  Bounds on the coordination capacity region show that the sum rate needed to assign a permutation of $k$ tasks to the $k$ nodes grows logarithmically with the number of nodes.}
\label{figure broadcast large}
\end{figure}

Even without explicitly knowing the coordination capacity region for the broadcast network, we are able to use bounds to establish the scaling laws for the total rate needed to assign tasks uniquely, and we can compare the efficiency of the broadcast network (logarithmic in the network size) with that of the cascade network (linear in the network size) for this kind of coordination.

We would also like to understand the coordination capacity region for a noisy network.  For example, the communication capacity region for the broadcast channel $p(\tilde{y}_1, \tilde{y}_2 | \tilde{x})$ of Figure \ref{figure broadcast channel} has undergone serious investigation.  The standard question is, how many bits of independent information can be communicated from $X$ to $Y_1$ and from $X$ to $Y_2$.  We know the answer if the broadcast channel is degraded; that is, if $Y_2$ can be viewed as a noisy version of $Y_1$.  We also know the answer if the channel can be separated into two orthogonal channels or is deterministic.  But what if instead we are trying to coordinate actions via the broadcast channel, similar to the broadcast network of Section \ref{subsection broadcast}?  Now we care about the dependence between $Y_1$ and $Y_2$.  The broadcast channel will impose a natural dependence between the channel outputs $\tilde{Y}_1$ and $\tilde{Y}_2$ that we abolish if we try to send independent information to the two nodes.  After all, the communication capacity region for the broadcast channel depends only on the marginals  $p(\tilde{y}_1 | \tilde{x})$ and $p(\tilde{y}_2 | \tilde{x})$.  Here we are wasting a valuable resource---the natural conditional dependence between $\tilde{Y}_1$ and $\tilde{Y}_2$ given $\tilde{X}$.

\begin{figure}[h]
\psfrag{X}[][][.8]{$X$}
\psfrag{Y2}[][][.8]{$Y_2$}
\psfrag{Y1}[][][.8]{$Y_1$}
\psfrag{A}[][][.8]{$\tilde X$}
\psfrag{C}[][][.8]{$\tilde Y_2$}
\psfrag{B}[][][.8]{$\tilde Y_1$}
\psfrag{Encoder}[][][.8]{Encoder}
\psfrag{Decoder1}[][][.8]{Decoder}
\psfrag{Decoder2}[][][.8]{Decoder}
\psfrag{Channel}[][][.8]{Channel}
\psfrag{Dist3}[][][.8]{$p(\tilde y_1,\tilde y_2|\tilde x)$}
\psfrag{Dist}[][][.8]{$p_0(x)$}
\psfrag{Dist2}[][][.8]{}%${\cal P}=\{p(x,y_1,y_2)\}$}
\centerline{\includegraphics[width=.5\textwidth]{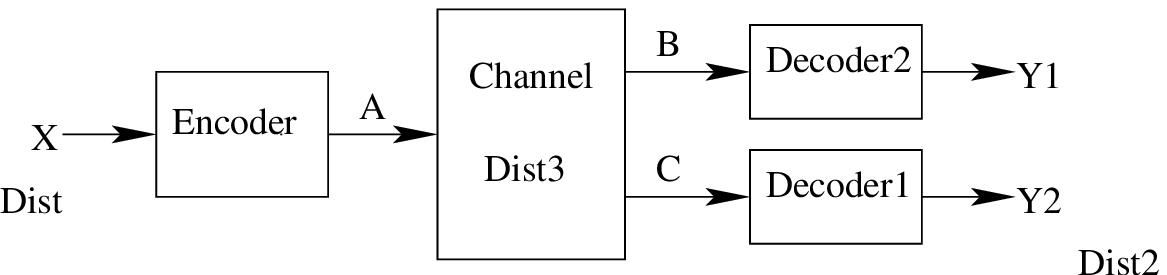}}
\caption{{\em Broadcast channel.}  When a noisy channel is used to coordinate joint actions $(X,Y_1,Y_2)$, what is the resulting coordination capacity region?  The broadcast network of Section \ref{subsection broadcast} is a noiseless special case.}
\label{figure broadcast channel}
\end{figure}

Again, we are enlarging the focus from communication of independent information to the creation of coordinated actions.  This larger question may force a simpler solution and illuminate the problem of independent information (the standard channel capacity formulation) as a special case.  Presumably, information is being communicated for a reason---so future cooperative behavior can be achieved.

\section{Final Remarks}
\label{section final remarks}

At first it seems that the nodes in a network can cooperate arbitrarily without communication.  Prior arrangement achieves that.  Also common randomness achieves it.

But the problem changes dramatically when some of the nodes take actions specified by nature.  Now some communication to the remaining nodes becomes necessary to establish the desired dependence.

We have established the rate-dependence tradeoff for cascade networks and isolated node networks found in Section \ref{section complete results}.  The broadcast network of Figure \ref{figure broadcast} remains elusive, perhaps for the same reason that the broadcast channel is difficult.

\bibliographystyle{unsrt}

\newpage

\begin{biographynophoto}{Paul Cuff}
(S'08-M'10) received the B.S. degree in electrical engineering from Brigham Young University in 2004 and the M.S and Ph.D. degrees in electrical engineering from Stanford University in 2006 and 2009.  He was awarded the ISIT 2008 Student Paper Award for his work titled "Communication Requirements for Generating Correlated Random Variables" and was a recipient of the National Defense Science and Engineering Graduate Fellowship and the Numerical Technologies Fellowship.

Dr. Cuff is an Assistant Professor of Electrical Engineering at Princeton University.
\end{biographynophoto}

\begin{biographynophoto}{Haim Permuter}
(M'08) received his B.Sc. (summa cum laude) and M.Sc. (summa cum laude) degree in Electrical and Computer Engineering from the Ben-Gurion University, Israel, in 1997 and 2003, respectively, and Ph.D. degrees in Electrical  Engineering from Stanford University, California in 2008. Between 1997 and 2004, he was an officer at a research and development unit of the Israeli Defense Forces. He is currently a Lecturer at Ben-Gurion university. He is a recipient of the Fullbright Fellowship, the Stanford Graduate Fellowship (SGF), Allon Fellowship, and Bergman award.
\end{biographynophoto}

\begin{biographynophoto}{Thomas M. Cover,}
the K.T. Li Professor of Electrical Engineering and Professor of Statistics at Stanford, does research in information theory, communication theory and statistics, and is the coauthor of the textbook, Elements of Information Theory. He was Lab Director of the Information Systems Laboratory in Electrical Engineering from 1989 to 1996. He has been the contract statistician for the California State Lottery and a consultant to AT\&T Laboratories and IBM. He received the 1990 Claude E. Shannon Award in information theory and has also received the IEEE Neural Network Council's Pioneer Award in 1993 for his work on the capacity of neural nets. He received the 1997 IEEE Richard M. Hamming medal for contributions to information, communication theory and statistics and is a member of the National Academy of Engineering and the American Academy of Arts and Sciences. He is currently working on network information theory and the interplay between information theory and investment.
\end{biographynophoto}

\vfill

\end{document}